\newif\ifconference
\def\BibTeX{{\rm B\kern-.05em{\sc i\kern-.025em b}\kern-.08emT\kern-.1667em\lower.7ex\hbox{E}\kern-.125emX}}
\begin{document}

\title{Cilkmem: Algorithms for Analyzing the Memory \\High-Water Mark of
Fork-Join Parallel Programs\thanks{MIT Computer Science and Artificial Intelligence Laboratory. Supported by NSF Grants CCF 1314547 and CCF 1533644. William Kuszmaul is supported by a  Fannie \& John Hertz Foundation Fellowship; and by a NSF GRFP Fellowship.}}
\author{Tim Kaler \and William Kuszmaul \and Tao B. Schardl \and Daniele Vettorel}
\date{}


\ifconference
\fancyfoot[R]{\scriptsize{Copyright \textcopyright\ 2020 by SIAM\\
Unauthorized reproduction of this article is prohibited}}
\fi

\renewcommand{\subheading}[1]{\vspace{.2 cm} \noindent\textbf{#1}. }

\maketitle

\begin{abstract}\small\baselineskip=9pt

Software engineers designing recursive fork-join programs destined to run on
massively parallel computing systems must be cognizant of how their program's
memory requirements scale in a many-processor execution. Although tools exist
for measuring memory usage during one particular execution of a parallel
program, such tools cannot bound the worst-case memory usage over all possible
parallel executions. 

This paper introduces Cilkmem, a tool that analyzes the execution of a
deterministic Cilk program to determine its $p$-processor memory
high-water mark (MHWM), which is the worst-case memory usage of the
program over \emph{all possible} $p$-processor executions. Cilkmem
employs two new algorithms for computing the $p$-processor MHWM\@. The
first algorithm calculates the exact $p$-processor MHWM in
$O(T_1 \cdot p)$ time, where $T_1$ is the total work of the program.
The second algorithm solves, in $O(T_1)$ time, the approximate
threshold problem, which asks, for a given memory threshold $M$,
whether the $p$-processor MHWM exceeds $M/2$ or whether it is
guaranteed to be less than~$M$. Both algorithms are memory efficient,
requiring $O(p \cdot D)$ and $O(D)$ space, respectively, where $D$ is
the maximum call-stack depth of the program's execution on a single
thread.

Our empirical studies show that Cilkmem generally exhibits low overheads.
Across ten application benchmarks from the Cilkbench suite, the exact algorithm
incurs a geometric-mean multiplicative overhead of $1.54$ for $p=128$, whereas
the approximation-threshold algorithm incurs an overhead of~$1.36$ independent
of~$p$. In addition, we use Cilkmem to reveal and diagnose a previously unknown
issue in a large image-alignment program contributing to unexpectedly high
memory usage under parallel executions.

\end{abstract}


\secput{intro}{Introduction}

To design a recursive fork-join parallel program\footnote{When we talk
  about fork-join parallelism throughout this paper, we mean recursive
  fork-join parallelism.}, such as a Cilk program, to run on massively
parallel computing systems, software engineers must assess how their
program's memory requirements scale in a many-processor execution.
Many tools have been developed to observe a program execution and
report its maximum memory
consumption 
(e.g.,~\cite{VTune19, SchifferKe14, QuinlanKe17, Oracle10,
  NethercoteSe07}). But these tools can only ascertain the memory
requirements of the one particular execution of the program that they
observe.  For parallel programs, whose memory requirements can depend
on scheduling decisions that vary from run to run, existing tools are
unable to provide bounds on the maximum amount of memory that might be
used during future program executions\footnote{In this paper, when we
  consider executions of a program, we shall assume a fixed input to
  the program, including fixed seeds to any pseudorandom number
  generators the program might use.}. This paper studies the problem
of computing the $p$-processor \defn{memory high-water mark} (MHWM) of
a parallel program, which measures the worst-case memory consumption
of \emph{any} $p$-processor execution. We introduce Cilkmem, an
efficient dynamic-analysis tool that measures the MHWM of a Cilk
program for an arbitrary number of processors~$p$.

Computing the MHWM of an arbitrary parallel program is a theoretically
difficult problem. In the special case where a program's allocated memory is
freed immediately, without any intervening parallel control structure,
computing the MHWM corresponds to finding a solution to the \defn{poset chain
optimization problem} \cite{ShumTr96, CameronEd79, Shum90}. The poset chain
optimization problem is well understood theoretically, and the fastest known
algorithms run in (substantial) polynomial time using techniques from linear
programming \cite{ShumTr96}.  A direct application of these algorithms to
compute the MHWM of a parallel program would require computation that is
polynomially large in the execution time of the original program.

Many dynamic-analysis tools (e.g., \cite{FengLe99, HeLeLe10,
  SchardlKuLe15, UtterbackAgFi16, YogaNa17}) have been developed that
exploit structural properties of fork-join programs to analyze a
program efficiently.  Specifically, these tools often leverage the
fact that the execution of a fork-join program can be modeled as a
series-parallel \defn{computation DAG} (directed acyclic graph)
\cite{BlumofeLe99, FengLe99}, where the edges model executed
instructions, and the vertices model parallel-control dependencies.

But even when restricted to series-parallel DAGs,
computing the $p$-processor MHWM efficiently is far from
trivial. Identifying the worst-case memory requirement of a
$p$-processor execution involves
solving an optimization problem that sparsely assigns a finite number
of processors to edges in the program's computation DAG\@.  Such a
computation DAG can be quite large, because of the liberal nature in
which fork-join programs expose logically parallel operations.
Moreover, whereas the poset chain optimization problem assumes that
memory is freed immediately after being allocated, fork-join programs
can free memory at any point that serially follows the
allocation. Efficient solutions for this optimization problem are not
obvious, and seemingly require a global view of the program's entire
computation DAG\@.  To obtain such a view, a tool would need to store
a complete trace of the computation for offline processing and incur
the consequent time and space overheads.

This work shows, however, that it is possible not only to compute the
$p$-processor MHWM \emph{efficiently} for a fork-join program, but
also to do so in an \emph{online} fashion, without needing to store
the entire computation DAG\@.  Specifically, we provide an online
algorithm for computing the \emph{exact} $p$-processor MHWM in
$O(T_1 \cdot p)$ time, where $T_1$ is the total work of the program.
We also examine the \defn{approximate threshold problem}, which asks,
for a given memory threshold $M$, whether the $p$-processor MHWM
exceeds $M/2$ or whether it is guaranteed to be less than~$M$.  We
show how to solve the approximate threshold problem in $O(T_1)$ time
using an online algorithm.
Both of these algorithms are space efficient, requiring $O(p \cdot D)$
and $O(D)$ space, respectively, where $D$ is the maximum call-stack
depth of the program's execution on a single thread.

\subsection{Memory Consumption of Fork-Join Programs}

Let us review the fork-join parallel programming model and see how
scheduling can cause a fork-join program's memory consumption to vary
dramatically.

Recursive fork-join parallelism, as supported by parallel programming
languages including dialects of Cilk \cite{FrigoLeRa98, Leiserson10,
  IntelCilkPlus10}, Fortress \cite{AllenChHa+08}, Kokkos
\cite{EdwardsTrSu14}, Habanero \cite{BarikBuCa09}, Habanero-Java
\cite{CaveZhSh11}, Hood \cite{BlumofePa98}, HotSLAW \cite{MinIaYe11},
Java Fork/Join Framework \cite{Lea00}, OpenMP \cite{OpenMP08,
  AyguadeCoDu09}, Task Parallel Library \cite{LeijenHa07}, Threading
Building Blocks (TBB) \cite{Reinders07}, and
X10~\cite{CharlesGrSa+05}, has emerged as a popular
parallel-programming model.  In this model, subroutines can be spawned
in parallel, generating a series-parallel computation DAG of
fine-grained tasks.  The synchronization of tasks is managed ``under
the covers'' by the runtime system, which typically implements a
randomized work-stealing scheduler~\cite{BlumofeLe99, FrigoLeRa98,
  AroraBlPl98, BlumofeJoKu96}.  Constructs such as \code{parallel_for}
can be implemented as syntactic sugar on top of the fork-join model.
As long as the parallel program contains no determinacy races
\cite{FengLe99} (also called general races~\cite{NetzerMi92}), the
program is \defn{deterministic}, meaning that every program execution
on a given input performs the same set of operations, regardless of
scheduling.




\begin{figure}[t]
  \begin{minipage}[t]{1.0\linewidth}
    \begin{codebox}
      \Procname{$\proc{MemoryExplosion}(n)$}
      \li \CLRSIf $n > 1$
          \Then
      \li   \CilkSpawn $\proc{MemoryExplosion}(n-1)$ \lilabel{memoryexplosion:spawn}
          \End
      \li $b \gets \proc{malloc}(1)$ \lilabel{memoryexplosion:malloc}
      \li \CilkSync \lilabel{memoryexplosion:sync}
      \li $\proc{free}(b)$
      \li \Return
    \end{codebox}
  \end{minipage}
  \caption{Example Cilk program whose heap-memory usage can increase
    dramatically depending on how the program is scheduled.}
  \label{fig:memoryexplosion}
\end{figure}

Even a simple fork-join program can exhibit dramatic and unintuitive
changes in memory consumption,\footnote{This work focuses on
  heap-memory consumption.  In contrast, the Cilk runtime system is
  guaranteed to use stack space efficiently~\cite{BlumofeLe99}.}
based on how the program is scheduled on $p$ processors.  Consider,
for example, the Cilk subroutine \proc{MemoryExplosion} in
\figref{memoryexplosion},\footnote{Similar examples can be devised for
  other task-parallel programming frameworks.}\ which supports
parallel execution using the keywords \CilkSpawn and \CilkSync.  The
\CilkSpawn keyword on \liref{memoryexplosion:spawn} allows the
recursive call to $\proc{MemoryExplosion}(n-1)$ to execute in parallel
with the call to $\proc{malloc}(1)$ on \liref{memoryexplosion:malloc},
which allocates $1$ byte of heap memory.  The \CilkSync on
\liref{memoryexplosion:sync} waits on the spawned recursive call to
\proc{MemoryExplosion} to return before proceeding; if a thread
reaches the \CilkSync, and the recursive call to
\proc{MemoryExplosion} has not yet completed, then the thread can be
rescheduled to make progress elsewhere in the program.

Cilk's randomized work-stealing scheduler \cite{BlumofeLe99} schedules
the parallel execution of \proc{MemoryExplosion} as follows.  When a
Cilk worker thread encounters the \CilkSpawn statement on
\liref{memoryexplosion:spawn}, it immediately executes the recursive
call to $\proc{MemoryExplosion}(n-1)$.  If another worker thread in
the system has no work to do, it becomes a \defn{thief} and can
\defn{steal} the continuation of this parallel recursive call, on
\liref{memoryexplosion:malloc}.


Because of Cilk's scheduler, the memory consumption of
\proc{MemoryExplosion} can vary dramatically and nondeterministically
from run to run, even though \proc{MemoryExplosion} is deterministic.
When run on a single processor, the \CilkSpawn and \CilkSync
statements effectively act as no-ops.  Therefore,
\proc{MemoryExplosion} uses at most $1$ byte of heap memory at any
point in time, because each call to \proc{malloc} is followed by a
call to \proc{free} almost immediately thereafter.  When run on $2$
processors, however, the memory consumption of \proc{MemoryExplosion}
can increase dramatically, depending on scheduling.  While one worker
is executing \liref{memoryexplosion:spawn}, a thief can steal the
execution of \liref{memoryexplosion:malloc} and allocate $1$ byte of
memory before encountering the \CilkSync on
\liref{memoryexplosion:sync}.  The thief might then return to work
stealing, only to find another execution of
\liref{memoryexplosion:malloc} to steal, repeating the process.  As a
result, the heap-memory consumption of $\proc{MemoryExplosion}(n)$ on
two or more Cilk workers can vary from run to run between $1$ byte and
$n$ bytes, depending on scheduling happenstance.

We remark that the sequence of scheduling events that result in
$\proc{MemoryExplosion}(n)$ using a large amount of heap-memory is not
  pathological. In particular, if one models each of the two workers
  as being able to perform one operation every $O(1)$ cycles, then
  $\proc{MemoryExplosion}(n)$ is guaranteed to use heap-memory
    $\Theta(n)$ on two processors.




\subsection{Algorithms for Memory High-Water Mark} 

This paper presents algorithms for
computing the $p$-processor MHWM of a program with a series-parallel
computation DAG, and in particular, of a deterministic parallel Cilk
program $\mathcal{P}$. Let $G = (V, E)$ be the computation DAG for
$\mathcal{P}$, and suppose each edge of $G$ is annotated with the
allocations and frees within that edge.


\secref{exact} presents a simple offline algorithm for computing the
exact $p$-processor MHWM of the parallel program $\mathcal{P}$, given
the DAG~$G$.
A straightforward analysis of the exact algorithm would suggest that
it runs in time $O(T_1 \cdot p^2)$, where $T_1$ is the $1$-processor
running time of the program~$\mathcal{P}$. By performing an amortized
analysis over the parallel strands of the program, we show that a
slightly modified version of the algorithm actually achieves a running
time of $O(T_1 \cdot p)$.

Explicitly storing the DAG $G$ can be impractical for large
programs~$\mathcal{P}$. \secref{mem} presents a combinatorial
restructuring of the exact algorithm that computes the MHWM in an
\emph{online} fashion, meaning that the algorithm runs as
instrumentation on (a single-threaded execution of) the program
$\mathcal{P}$.  The online exact algorithm introduces at most $O(p)$
time and memory overheads when compared to a standard single-threaded
execution of~$\mathcal{P}$.  In particular, the algorithm runs in time
$O(T_1 \cdot p)$ and uses at most $O(p \cdot D)$ memory, where $T_1$
is the $1$-processor running time of the program, and $D$ is the
maximum call-stack depth of the program's execution on a single
thread. The simple amortization argument used for the offline
algorithm does not apply to the more subtle structure of the online
algorithm.  Instead, we employ a more sophisticated amortized
analysis, in which subportions of the graph are assigned sets of
leader vertices, and the algorithm's work is charged to the leader
vertices in such a way that no vertex receives more than $O(p)$
charge.

The two exact algorithms for computing the $p$-processor MHWM have the
additional advantage that they actually compute each of the
$i$-processor MHWM's for $i = 1, \ldots, p$.  Thus a user can determine
the largest $i \le p$ for which the $i$-processor MHWM is below some
threshold~$M$.

We also consider the approximate-threshold version of the
$p$-processor MHWM problem.  Here, one is given a number of processors
$p$ and a memory threshold $M$, and wishes to determine whether $p$
processors are at risk of coming close to running out of memory while
executing on a system with memory~$M$.  Formally, an
approximate-threshold algorithm returns a value of $1$ or $0$, where
$1$ indicates that the $p$-processor MHWM is at least $M / 2$, and
$0$ indicates that the $p$-processor MHWM is bounded above by~$M$.

\secref{approx} presents a strictly-linear time online algorithm for
the approximate-threshold problem, running in time $O(T_1)$. The
independence of the running time from $p$ means that the algorithm can
be used for an arbitrarily large number of processors $p$ while still
having a linear running time.  This property can be useful for either
understanding the limit properties of a program (i.e., behavior for
very large $p$), or the behavior that a program will exhibit on a very
large machine.  The algorithm is also memory efficient.  In
particular, the memory usage of the algorithm never exceeds $O(D)$,
where $D$ is the maximum call-stack depth of the program's serial
execution.


A key technical idea in the approximate-threshold algorithm is a lemma
that relates the $p$-processor high-water mark to the
\emph{infinite-processor} MHWM taken over a restricted set of parallel
execution states known as ``robust antichains''. The
infinite-processor MHWM over robust antichains can then be computed in
strictly linear time via a natural recursion.  To obtain an online
algorithm, we introduce the notion of a ``stripped robust antichain''
whose combinatorial properties can be exploited to remove dependencies
between non-adjacent subproblems in the recursive algorithm.

\subsection{The Cilkmem Tool}

We introduce the Cilkmem dynamic-analysis tool, which implements the
online algorithms to measure the $p$-processor memory high-water mark
of a deterministic parallel Cilk program.

Both of Cilkmem's algorithms run efficiently in practice.  We
implemented Cilkmem using the CSI framework for compiler
instrumentation \cite{SchardlDeDo17} embedded in the Tapir/LLVM
compiler~\cite{SchardlMoLe17}.  In \secref{eval}, we measure the
efficiency of Cilkmem on a suite of \fitb{ten} Cilk application
benchmarks. Cilkmem introduces only a small overhead for most of the
benchmarks. For example, the geometric-mean multiplicative overhead
across the ten benchmarks is~$1.54$, to compute the MHWM exactly for
$p = 128$, and~$1.36$, to run the approximate-threshold algorithm. For
certain benchmarks with very fine-grained parallelism, however, the
overhead can be substantially larger (although still bounded by the
theoretical guarantees of the algorithms). We find that for these
applications, the strictly-linear running time of the
approximate-threshold algorithm provides substantial performance
benefits, allowing computations to use arbitrarily large values of
$p$ with only small constant-factor overhead.

In addition to measuring Cilkmem's performance overhead, we use
Cilkmem to analyze a big-data application, specifically, an
image-alignment program \cite{KalerWhWo19} used for brain
connectomics~\cite{MatveevMeSa17}.  \secref{eval} describes how, for
this application, Cilkmem reveals a previously unknown issue
contributing to unexpectedly high memory usage under parallel
executions.

\subsection{Outline}
The remainder of the paper is structured as follows.  \secref{problemstatement}
formalizes the problem of computing the $p$-processor MHWM in terms of
antichains in series-parallel DAGs.  \secref{exact} presents the $O(T_1 \cdot
p)$-time exact algorithm, and \secref{mem} extends this to an online algorithm.
\secref{approx} presents an online linear-time algorithm for the
approximate-threshold problem. The design and analysis of the online
approximate-threshold algorithm is the most technically sophisticated part of
the paper.  \secref{eval} discusses the implementation Cilkmem, and evaluates
its performance.  \secref{relatedwork} discusses related work, and
\secref{concl} concludes with directions for future work.


\secput{problemstatement}{Problem Formalization}

This section formalizes the problem of computing the $p$-processor
memory high-water mark of a parallel program.

\subheading{The DAG model of multithreading}
Cilk programs express logical recursive fork-join parallelism through
spawns and syncs. A \defn{spawn} breaks a single thread into two
threads of execution, one of which is logically a new child thread,
while the other is logically the continuation of the original
thread. A \defn{sync} by a thread $t$, meanwhile, joins thread $t$
with the completion of all threads spawned by $t$, meaning the next
continuation of $t$ occurs only after all of its current child threads
have completed.

An execution of a Cilk program can be modeled as a computation DAG
$G = (V,E)$.  Each directed edge represents a \defn{strand}, that is,
a sequence of executed instructions with no spawns or syncs.  Each
vertex represents a spawn or a sync. 

The DAG $G$ is a \defn{series-parallel DAG} \cite{FengLe99}, which
means that $G$ has two distinguished vertices --- a \defn{source}
vertex, from which one can reach every other vertex in $G$, and
a \defn{sink} vertex, which is reachable from every other vertex in
$G$ --- and can be constructed by recursively combining pairs of
series-parallel DAGs using series and parallel combinations.
A \defn{series combination} combines two DAGs $G_1$ and $G_2$ by
identifying the sink vertex of $G_1$ with the source vertex of~$G_2$.
A \defn{parallel combination} combines two DAGs $G_1$ and $G_2$ by
identifying their source vertices with each other and their sink
vertices with each other.  We shall refer to any DAG used in a series
or parallel combination during the recursive construction of $G$ as
a \defn{component} of~$G$.  Although the recursive structure of
series-parallel DAGs suggests a natural recursive framework for
algorithms analyzing the DAG, \secref{mem} describes how a more
complicated framework is needed to analyze series-parallel DAGs in an
online fashion.

The structure of $G = (V, E)$ induces a poset on the edges $E$, in
which $e_1 < e_2$ if there is a directed path from $e_1$ to $e_2$. A
collection of distinct edges $(e_1, e_2, \ldots, e_q)$ form an \defn{antichain}
if there is no pair $e_i, e_j$ such that $e_i < e_j$. Note that edges
form an antichain if and only if there is an execution of the
corresponding parallel program in which those edges at some point run
in parallel.

\subheading{The $p$-processor memory high-water mark}
To analyze potential memory usage, we model the computation's memory
allocations and frees (deallocations) in the DAG $G$ using two
weights, $m(e)$ and $t(e)$, on each edge~$e$.  The weight $m(e)$,
called the \defn{edge maximum}, denotes the high-water mark of memory
usage at any point during the execution of $e$ when only the
allocations and frees within $e$ are considered.  The edge maximum
$m(e)$ is always non-negative since, at the start of the execution
of an edge $e$, no allocations or frees have been performed, and thus
the (local) memory usage is zero.  The weight $t(e)$, called the
\defn{edge total}, denotes the sum of allocations minus frees over the
entire execution of the edge.  In contrast to $m(e)$, an edge total
$t(e)$ can be negative when memory allocated previously in the program
is freed within~$e$.

The $p$-processor memory high-water mark is determined by the memory
requirements of all antichains of length $p$ or less in the
computation DAG~$G$. We define the \defn{water mark} $W(A)$ of an
antichain $A = (e_1, \ldots, e_q)$ to be the maximum amount of memory that
could be in use on a $q$-processor system that is executing the edges
$e_1, \ldots, e_q$ concurrently.  The \defn{$p$-processor high-water
  mark} $\HH_{p}(G)$ is the maximum water mark over antichains of
length $p$ or smaller\footnote{This definition of water mark makes no
  assumption about the underlying scheduling algorithm.  In
  particular, when a thread spawns, we make no assumptions as
  to which subsequent strand the scheduler will execute first.}:
\begin{equation}
\HH_p(G) =
\max_{\substack{(e_1, \ldots, e_q) \in \mathcal{A}, \\ \ q \le p}} W(e_1, \ldots, e_q),
\label{eq:hwmark}
\end{equation}
where $\mathcal{A}$ is the set of antichains in $G$.



\subheading{Memory water mark of an antichain} The water mark $W(A)$
of an antichain $A = (e_1,\ldots, e_q)$ is the sum of two quantities
$W(A) = W_1(A) + W_2(A)$.

The quantity $W_1(A)$ consists of the contribution to the
water mark from the edges $e_1, \ldots, e_q$ and from all the edges $e
\in G$ satisfying $e < e_i$ for some $i$:

\begin{equation}
W_1(A) = \sum_{e_i \in A} m(e_i) + \sum_{e \in E, e < e_i \text{ for some } e_i \in A} t(e).
\end{equation} 

The quantity $W_2(A)$ counts the contribution to the water mark of
what we call \defn{suspended parallel components}. If the
series-parallel construction of $G$ combines two subgraphs $G_1$ and
$G_2$ in parallel, we call them \defn{partnering parallel
  components} of $G$. Consider two partnering parallel
components $G_1$ and $G_2$, and suppose that $G_2$ contains at
least one edge from the antichain $A$, while $G_1$ does not. Then
there are two options for a parallel execution in which processors are
active in the edges of $A$: either (1) the parallel component $G_1$
has not been executed at all, or (2) the parallel component $G_1$
has been executed to completion and is \defn{suspended} until its
partner parallel component completes. In the latter case, $G_1$
will contribute $\sum_{e \in G_1} t(e)$ to the water mark of~$A$. If this
sum, which is known as $G_1$'s \defn{edge sum}, is positive, then we
call $G_1$ a \defn{companion component} to the antichain~$A$. The
quantity $W_2(A)$ counts the contribution to the water mark of edges in
companion components. That is, if $\mathcal{G}$ is the set of
companion components to $A$, then

\begin{equation}
W_2(A) = \sum_{H \in \mathcal{G}} \sum_{e \in H} t(e).
\label{eqw2}
\end{equation}

Note that the companion components of $A$ are guaranteed to be
disjoint, meaning that each edge total $t(e)$ in \eqref{eqw2} is
counted at most once.

\subheading{The downset non-negativity property} Several of our
algorithms, specifically for the approximate-threshold problem, take
advantage of a natural combinatorial property satisfied by edge
totals~$t(e)$.  Although $t(e)$ can be negative for a particular edge
$e$, the sum $\sum_{e \in E} t(e)$ is presumed to be non-negative,
since the parallel program should not, in total, free more memory than
it allocates.  We can generalize this property to subsets of edges,
called downsets, where a subset $S \subseteq E$ is a \defn{downset}
if, for each edge $e\in S$, every edge $e' < e$ is also in~$S$.  The
\defn{downset-non-negativity property} requires that, for every
downset $S \subseteq E$, $\sum_{e \in S} t(e) \ge 0$.  This property
corresponds to the real-world requirement that at no point during the
execution of a parallel program can the total memory allocated be net
negative.





\section{An Exact Algorithm with $O(p)$ Overhead}\label{sec:exact}

This section presents \proc{ExactOff}, an $O(|E| \cdot p)$-time
offline algorithm for exactly computing the high-water marks
$H_1(G), \ldots, H_p(G)$ of a computation DAG $G$ for all numbers of
processors $1,\ldots,p$.  We first give a simple dynamic-programming
algorithm which runs in time $O(|E| \cdot p^2)$.  We describe how
\proc{ExactOff} optimizes this simple algorithm.  We then perform an
amortization argument to prove that \proc{ExactOff} achieves a running
time of $O(|E| \cdot p)$. \secref{mem} discusses how to adapt the
algorithm in order to run in an online fashion, executing along with
the parallel program being analyzed, and introducing only $O(p)$
additional memory overhead.

The algorithm exploits the fact that $G$ can be recursively
constructed via series and parallel combinations, as
\secref{problemstatement} describes.  The algorithm builds on top of
this recursive structure.  Note that one can construct a recursive
decomposition of a series-parallel DAG $G$ in linear
time~\cite{Valdes78}. 

Given a parallel program represented by a series-parallel DAG $G$, and
a number of processors $p$, we define the $(p + 1)$-element array
$R_G = (R_G[0], R_G[1], \ldots, R_G[p])$ so that, for $i > 0$,
$R_G[i]$ is the memory high-water mark for $G$ over all antichains of
size exactly~$i$.  We define $R_G[0]$ to be $\max(0, t(G)),$ where
$t(G) := \sum_{e \in G} t(e)$.  For $i > 0$, if the
graph $G$ contains no $i$-edge antichains, then $R_G[i]$ is defined to
take the special value $\NULL$, treated as $-\infty$.

One can compute $H_p(G)$ from the array $R_G$ using the identity
$H_p(G) = \max_{i = 1}^p R_G[i]$.  Our goal is therefore to
recursively compute $R_G$ for the given DAG~$G$.

\subheading{An $O(|E| \cdot p^2)$-time algorithm} We begin with a
simple algorithm that computes $R_G$ using the recursive
series-parallel decomposition of~$G$. When $G$ consists of a single
edge $e$, we have $R_G[0] = \max(0, t(e))$, $R_G[1] = m(e)$, and
$R_G(2), \ldots, R_G[p] = \NULL$.

Suppose that $G$ is the parallel combination of two graphs $G_1$
and $G_2$. Then,
\begin{equation*}
R_G[i] = \begin{cases}
\max(0, t(G)) &\text{if $i = 0$,}\\
\max_{j = 0}^{i} R_{G_1}[j] + R_{G_2}[i - j] &\text{ otherwise}.
\end{cases}
\end{equation*}
In the second case, if either of $R_{G_1}[j]$ or $R_{G_1}[i
  - j]$ are $\NULL$, then their sum is also defined to be
$\NULL$.  Moreover, note that the definitions of $R_{G_1}[0]$ and
$R_{G_2}[0]$ ensure that suspended components are treated correctly in
the recursion.

Suppose, on the other hand, that $G$ is the series combination of two
graphs $G_1$ and $G_2$. Then $R_G$ can be expressed in terms of
$R_{G_1}$, $R_{G_2}$, $t(G_1)$, $t(G)$ using the equation,
\begin{equation*}
R_G[i] = \begin{cases}
\max(0, t(G)) &\text{if $i = 0$,}\\
\max(R_{G_1}[i], t(G_1) + R_{G_2}[i]) &\text{ otherwise}.
\end{cases}
\end{equation*}

Combining the above cases yields an $O(|E| \cdot p^2)$-time algorithm
for computing $R_G$.



\subheading{Achieving a running time of $O(|E| \cdot p)$} To optimize
the simple algorithm, we define, for a DAG $G$, the value $s(G)$ to be
the size of the largest antichain of edges in $G$, or $p$ if $G$
contains an antichain of size $p$ or larger.  The value $s(G)$ is easy
to compute recursively using the recursion
$s(G) = \min(s(G_1) + s(G_2), p)$, when $G$ is the parallel
combination of components $G_1$ and $G_2$, and
$s(G) = \max(s(G_1), s(G_2))$, when $G$ is the series combination of
$G_1$ and $G_2$.

\proc{ExactOff} optimizes the simple dynamic program as follows.
Suppose that $G$ is a parallel combination of components $G_1$ and
$G_2$. Notice that $R_{G_1}[i] = \NULL$ whenever $i > s(G_1)$ and
$R_{G_2}[i] = \NULL$ whenever $i > s(G_2)$. It follows that,
\begin{equation}
R_G[i] = \begin{cases}
\max(0, t(G)) &\text{if $i = 0$,}\\
\displaystyle\max_{\substack{0 \le j \le i, \\ j \le s(G_1), \\ (i - j) \le s(G_2)}} R_{G_1}[j] + R_{G_2}[i - j] &\text{ o.w.},
\end{cases}
\label{eqoptimizedrecursion}
\end{equation}
where the max for the second case is defined to evaluate to $\NULL$ if it
has zero terms.

\begin{theorem}
  For a series-parallel DAG $G = (V, E)$, \proc{ExactOff} recursively
  computes $R_G$ in time $O(|E| \cdot p)$.
\label{thmEp}
\end{theorem}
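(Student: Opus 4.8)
The plan is to charge the running time across the nodes of the recursive series-parallel decomposition tree of $G$, which has $|E|$ leaves and hence $O(|E|)$ internal nodes. The per-node costs fall into three classes. A single-edge base case costs $O(1)$, for $O(|E|)$ total. A series combination fills in only the entries $R_G[1], \ldots, R_G[s(G)]$, each in $O(1)$ time (maintaining $t(G)$ incrementally), so it costs $O(s(G)) = O(p)$; since there are $O(|E|)$ series nodes this contributes $O(|E| \cdot p)$. The interesting cost is that of the parallel combinations. Using the optimized recursion~\eqref{eqoptimizedrecursion}, a parallel combination of $G_1$ and $G_2$ enumerates only index pairs $(j, i-j)$ with $j \le s(G_1)$ and $i - j \le s(G_2)$, so its cost is proportional to $W(v) := |\{(j,k) : 0 \le j \le s(G_1),\ 0 \le k \le s(G_2),\ j + k \le s(G)\}| \le (s(G_1)+1)(s(G_2)+1)$. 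Thus the theorem reduces to the single claim that $\sum_{v} (s(G_1(v))+1)(s(G_2(v))+1) = O(|E| \cdot p)$, where the sum is over all parallel nodes $v$.

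Writing $(s_1+1)(s_2+1) = s_1 s_2 + s_1 + s_2 + 1$ with $s_i := s(G_i) \le p$, the terms $s_1 + s_2 + 1 = O(p)$ contribute $O(|E| \cdot p)$ over the $O(|E|)$ parallel nodes, so it remains to bound $\sum_v s_1 s_2$. The tempting move $s_1 s_2 \le p \cdot \min(s_1, s_2)$ followed by a small-to-large charge of $\min(s_1,s_2)$ to the smaller component fails: without a cap, the antichain width of the component containing a given strand doubles each time the strand lands on the smaller side, producing a spurious $O(\log p)$ factor. The key is to exploit that $s(\cdot)$ saturates at $p$. Let $\hat s(G)$ denote the \emph{uncapped} largest-antichain width (so $s(G) = \min(\hat s(G), p)$, and $\hat s$ obeys the same recursion as $s$ but without the $\min$ with $p$). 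I would split the parallel nodes into \emph{uncapped} nodes, where $\hat s_1 + \hat s_2 \le p$, and \emph{capped} nodes, where $\hat s_1 + \hat s_2 > p$.

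For the uncapped nodes I would use a convexity/telescoping argument. Uncapped parallel nodes are closed downward in the decomposition tree, so they group into maximal subcomponents $H$ of width $\hat s(H) \le p$; within such an $H$ each $s_i = \hat s_i$, and the identity $\hat s_1 \hat s_2 = \tfrac12\big((\hat s_1 + \hat s_2)^2 - \hat s_1^2 - \hat s_2^2\big)$ telescopes up $H$ (series steps only subtract a nonnegative amount), bounding $\sum_{v \in H} \hat s_1 \hat s_2$ by $O(\hat s(H)^2)$. Since $\hat s(H) \le \min(p, |E(H)|)$ gives $\hat s(H)^2 \le p\,|E(H)|$, and the maximal subcomponents have disjoint edge sets, these contribute $O(|E| \cdot p)$ in total. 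For the capped nodes I would charge $s_1 s_2 \le p \cdot \min(s_1, s_2)$ by distributing a charge of $p$ across the edges of the smaller child (which number at least that child's antichain width). Here the cap does the work: once a strand lies on the smaller side of a capped node, its enclosing component's width has at least doubled and exceeded $p$, so each subsequent such event sees a smaller child whose edge count is at least double the previous one. The per-strand charges $p / (\text{edges in smaller child})$ therefore form a geometric series summing to $O(1)$ per strand, i.e.\ $O(p)$ of charged work per strand, for $O(|E| \cdot p)$ overall.

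I expect the capped-regime charging to be the main obstacle: it is exactly the place where the naive analysis loses a $\log p$ factor, and making it rigorous requires carefully tracking how a strand's enclosing component grows past the threshold $p$ so that the per-strand charges decay geometrically. The uncapped telescoping, the handling of series nodes, and the base cases are comparatively routine.
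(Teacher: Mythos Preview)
Your telescoping argument for the uncapped regime is broken: it proves a \emph{lower} bound on $\sum_v \hat s_1\hat s_2$, not an upper bound. Concretely, summing $\tfrac12\bigl(\hat s(G)^2-\hat s(G_1)^2-\hat s(G_2)^2\bigr)$ over all internal nodes of $H$ telescopes to $\tfrac12\bigl(\hat s(H)^2-|E(H)|\bigr)$, and this equals $\sum_{\text{parallel}}\hat s_1\hat s_2 - \tfrac12\sum_{\text{series}}\min(\hat s_1,\hat s_2)^2$. The series term you are ``only subtracting'' therefore has to be \emph{added back} to recover the parallel sum, so you cannot conclude $\sum_{\text{parallel}}\hat s_1\hat s_2 = O(\hat s(H)^2)$. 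A direct counterexample: let $A_1,\dots,A_k$ each be built by $p-1$ parallel steps $(1{+}1),(2{+}1),\dots,((p{-}1){+}1)$, and let $H$ be the series combination of the $A_i$'s. Every parallel node has $\hat s_1+\hat s_2\le p$, so all are uncapped and $H$ (with $\hat s(H)=p$) is your maximal subcomponent; yet $\sum_{\text{parallel}}\hat s_1\hat s_2 = k\binom{p}{2}=\Theta(kp^2)$, which is not $O(\hat s(H)^2)=O(p^2)$. (It \emph{is} $O(|E(H)|\cdot p)$, but your route to that via $\hat s(H)^2$ fails.) Your capped argument is also loose: you write both ``a charge of $p$ across the edges'' and ``per-strand charges $p/(\text{edges in smaller child})$,'' and the geometric-decay claim needs the \emph{edge count} of the smaller child to double between consecutive events, whereas what you actually establish is that the uncapped \emph{width} of the enclosing component doubles.

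The paper's proof avoids all of this with a cleaner split and a direct credit scheme. It separates \emph{fully-formed} parallel combinations (both $s(G_i)=p$) from the rest. A counting argument shows there are at most $|E|/p$ fully-formed combinations, each costing $p^2$, for $O(|E|\cdot p)$ total. For non-fully-formed combinations (at least one $s(G_i)<p$), give each edge $2p-1$ credits; whenever $s(G_i)<p$, deduct $s(G_{3-i})$ credits from every edge of $G_i$. The invariant ``total deductions from $e$ so far $\le$ current uncapped antichain width of $e$'s component'' shows each edge is charged at most $2p-1$ in total, and the deductions dominate $s(G_1)s(G_2)$ at each step. This is exactly the place where your intuition said a $\log p$ factor lurks, but the cap at $p$ makes the credit balance finite without any geometric-series reasoning.
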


To prove Theorem \ref{thmEp}, let us consider the time needed to
compute $R_G$ when $G$ is obtained by combining two subgraphs $G_1$
and $G_2$ in parallel. For each value of $i \le s(G_1)$ and of $i - j
\le s(G_2)$, the term $R_{G_1}[i] + R_{G_2}[i - j]$ will appear in
\eqref{eqoptimizedrecursion} for exactly one index~$i$. It follows
that the total time to compute $R_G$ from $R_{G_1}$ and $R_{G_2}$ is
at most $O(p + s(G_1) \cdot s(G_2))$.

Since parallel combinations cost $O(p + s(G_1) \cdot s(G_2))$ and
series combinations cost $O(p)$, Theorem \ref{thmEp} reduces to,
  $$\sum_{(G_1, G_2) \in \mathcal{C}} s(G_1) \cdot s(G_2) \le O(|E|
\cdot p),$$ where the set $\mathcal{C}$ consists of all parallel
combinations in the recursive construction of~$G$.

\begin{lemma}
  Let $G = (V, E)$ be the series-parallel DAG modeling some parallel
  program's execution. Then,
  $$\sum_{(G_1, G_2) \in \mathcal{C}} s(G_1) \cdot s(G_2) \le O(|E|
  \cdot p),$$ where the set $\mathcal{C}$ consists of all parallel
  combinations in the recursive construction of $G$.
  \label{lemamortizedcosts}
\end{lemma}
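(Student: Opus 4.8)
The plan is to bound the sum $\sum_{(G_1,G_2)\in\mathcal{C}} s(G_1)s(G_2)$ directly by splitting the parallel combinations into two types according to whether the combination saturates the antichain cap. Call a combination $(G_1,G_2)$ \emph{uncapped} if $s(G_1)+s(G_2)\le p$ (so that $s(G)=s(G_1)+s(G_2)$) and \emph{capped} if $s(G_1)+s(G_2)>p$ (so that $s(G)=p$). I will bound the contribution of each type by $O(|E|\cdot p)$ separately. Throughout I use that $s$ is monotone nondecreasing along any root-to-leaf path of the recursive decomposition, and that $s(C)\le\min(p,|E(C)|)$ for every component~$C$.

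For the uncapped combinations I would use the exact product rather than a crude bound. Since $s(G)=s(G_1)+s(G_2)$ here, combining $G_1$ and $G_2$ raises the antichain size of the component containing a given edge by exactly the $s$-value of its sibling. Concretely, at each uncapped combination I charge $s(G_2)$ (the larger side) to each of the $s(G_1)$ edges of a fixed maximum antichain of the smaller side $G_1$, so that the total charge equals the cost $s(G_1)s(G_2)$. For a fixed edge $e$, every combination that charges $e$ raises $s$ of the component currently containing $e$ by exactly the charge, and since this $s$ rises monotonically from $1$ to at most $p$, the charges accumulated by $e$ sum to less than $p$. Summing over all $|E|$ edges bounds the uncapped contribution by $|E|\cdot p$. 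This is essentially the telescoping identity $s(G_1)s(G_2)=\binom{s(G)}{2}-\binom{s(G_1)}{2}-\binom{s(G_2)}{2}$ applied along each edge's path.

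For the capped combinations I would first reduce the product to the minimum: since $s(G_1),s(G_2)\le p$ and $s(G_1)+s(G_2)>p$, we have $s(G_1)s(G_2)=\Theta\!\left(p\cdot\min(s(G_1),s(G_2))\right)$, so it suffices to show $\sum_{\mathrm{capped}}\min(s(G_1),s(G_2))=O(|E|)$. The naive ``smaller-half'' bound $\min(s(G_1),s(G_2))\le\min(|E(G_1)|,|E(G_2)|)$ summed over all combinations is only $O(|E|\log p)$ (a balanced parallel tree realizes this spurious logarithmic factor), so I must exploit the structure of capped combinations. I would consider the subtree $T$ of the recursive decomposition consisting of the \emph{saturated} components (those with $s=p$); these form a connected subtree containing $G$, their minimal members are its leaves, and every capped combination produces a saturated component and is therefore a node of~$T$. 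The minimal saturated components are pairwise disjoint and each contains at least $p$ edges, so there are at most $|E|/p$ of them; this bounds both the number of capped combinations that first create a saturated component (each contributing $\min\le p$) and, via the standard fact that a tree with $L$ leaves has at most $L-1$ branching nodes, the number of capped combinations whose two children are both saturated (each contributing $\min=p$). The remaining capped combinations have exactly one saturated child, and their $\min$ equals the antichain size of the \emph{unsaturated} sibling, which is a maximal unsaturated component; since maximal unsaturated components are pairwise disjoint, these terms sum to at most $|E|$. Adding the three contributions yields $\sum_{\mathrm{capped}}\min=O(|E|)$, hence the capped contribution is $O(|E|\cdot p)$.

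The main obstacle is the capped case, and specifically avoiding the extra logarithmic factor: any argument that treats all parallel combinations uniformly through $s(G_1)s(G_2)\le p\min(s(G_1),s(G_2))$ loses a $\log p$, so the proof must split off the uncapped combinations (handled exactly by telescoping) and then use the disjointness of minimal saturated and maximal unsaturated components, together with the branching-count bound on the saturated subtree, to control the capped ones. The bookkeeping that each maximal unsaturated component and each minimal saturated component is charged only once (so that the disjointness bounds genuinely apply) is the step most in need of care.
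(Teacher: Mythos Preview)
Your argument is correct, but it is organized differently from the paper's. The paper splits the parallel combinations into \emph{fully-formed} ones (both sides have $s=p$) and the rest; it bounds the fully-formed count by $|E|/p$ via the same increment/decrement counting you use for leaves of the saturated subtree, and then handles \emph{all} remaining combinations with a single credit argument: give each edge $2p-1$ credits, and whenever $s(G_1)<p$ deduct $s(G_2)$ from every edge of $G_1$ (symmetrically for $G_2$), showing via the invariant ``total deducted $\le$ current antichain size'' that no edge is overdrawn. Your uncapped telescoping is the same invariant in disguise, but you charge only the $s(G_1)$ edges of a maximum antichain rather than all edges of the smaller side; this is tidier but forces you to treat the capped-but-not-fully-formed combinations (your Types~A and~B) separately, which you then do via the saturated-subtree structure (leaf count, branching-node count, and disjointness of maximal unsaturated siblings). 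The paper's version is shorter because its credit scheme absorbs Types~A and~B automatically; your version, in exchange, makes the role of the saturated subtree explicit and yields slightly sharper constants in the uncapped part.
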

\begin{proof}
  Call a parallel combination between two components $G_1$ and $G_2$
  \defn{fully-formed} if $s(G_1) = s(G_2) = p$. We claim that there
  are at most $O(|E| / p)$ fully-formed parallel combinations in the
  recursive construction of~$G$.  Consider the recursive construction
  of $G$ from edges via series and parallel combinations.  Each
  fully-formed parallel combination reduces the total number of
  components $H$ satisfying $s(H) = p$ by one. On the other hand, the
  number of components satisfying $s(H) = p$ can only be increased
  when two components $H_1, H_2$ satisfying $s(H_1), s(H_2) < p$ are
  combined to form a new component $H$ satisfying $s(H) = p$.  The
  total number of such combinations is at most $|E| / p$, since each
  such $H$ absorbs at least $p$ edges.  Since the number of components
  satisfying $s(H) = p$ is incremented at most $|E| / p$ times, it can
  also be decremented at most $|E| / p$ times, which limits the number
  of fully-formed parallel combinations to $|E| / p$.

  Using the bound on the number of fully-formed parallel combinations,
  we have that
  \begin{align*}
  \sum_{(G_1, G_2) \in \mathcal{F}} s(G_1) \cdot s(G_2)
    \le O\left(\frac{|E|}{p} \cdot p^2 \right) \le O(|E| \cdot p),
  \end{align*}
  where $\mathcal{F}$ is the set of fully-formed parallel combinations.

  To complete the proof of the lemma, it suffices to show 
  \begin{equation}
    \sum_{(G_1, G_2) \in \overline{\mathcal{F}}} s(G_1) \cdot s(G_2) \le O(|E| \cdot p),
    \label{eqnonfullamortizedcosts}
  \end{equation}
  where $\overline{\mathcal{F}}$ is the set of non-fully formed
  parallel combinations.

  We prove \eqref{eqnonfullamortizedcosts} with an amortization
  argument.  Consider the recursive construction of $G$ from edges via
  series and parallel combinations.
  Before beginning the combinations, we assign $2p - 1$ credits to
  each edge $e \in E$.  Every time two components $G_1$ and $G_2$ are
  combined in parallel and $G_1$ satisfies $s(G_1) < p$, we deduct
  $s(G_2)$ credits from each edge in~$G_1$. Similarly, if $s(G_2) <
  p$, we deduct $s(G_1)$ credits from each edge in $G_2$.  Note that
  if both $s(G_1) < p$ and $s(G_2) < p$, then we deduct credits from
  the edges in both components.

  The number of credits charged for each non-fully-formed parallel
  combination is at least $s(G_1) \cdot s(G_2)$.  Thus the total
  number of credits deducted from all edges over the course of the
  construction of $G$ is at least the left side of
  \eqref{eqnonfullamortizedcosts}. In order to prove
  \eqref{eqnonfullamortizedcosts}, it suffices to show that every edge
  still has a non-negative number of credits after the construction of
  $G$.

  Consider an edge $e \in E$ as $G$ is recursively constructed. Define
  $H_t$ to be the component containing $e$ after $t$ steps in the
  construction, $c_t$ to be the total amount of credit deducted from
  $e$ in the first $t$ steps, and $a_t$ to be the size of the largest
  antichain in $H_t$. We claim as an invariant that $c_t \le
  a_t$. Indeed, whenever $r = c_t - c_{t - 1}$ credits are deducted
  from $e$ during some step $t$, the parallel combination during that
  step also increases $a_t$ to be at least $r$ larger than
  $a_{t - 1}$.

  Since $s(H_t) = \min(a_t, p)$, the invariant tells us that whenever
  $e$ is in a component $H_t$ with $s(H_t) < p$, the total amount
  $c_t$ deducted from $e$ so far must satisfy $c_t < p$. Prior to the
  step $t$ in which $s(H_t)$ finally becomes $p$, the total amount
  deducted from $e$ is at most $p - 1$. During the step $t$ when
  $s(H_t)$ becomes $p$, at most $p$ credits can be deducted from
  $e$. And after the step $t$ when $s(H_t)$ becomes $p$, no more
  credits will ever be deducted from $e$. Thus the total deductions
  from $e$ sum to at most $2p - 1$, as desired.
\end{proof}


\section{An Online (Memory-Efficient) Algorithm}\label{sec:mem}

The \proc{ExactOff} algorithm in \secref{exact} computes $\HH_p(G)$ by
considering the construction of a computation DAG $G$ using only
series and parallel combinations. Although in principle any
series-parallel DAG can be constructed using only these combinations,
doing so in an online fashion (as the parallel program executes) can
require substantial memory overhead. In particular, parallel programs
implemented in Cilk implicitly contain a third primitive way of
combining components: multi-spawn combinations (see Figure
\ref{figmultispawns}). A multi-spawn combination corresponds with all
of the child spawns (i.e., \CilkSpawn statements) of a thread that
rejoin at a single synchronization point (i.e., at a \CilkSync).

  \begin{figure}
    \begin{center}
      \includegraphics[scale = .4]{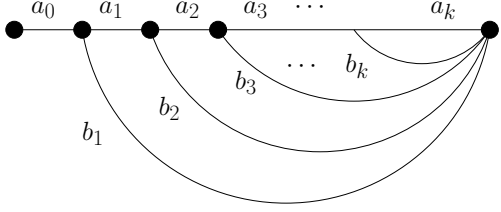}
    \end{center}

    \caption{A multi-spawn combination. The components
      $a_0, \ldots, a_k$ and $b_1, \ldots, b_k$ are combined into a
      single component. If executed on a single processor in Cilk, the
      order of execution would be
      $a_0, b_1, a_1, b_2, a_2, \ldots, b_k, a_k$.}
    \label{figmultispawns}
  \end{figure}

  When a multi-spawn combination is executed on a single processor,
  the execution traverses the components in the order $a_0, b_1, a_1,
  b_2,a_2, \ldots, b_k, a_{k}$. If one wishes to use the recursions
  from \secref{exact} in order to compute $R_G$ for a multi-spawn
  combination, then one must store the recursively computed values for
  each of $a_0, b_1, a_1, \ldots, a_k$ before any series or parallel
  combinations can be performed. After computing the values,
  one can then combine $a_{k}$ and $b_k$ in parallel, combine this
  with $a_{k - 1}$ in series, combine this with $b_{k - 1}$ in
  parallel, and so on.

  When $k$ is large, storing $\Theta(k)$ recursive values at a time
  can be impractical (though in total the multiplicative memory
  overhead of \proc{ExactOff} will still be bounded by $O(p)$ times
  the span of the parallel program). If one could instead design a
  recursion in which each multi-spawn combination could be performed
  using $O(p)$ space, then the recursive algorithm would be guaranteed
  to use no more than $O(p \cdot D)$ space, where $D$ is the maximum
  stack depth of the parallel program in Cilk.

  \ifconference
  In the extended version of the paper \cite{arxiv}, we present the \proc{ExactOn}
  \else
  Appendix \ref{secappendixonlineexact} presents the \proc{ExactOn}
  \fi
  algorithm, which implements this alternative recursion. The
  amortized analysis in \secref{exact} fails to carry over to
  \proc{ExactOn}, because the work in the new algorithm can no longer
  be directly charged to the growth of components. Instead, we employ
  a more sophisticated amortized analysis in which components of the
  graph are assigned sets of leader vertices, and the work by the
  algorithm is charged to the leader vertices in such a way so that no
  vertex receives a charge of more than~$O(p)$.

  \section{Online Approximation in Linear Time}\label{sec:approx}

  \tbsnote{TODO: Give names to these algorithms.}

  This section considers the approximate threshold version of the
  $p$-processor memory high-water mark problem. In particular, we give
  a linear-time online algorithm \proc{ApproxOn} that processes a
  computation DAG $G=(V,E)$ and returns a boolean with the following
  guarantee: a return value of $0$ guarantees that $\HH_p(G) \le M$,
  while a return value of $1$ guarantees that $\HH_p(G) > M / 2$.

  \proc{ApproxOn} will compute the high-water mark over a special
  class of antichains that satisfy a certain property that we call
  \defn{stripped robustness}. Intuitively, the stripped robustness
  property requires that every edge $e$ in the antichain contributes a
  substantial amount (at least $M / 2p$) to the antichain's water
  mark. The algorithm solves the approximate threshold problem by
  computing the \emph{infinite-processor} water mark over all stripped
  robust antichains, and then inferring from this information about
  $\HH_p(G)$.

  \secref{strippedrobust} defines what it means for an antichain to be
  stripped robust and proves the correctness of the \proc{ApproxOn}
  algorithm.  \secref{strippedrobustcompute} describes an online
  recursive procedure for computing the quantity $h$ needed by the
  algorithm in linear time $O(|E|)$. The \proc{ApproxOn} algorithm
  uses space at most $O(D)$ where $D$ is the maximum stack depth
  during an execution of the parallel program being analyzed. A
  simpler offline algorithm is also given in \ifconference the
  extended version of the paper \cite{arxiv}.  \else Appendix
  \ref{secappendixsimpleapprox}.  \fi

  \subsection{Stripped Robust Antichains}\label{sec:strippedrobust}

  This section defines a special class of antichains that we call
  stripped robust. We prove that, by analyzing stripped robust
  antichains with arbitrarily many processors, we can deduce
  information about $\HH_p(G)$.

  An antichain $A$ is \defn{stripped robust} if it satisfies two
  requirements:
  \begin{itemize}
  \item \textbf{Large Local Contributions of Edges: }We define the
    \defn{local contribution} $L_A^\bullet(x_i)$ of each edge $x_i \in
    A$ to the water mark $W(A)$ to be the value
    $W(A) - W(A \setminus \{x_i\})$. In order for $A$ to be a stripped
    robust antichain, each $x_i$ must satisfy $L_A^\bullet(x_i) >\frac{M}{2p}$.
  \item \textbf{Large Edge Contributions of Non-Critical Components:
  }For each multi-spawn combination $a_0, b_1, a_1, \ldots, a_k$ in
    $G$, if the component $b_i$ contains at least one $x_i$, and if
    $a_i \cup b_{i + 1} \cup \cdots \cup a_k$ contains at least one
    other $x_j$, then we call $b_i$ a \defn{non-critical}
    component. Define the \defn{local contribution} of $b_i$ to be
    $L_A^\bullet(b_i) = W(A) - W(A \setminus b_i)$, the reduction in
    water mark obtained by removing from $A$ the edges also contained
    in $b_i$. In order for $A$ to be a stripped robust antichain, each
    non-critical component $b_i$ must satisfy $L_A^\bullet(b_i) > \frac{M}{2p}$.
  \end{itemize}

  The \defn{$p$-processor robust memory high-water mark}
  $\HH_p^\bullet(G)$ is defined to be
$$\HH^\bullet_p(G) = \max_{A \in \mathcal{S}, \ |A| \le p} W(x_1,
  \ldots, x_q),$$ where $\mathcal{S}$ is the set of stripped robust
  antichains in $E$.

The first step in our approximate-threshold algorithm \proc{ApproxOn}
will be to compute the infinite-processor robust memory high-water
mark $\HH_\infty^\bullet(G)$. Then, if $\HH_\infty^\bullet(G) \le M /
2$, our algorithm returns 0, and if $\HH_\infty^\bullet(G) > M / 2$,
our algorithm returns 1.

Computing $\HH_\infty^\bullet(G)$ can be done online with constant
overhead using a recursive algorithm described in Section
\ref{sec:strippedrobustcompute}. The computation is made significantly easier,
in particular, by the fact that it is permitted to consider the
infinite-processor case rather than restricting to $p$ processors or
fewer.

On the other hand, the fact that $\HH_\infty^\bullet(G)$ should tell
us anything useful about $\HH_p(G)$ is non-obvious. In the rest of
this section, we will prove the following theorem, which implies the
correctness of the \proc{ApproxOn} algorithm:

  \begin{theorem}  If $\HH_\infty^\bullet(G) \le M / 2$, then $\HH_p(G)
      \le M$, and if $\HH_\infty^\bullet(G) > M / 2$, then $\HH_p(G) >
      M / 2$.
    \label{thmalgcorrect2}
  \end{theorem}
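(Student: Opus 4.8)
The plan is to prove the two implications separately, since they are logically distinct. The second implication is the easy direction: if $\HH_\infty^\bullet(G) > M/2$, I want to conclude $\HH_p(G) > M/2$. The key observation here is that $\HH_\infty^\bullet(G)$ is a maximum of water marks $W(A)$ over stripped robust antichains $A$ of unbounded length. I would argue that any stripped robust antichain witnessing a large water mark can be replaced by a \emph{short} antichain (of length at most $p$) whose water mark is still greater than $M/2$. The intuition is that in a stripped robust antichain each edge's local contribution exceeds $M/2p$, so if the antichain had more than $p$ edges, the sum of local contributions alone would exceed $M/2 \cdot (1/p) \cdot p = M/2$, which already gives what we want; and if it has at most $p$ edges, it directly participates in the maximum defining $\HH_p(G)$. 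I would need to be careful that local contributions to the water mark, summed over edges, lower-bound $W(A)$ (or at least exceed $M/2$); this is where the stripped robustness definition does real work.

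The first implication is the harder direction: assuming $\HH_\infty^\bullet(G) \le M/2$, I want to show $\HH_p(G) \le M$. Equivalently, I must bound the water mark of an \emph{arbitrary} antichain $A$ of length at most $p$ (not necessarily stripped robust) by $M$, given that all stripped robust antichains have water mark at most $M/2$. The strategy is a \emph{stripping} argument that converts an arbitrary antichain into a stripped robust one while controlling how much water mark is lost. First I would take the worst-case antichain $A$ achieving $\HH_p(G)$. Then I would iteratively remove from $A$ any edge $x_i$ whose local contribution $L_A^\bullet(x_i)$ fails to exceed $M/2p$, and similarly remove edges contained in any non-critical component $b_i$ whose local contribution fails the threshold. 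Each such removal decreases the water mark by at most $M/2p$. Since $|A| \le p$ and each multi-spawn combination contributes a bounded number of non-critical removals, the total water mark removed is at most $p \cdot (M/2p) = M/2$ (possibly with a constant factor I would have to track carefully). The resulting antichain is stripped robust, so its water mark is at most $\HH_\infty^\bullet(G) \le M/2$; adding back the at most $M/2$ that was stripped gives $W(A) \le M$.

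The main obstacle, and where I expect the argument to require the most care, is controlling the cumulative loss during the stripping process. Local contributions are defined relative to the \emph{current} antichain, so they change as edges are removed; $L_A^\bullet(x_i)$ is not the same as $L_{A'}^\bullet(x_i)$ after some other edge has been stripped. I would need a potential/telescoping argument guaranteeing that the sum of the local contributions actually removed telescopes correctly into the total drop in water mark, i.e.\ that $W(A) - W(A_{\mathrm{final}})$ equals the sum of the per-step drops, each bounded by $M/2p$. The subtlety is that removing one edge can change whether another edge or component now violates the threshold, and whether a component is non-critical can itself change as edges leave the antichain. I would also need to verify that the stripping terminates in a genuinely stripped robust antichain (no surviving edge or non-critical component still violates its threshold) and that the \emph{count} of removal steps is bounded --- at most one per surviving-versus-removed edge for the edge condition, and a bounded number per multi-spawn combination for the component condition. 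Pinning down this count so that the total loss stays within $M/2$ (rather than a larger multiple) is the crux; a clean induction on the structure of $A$ or on the number of remaining edges, tracking the invariant that each step drops the water mark by at most $M/2p$, is the natural tool.
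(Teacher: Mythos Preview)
You have the relative difficulty of the two directions reversed, and there is a real gap in the direction you call ``easy.''

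For the first implication (the stripping argument), your worry about bounding the number of removal steps is unfounded: each step removes either a single edge $x_i$ with $L_A^\bullet(x_i)\le M/2p$ or \emph{all} edges of $A$ lying in some non-critical component $b_i$ with $L_A^\bullet(b_i)\le M/2p$. Either way at least one edge leaves the antichain, so starting from $|A|\le p$ there are at most $p$ steps, and the water mark drops by at most $p\cdot(M/2p)=M/2$. No separate accounting ``per multi-spawn combination'' is needed; the paper's proof is exactly this short.

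The genuine gap is in the second implication. You write that if a stripped robust antichain $A$ has more than $p$ edges, ``the sum of local contributions alone would exceed $M/2$, which already gives what we want.'' It does not: that sum bounds $W(A)$ from below, but we already knew $W(A)>M/2$; what we need is a $p$-edge antichain $B$ with $W(B)>M/2$. Taking $B=(x_1,\dots,x_p)$ does not obviously work, because $W(B)$ is \emph{not} $\sum_{i\le p} L_A^\bullet(x_i)$: local contributions are defined relative to $A$, and restricting to $B$ reshuffles which predecessors are ``shared'' and which parallel components become companions. The paper decomposes $W(B)$ as $\sum_{i\le p} S_i$ (where $S_i\ge L_A^\bullet(x_i)>M/2p$) plus contributions from new companion components (each positive by definition) plus $\sum_{e\in P} t(e)$ for a certain set $P$ of shared predecessors. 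The crux is showing $P$ is a downset so that the downset-non-negativity property forces $\sum_{e\in P} t(e)\ge 0$. You never invoke this property, and without it the argument does not close.
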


  It turns out that Theorem \ref{thmalgcorrect2} remains true even if
  the second requirement for stripped robust antichains is removed
  (i.e. that non-critical components make large contributions). In
  fact, removing the second requirement (essentially) gives the notion
  of a \defn{robust antichain} used in
  \ifconference
  the extended paper \cite{arxiv}
  \else
  Appendix \ref{secappendixsimpleapprox}
  \fi
  in designing an \emph{offline}
  algorithm for the same problem. As we shall see in
  \secref{strippedrobustcompute}, the second requirement results in
  several important structural properties of stripped robust
  antichains, making an \emph{online} algorithm
  possible. The structural properties enable a recursive
  computation of $\HH_\infty^\bullet$ to handle multi-spawn
  combinations in a memory efficient fashion.

  Our analysis begins by comparing
  $\HH_p^\bullet(G)$ to $\HH_p(G)$:
  \begin{lemma} $\HH_p^\bullet (G) \ge \HH_p(G) -
    \frac{M}{2}.$
    \label{lemremovenonrobust2}
  \end{lemma}
  \begin{proof} Consider an antichain $A_1 = (x_1, \ldots, x_q)$,
    with $q \le p$, that is not stripped-robust. We wish to construct
    a stripped robust antichain $B$ satisfying $W(B) \ge W(A) - M /
    2$.
    
    Then there must either be an edge $x_i \in A_1$ satisfying
    $L_{A_1}^\bullet(x_i) \le \frac{M}{2p}$ or a non-critical component
    $b_i$ satisfying $L_{A_1}^\bullet(b_i) \le \frac{M}{2p}$. Define an
    antichain $A_2$ obtained by removing either the single edge $x_i$
    from $A_1$ (in the case where such an $x_i$ exists) or all of the
    edges in $A_1 \cap b_i$ from $A_1$ (in the case where such a $b_i$
    exists). The antichain $A_2$ contains at least one fewer edges than does $A_1$, and satisfies
    $W(A_2) \ge W(A_1) - \frac{M}{2p}.$
    
    If $A_2$ is still not stripped-robust, then we repeat the process
    to obtain an antichain $A_3$, and so on, until we obtain a
    stripped-robust antichain $A_k$.  Because the empty antichain is
    stripped-robust, this process must succeed.
    
    Since each antichain $A_i$ in the sequence is smaller than the
    antichain $A_{i - 1}$, the total number $k$ of antichains in the
    sequence can be at most $p + 1$. On the other hand, since $W(A_i) \ge W(A_{i - 1})
    - \frac{M}{2p}$ for each $i \ge 2$, we also have that
    $$W(A_k) \ge W(A_1) - (k - 1) \cdot \frac{M}{2p} \ge  W(A_1) - \frac{M}{2},$$
    as desired.
  \end{proof}

  Corollary \ref{cor:thmpart1} proves the first part of Theorem \ref{thmalgcorrect2}:

  \begin{corollary}
    If $\HH_\infty^\bullet(G) \le M / 2$, then $\HH_p(G) \le M$.
    \label{cor:thmpart1}
  \end{corollary}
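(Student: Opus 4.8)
The plan is to derive this corollary directly by chaining together Lemma~\ref{lemremovenonrobust2} with a trivial monotonicity observation and the hypothesis. The key realization is that Lemma~\ref{lemremovenonrobust2} already does all the combinatorial work: it relates the ordinary $p$-processor high-water mark to its stripped-robust counterpart via the bound $\HH_p^\bullet(G) \ge \HH_p(G) - M/2$. Rearranging gives $\HH_p(G) \le \HH_p^\bullet(G) + M/2$, so it remains only to bound $\HH_p^\bullet(G)$ by the hypothesized quantity $\HH_\infty^\bullet(G)$.

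The bridge between the $p$-processor and infinite-processor robust high-water marks is immediate from the definitions. Since $\HH_p^\bullet(G)$ maximizes $W(A)$ over stripped robust antichains of size at most $p$, while $\HH_\infty^\bullet(G)$ maximizes the same quantity over \emph{all} stripped robust antichains, the former is a maximum over a subset of the latter's index set. Hence $\HH_p^\bullet(G) \le \HH_\infty^\bullet(G)$. First I would state this monotonicity explicitly as the only new (but trivial) ingredient.

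With these two facts in hand, the conclusion follows by a one-line chain: under the hypothesis $\HH_\infty^\bullet(G) \le M/2$, we obtain
$$\HH_p(G) \le \HH_p^\bullet(G) + \frac{M}{2} \le \HH_\infty^\bullet(G) + \frac{M}{2} \le \frac{M}{2} + \frac{M}{2} = M.$$
I do not anticipate any genuine obstacle here, as the corollary is a formal consequence of the preceding lemma; the substantive content lives entirely in Lemma~\ref{lemremovenonrobust2} (the iterative edge- and component-removal argument that loses only $M/2$ in water mark). The only point requiring a moment of care is confirming that $\HH^\bullet$ is monotone in the processor bound, which is transparent once one writes out the two maxima over their respective antichain sets.
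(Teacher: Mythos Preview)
Your proposal is correct and matches the paper's approach exactly: the paper treats this corollary as an immediate consequence of Lemma~\ref{lemremovenonrobust2} combined with the trivial monotonicity $\HH_p^\bullet(G) \le \HH_\infty^\bullet(G)$, and indeed gives precisely this one-line argument for the analogous offline statement (Corollary~\ref{correturn0}).
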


  The second half of Theorem \ref{thmalgcorrect2} is given by Lemma
\ref{lemreturn2}:
\begin{lemma} If $\HH_\infty^\bullet(G) > M / 2$, then
$\HH_p(G) > M / 2$.
\label{lemreturn2}
\end{lemma}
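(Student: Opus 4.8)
The plan is to unfold the definition of $\HH_\infty^\bullet(G)$ into a concrete witness and then shrink that witness. Since $\HH_\infty^\bullet(G) > M/2$, there is a stripped robust antichain $A = (x_1, \ldots, x_q)$ with $W(A) > M/2$, but $q$ may be arbitrarily large. Because $\HH_p(G)$ is the maximum of $W$ over \emph{all} antichains of size at most $p$, it suffices to exhibit a single antichain $B$ with $|B| \le p$ and $W(B) > M/2$; then $\HH_p(G) \ge W(B) > M/2$. If $q \le p$ we are already done by taking $B = A$ (no robustness needed), so the interesting case is $q > p$.

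When $q > p$, I would extract $B$ as a sub-antichain of $A$ of size exactly $p$ (every subset of an antichain is again an antichain, so any such $B$ is admissible). The driving idea is that stripped robustness makes each edge ``worth'' more than $M/(2p)$: by definition every $x_i$ satisfies $L_A^\bullet(x_i) = W(A) - W(A \setminus \{x_i\}) > \frac{M}{2p}$. I would build $B$ one edge at a time, starting from $B_0 = \emptyset$ and setting $B_j = B_{j-1} \cup \{x_{i_j}\}$, and track the increment $W(B_j) - W(B_{j-1})$. Since the empty antichain dominates no edges and suspends no companion components, $W(\emptyset) = 0$, so $W(B_p) = \sum_{j=1}^{p}\bigl(W(B_j) - W(B_{j-1})\bigr)$; it would then be enough to choose the insertion order so that each of the first $p$ increments exceeds $\frac{M}{2p}$, yielding $W(B_p) > p \cdot \frac{M}{2p} = M/2$, as required.

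The main obstacle is to guarantee that the incremental value of inserting an edge is at least its local contribution in the full antichain, i.e.\ that the marginal $W(T \cup \{x\}) - W(T)$ does not increase as $T$ grows toward $A \setminus \{x\}$ (a diminishing-returns property, so that the build-up marginals dominate $L_A^\bullet(x)$). This is \emph{not} automatic: inserting $x$ raises $W_1$ by $m(x)$ plus the total $t$-weight of the edges below $x$ that no edge of $T$ already dominates, and alters $W_2$ through newly suspended companion components; because edge totals $t(e)$ may be negative, these increments need not be monotone in $T$. Controlling these signed contributions is the heart of the proof, and the tool I would reach for is the \emph{downset non-negativity property}, which forces every downset to have non-negative total and thereby constrains how much the negative $t(e)$ can erode an increment. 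I would combine this with the series-parallel structure to fix an insertion order (a linear extension of the edge poset) for which each of the first $p$ increments provably dominates the corresponding $L_A^\bullet(x_{i_j}) > \frac{M}{2p}$. I expect this monotonicity-of-marginals step, specialized to series-parallel posets with signed edge totals, to demand the most care; the remaining telescoping arithmetic is immediate.
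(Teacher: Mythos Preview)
Your case split is right and you correctly flag downset non-negativity as the key structural tool, but the heart of your plan---the diminishing-returns inequality $W(B_j)-W(B_{j-1}) \ge L_A^\bullet(x_{i_j})$ for some insertion order---is left as an expectation, and it is not clear any ordering makes it true.  The water mark is \emph{not} submodular: enlarging $T$ shrinks the set of predecessors newly covered by $x$, but those lost predecessors may carry negative $t(e)$, so the marginal can \emph{increase}; the companion-component term $W_2$ interacts with this in equally unsigned ways.  Downset non-negativity only controls sums over downsets, and the set of ``newly covered'' predecessors when inserting $x$ into $T$ is in general not a downset.  So the monotonicity-of-marginals step is a genuine gap, not just bookkeeping.

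The paper sidesteps telescoping entirely.  It takes \emph{any} $p$ edges $B=(x_1,\ldots,x_p)\subseteq A$ and decomposes $W(B)$ directly.  For each $x_i$ it isolates the edges $P_i$ that lie below $x_i$ and below no other element of $A$, and the companion components $\mathcal{T}_i$ of $A$ whose companion status depends on $x_i$; setting $S_i=m(x_i)+\sum_{T\in\mathcal{T}_i}\sum_{e\in T}t(e)+\sum_{e\in P_i}t(e)$, one has $S_i\ge L_A^\bullet(x_i)>M/(2p)$.  The remaining pieces of $W(B)$ are (i)~companion components of $B$ not in any $\mathcal{T}_i$, which contribute non-negatively by the definition of companion, and (ii)~the set $P$ of predecessors of $B$ lying below at least two elements of $A$, which is an intersection of unions of principal downsets and hence itself a downset, so $\sum_{e\in P}t(e)\ge 0$.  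Thus $W(B)\ge\sum_i S_i>M/2$ with no ordering and no incremental build; the idea you are missing is that the ``shared'' predecessor mass left over after peeling off the $S_i$'s forms a downset, which is exactly where downset non-negativity applies.
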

\begin{proof} Since
  $\HH_\infty^\bullet(G) > M / 2$, there are two cases:

  \vspace{2pt}
  \noindent \textbf{Case 1: There is a stripped robust antichain
    $A = (x_1, \ldots, x_q)$ with $q \le p$ such that $W(A) > M /
    2$. }In this case, we trivially get that $\HH_p(G) > M / 2$.

  \vspace{2pt}
  \noindent \textbf{Case 2: There is a stripped robust antichain
    $A = (x_1, \ldots, x_q)$ with $q > p$ such that $W(A) > M /
    2$. }This case is somewhat more subtle, since the large number of
  edges in the antichain $A$ could cause $W(A)$ to be much larger than
  $\HH_p(G)$. We will use the stripped robustness of $A$ in order to
  prove that the potentially much smaller antichain
  $B = (x_1, \ldots, x_p)$ still has a large water mark
  $W(B) > \frac{M}{2}$.

  Note that we cannot simply argue that $L_B^\bullet(x_i) \ge
  L_A^\bullet(x_i)$ for each $i \in \{1, \ldots, p\}$. In particular,
  the removal of edges from $A$ may significantly change the local
  contributions of the remaining edges. Nonetheless, by exploiting the
  downset-non-negativity property we will still prove that $W(B) >
  \frac{M}{2}$.

  For a given edge $x_i \in A$, define $\mathcal{T}_i$ to be the
  set of companion components $T$ to the antichain $A$ such that
  $T$ is not a companion component to $A \setminus \{x_i\}$. Define
  $P_i$ to be the set of edges $e$ such that $e < x_i$ but $e \not< x_j$
  for any other $x_j \in A$. Then the local contribution
  $L_A^\bullet(x_i)$ of $x_i$ to $A$ satisfies,
  \begin{equation}
    L_A^\bullet(x_i) \le m(x_i) + \sum_{T \in \mathcal{T}_i}
    \sum_{e \in T} t(e) + \sum_{e \in P_i} t(e).
    \label{eqlocalcontrib}
  \end{equation}
  (This would be an exact
  equality if not for the fact that removing $x_i$ from $A$ can also
  introduce \emph{new} companion components, which in turn reduces
  $L_A^\bullet(x_i)$.)

  Let $S_i$ denote the quantity on the right side of
  \eqref{eqlocalcontrib}. Since $A$ is a stripped robust antichain, $S_i \ge M / 2p$ for each $i$.

  Now let us consider the water mark $W(B)$. For each $i \in \{1,
  \ldots, p\}$, each component $T \in \mathcal{T}_i$ is a companion
  component to $B$, just as it was to~$A$.  Moreover, each edge $e
  \in P_i$ continues to contribute $t(e)$ to the water mark of
  $B$. Define $\mathcal{T}$ to be the set of companion components
  to $B$ that are not in any of $\mathcal{T}_1, \ldots,
  \mathcal{T}_p$, and $P$ to be the set of edges $e$ satisfying $e <
  x_i$ for some $i \in \{1, \ldots, p\}$ but $e \not\in P_1 \cup
  \cdots \cup P_p$. Then the water mark $W(B)$ can be written as

  \begin{align*}
    W(B) & = \sum_{i = 1}^p S_i + \sum_{T \in \mathcal{T}}
    \sum_{e \in T} t(e) + \sum_{e \in P} t(e) \\
        & \ge M/2 + \sum_{T \in \mathcal{T}}
    \sum_{e \in T} t(e) + \sum_{e \in P} t(e).
  \end{align*}

  Since each $T \in \mathcal{T}$ satisfies $\sum_{e \in T}t(e) > 0$
  (or else $T$ would not be a companion component to $B$), 
  $$W(B) \ge M / 2 + \sum_{e \in P} t(e).$$

  In order to complete the proof that $W(B) > M / 2$, it suffices by
  the downset-non-negativity property to show that $P$ is a
  downset. Notice that $P$ can be rewritten as
  \begin{align*}
    P & = \{e \in E \mid e < x_i \text{ for some } i = 1, \ldots, p\} \\
    & \cap \{e \in E \mid e < x_i \text{ and } e < x_j \text{ for some } x_i \neq x_j \in A\} \\
    & = \left(\bigcup_{i = 1}^p \{e < x_i\}\right) \cap \left(\bigcup_{x_i \neq x_j \in A} \{e < x_i\} \cap \{e < x_j\} \right).
  \end{align*}

  Since the downset property is closed under unions and intersections,
  it follows that $P$ is a downset.
  \end{proof}

\subsection{Recursively Computing $\HH_{\infty}^\bullet(G)$}\label{sec:strippedrobustcompute}

This section discusses a recursive algorithm for computing
$\HH_\infty^\bullet(G)$ in linear time and in an online fashion. This
algorithm can then be used within \proc{ApproxOn} to obtain a
linear-time online algorithm for the approximate threshold problem.
The algorithm treats $G$ as being recursively constructed via series
and multi-spawn combinations. For each multi-spawn combination, we
assume we are recursively given the computed values for $a_0, b_1,
a_1, \ldots, a_k$, one after the other.  Because $k$ may be large, the
recursion is not permitted to store these values. Instead it stores a
constant amount of metadata that is updated over the course of the
multi-spawn combination.

Finding a water-mark-maximizing stripped robust antichain $A$ in a
multi-spawn combination $C = (a_0, b_1, \ldots, a_k)$ is complicated
by the following subtlety: if we choose to include an edge in one of
the $b_j$'s, then this may reduce the local contribution of any edges
included in later $a_j$'s and $b_j$'s, resulting in those edges being
unable to be included in the antichain. Therefore, greedily
adding edges to the antichain $A$ as we recursively execute $a_0, b_1,
\ldots, a_k$ may not result in an optimal stripped robust antichain.

The second requirement for stripped robust antichains (that
non-critical components must make large edge contributions) is
carefully designed to eliminate this problem. It allows
us to prove the following lemma, which characterizes how non-critical
components behave in water-mark-maximizing antichains $A$ that
contain multiple edges.\footnote{In fact, the same lemma would be true
  if we removed the restriction that $A$ contain multiple edges. The
  restriction is necessary for our applications of the lemma,
  however.}

\begin{lemma}
  Consider a multi-spawn combination $C$ with components $a_0, b_1,
  a_1, \ldots, a_{k}$. Consider $i \in \{1, \ldots, k\}$, and suppose
  $A$ is a stripped robust antichain in $C$ that (1) contains multiple
  edges; (2) contains at least one edge in $a_i, b_{i + 1}, \ldots,
  a_{k}$; and (3) achieves the maximum water mark over all stripped robust
  antichains in $C$ that contain multiple edges.

 Let $t(b_i) = \sum_{e \in b_i} t(e)$, and let $m(b_i)$ denote the
 water mark of the best stripped robust antichain in $b_i$. (Note that
 $m(b_i)$ considers only the subgraph $b_i$.)
 \begin{itemize}
 \item If $t(b_i) > 0$ and $m(b_i) \le t(b_i) + \frac{M}{2p}$, then
   $b_i$ is a companion component of $A$.
 \item If $t(b_i) \le 0$ and $m(b_i) \le \frac{M}{2p}$, then $b_i$ is
   not a companion component of $A$ and  does not contribute
   any edges to $A$.
 \item If $m(b_i) > \max(0, t(b_i)) + \frac{M}{2p}$, then $A$ restricted
   to $b_i$ is a stripped robust antichain with water mark $m(b_i)$.
 \end{itemize}
 \label{lemcandidateantichains}
\end{lemma}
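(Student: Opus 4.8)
The plan is to exploit the series-parallel structure of the multi-spawn combination $C$. In its recursive construction, $b_i$ is combined in parallel with the component $Q_i$ consisting of its continuation $a_i, b_{i+1}, a_{i+1}, \ldots, a_k$, while the prefix chain $a_0, \ldots, a_{i-1}$ lies below all of $P_i := b_i \parallel Q_i$ and the earlier spawns $b_1, \ldots, b_{i-1}$ lie in parallel with $P_i$. The first thing I would establish is a clean decomposition. Condition (2) guarantees $A$ has an edge in $Q_i$; since every prefix edge lies below every edge of $Q_i$ and every component containing $b_i$ also contains $Q_i$, the prefix is charged and the companion statuses of all components outside $b_i$ are fixed \emph{regardless} of what $A$ does inside $b_i$. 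Writing $W_{b_i}$ for the water mark computed inside the subgraph $b_i$, this yields
\[
W(A) = W^{-b_i}(A) + C_{b_i}(A), \quad\text{where}\quad C_{b_i}(A) = \begin{cases} W_{b_i}(A \cap b_i), & A \cap b_i \neq \emptyset,\\ \max(0, t(b_i)), & A \cap b_i = \emptyset,\end{cases}
\]
and $W^{-b_i}(A)$ depends only on the edges of $A$ outside $b_i$. The additivity here is just the fixed-antichain form of the parallel recursion of \secref{exact}, with the convention $W_{b_i}(\emptyset) = \max(0,t(b_i))$ encoding the suspended/companion case.

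Two consequences follow at once and dispatch cases (a) and (b). Since $A \setminus b_i$ still satisfies condition (2), the decomposition gives $L_A^\bullet(b_i) = W_{b_i}(A \cap b_i) - \max(0, t(b_i))$; and matching the internal and external local contributions of each edge of $A \cap b_i$ (again via the decomposition) shows $A \cap b_i$ is itself a stripped robust antichain inside $b_i$, so $W_{b_i}(A \cap b_i) \le m(b_i)$. In case (a), if $b_i$ held an edge of $A$ it would be non-critical (it has an $A$-edge and, by (2), so does $a_i \cup \cdots \cup a_k$), so stripped robustness of $A$ would force $L_A^\bullet(b_i) > \frac{M}{2p}$; but the formula together with $m(b_i) \le t(b_i) + \frac{M}{2p}$ gives $L_A^\bullet(b_i) \le m(b_i) - t(b_i) \le \frac{M}{2p}$, a contradiction. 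Hence $A \cap b_i = \emptyset$ and, as $t(b_i) > 0$, $b_i$ is a companion. Case (b) is identical with $\max(0,t(b_i)) = 0$, forcing $A \cap b_i = \emptyset$ and (as $t(b_i) \le 0$) no companion. I would stress that these two cases use only stripped robustness, not the maximality of $A$; the three cases are exhaustive and mutually exclusive, being determined by the values of $t(b_i)$ and $m(b_i)$.

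The real work is case (c), which I would prove by an exchange argument that does use maximality. Let $B^*$ be an optimal internal stripped robust antichain in $b_i$, so $W_{b_i}(B^*) = m(b_i) > \max(0,t(b_i)) + \frac{M}{2p}$. Because $A$ has no edge in the prefix $a_0 \cup \cdots \cup a_{i-1}$ (any such edge would be below the guaranteed $Q_i$-edge), replacing $A \cap b_i$ by $B^*$ produces a valid antichain $A'$ with multiple edges and $W(A') = W^{-b_i}(A) + m(b_i) \ge W(A)$, with equality exactly when $A \cap b_i$ already achieves $m(b_i)$. If $A'$ is stripped robust, maximality forces equality and the claim follows.

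The hard part, and the step I expect to be the main obstacle, is verifying that $A'$ is stripped robust. The edges of $B^*$ and the non-critical components inside $b_i$ carry their large contributions over intact (via condition (2) and the decomposition), and most edges of $A$ outside $b_i$ retain their old local contributions. The dangerous edges are later-part edges of $A$ that were the \emph{sole} anchor making $b_i$ a companion: after the swap, $b_i$ contributes $m(b_i)$ on its own, so such an edge can lose the $\max(0,t(b_i))$ it was previously credited and drop to local contribution $\le \frac{M}{2p}$. I expect to handle this by additionally dropping such an anchor edge and using the strict gap $m(b_i) - \max(0,t(b_i)) > \frac{M}{2p}$ to show the resulting antichain still strictly beats $W(A)$, while invoking downset-non-negativity on the newly exposed prefix to argue the removal does not cascade into destroying the $m(b_i)$ contribution itself. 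Controlling this interplay between the swap, condition (2), and the $\frac{M}{2p}$ slack is where the delicacy of the proof lies.
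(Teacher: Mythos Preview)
Your proposal follows essentially the same route as the paper: the additive decomposition $W(A) = W^{-b_i}(A) + C_{b_i}(A)$, the bound $L_A^\bullet(b_i) \le m(b_i) - \max(0,t(b_i))$ dispatching cases (a) and (b), and the exchange argument for case (c). The paper also splits case (c) into the subcases $A \cap b_i \neq \emptyset$ (handled directly by maximality) and $A \cap b_i = \emptyset$ (handled by the exchange), but your unified treatment is fine.

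The one substantive divergence is in how you close out the ``cascade'' step. You propose to invoke downset-non-negativity on the newly exposed prefix after dropping the anchor edge $x_j$. The paper does \emph{not} use downset-non-negativity here at all. Instead, it observes that the only way the swapped antichain $A'$ can fail stripped robustness is via a single edge $x_j$ that was the \emph{unique} edge of $A$ in $a_i \cup b_{i+1} \cup \cdots \cup a_k$; after deleting $x_j$ to form $A''$, the paper verifies stripped robustness of $A''$ by a direct comparison: when $B^*$ is a single edge $e$, removing $e$ from $A''$ and removing $x_j$ from $A$ yield the \emph{same} antichain $A \setminus \{x_j\}$, so $L_{A''}^\bullet(e) = W(A'') - W(A \setminus \{x_j\}) > W(A) - W(A \setminus \{x_j\}) = L_A^\bullet(x_j) > \tfrac{M}{2p}$. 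When $B^*$ has multiple edges, stripped robustness of $B^*$ inside $b_i$ transfers directly, and all other edges and non-critical components of $A''$ retain the local contributions they had in $A$. This is shorter and more self-contained than the downset-non-negativity route you sketch, and it sidesteps the need to control any ``newly exposed prefix'' at all.
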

 \begin{proof}

   Any edges that $b_i$ contributes to $A$ must form a stripped robust
   antichain in $b_i$. The water mark $s$ of that antichain
   within $b_i$ can be at most $m(b_i)$. It follows that
   $$L_A^\bullet(b_i) \le m(b_i) - \max(0, t(b_i)),$$ since the removal of the
   edges in $b_i$ from $A$ will have the effect of (a) reducing the
   water mark by $s$ and (b) introducing $b_i$ as a companion
   component to $A$ if $t(b_i) > 0$.

   Since $m(b_i) - \max(0, t(b_i)) \le \frac{M}{2p}$ in the first two
   cases of the lemma, $b_i$
   cannot contribute any edges to $A$ in these cases. This ensures that in the first
   case $b_i$ will be a companion component of $A$, and in the
   second case $b_i$ will neither be a companion component nor
   contribute any edges.

   The third case of the lemma is somewhat more subtle. Suppose that
   $m(b_i) > \max(0, t(b_i)) + \frac{M}{2p}$. We wish to show that $A$
   restricted to $b_i$ is a stripped robust antichain with water mark
   $m(b_i)$. If $A$ contains at least one edge in $b_i$, then since
   $A$ has maximum water mark over multi-edge stripped robust
   antichains in $C$, it must be that $A$ restricted to $b_i$ is
   stripped robust and has water mark $m(b_i)$, as desired.

   Suppose, on the other hand that $A$ contains no edges in $b_i$. We
   will show that $A$ does not achieve the maximum water mark over all
   stripped robust antichains in $C$ that contain multiple edges.
   Define $A'$ to be $A$ with the addition of edges in $b_i$ so that
   $A'$ restricted to $b_i$ is stripped robust and has water
   mark $m(b_i)$. Since $m(b_i) \ge \max(0, t(b_i)) + \frac{M}{2p}$,
   the water mark of $A'$ must be more than $\frac{M}{2p}$ greater
   than that of $A$.

   Since $A$ has maximum water mark, and $A'$ has a larger water mark,
   $A'$ must no longer be stripped robust. Notice, however, that the
   local contribution of $b_i$ in $A'$ is greater than $\frac{M}{2p}$,
   and the local contributions of the other non-critical components of
   $C$ in $A'$ are the same as in $A$. Thus the only way that $A'$ can
   no longer be stripped robust is if there is a single edge $x_j \in
   A'$ with local contribution at most $\frac{M}{2p}$ to $W(A')$. Note
   that $x_j \not\in b_i$, and thus in order so that $x_j$'s local
   contribution to $W(A)$ can differ from its local contribution to
   $W(A')$, $x_j$ must be the only edge from $A$ that is contained in
   any of the components $a_i, b_{i + 1}, \ldots, a_{k}$. Define $A''$
   to be $A'$ with the edge $x_j$ removed. Note that $A''$ has at
   least as many edges as did $A$ initially, and is thus still a
   multi-edge antichain.

   Since the local contribution $L^\bullet_{A'}(x_j)$ of $x_j$ to $A'$
   was at most $\frac{M}{2p}$, the water mark of $A''$ still exceeds
   that of $A$. We claim, however, that $A''$ is a stripped robust
   antichain, contradicting that fact that $A$ has maximum water mark
   out of all multi-edge stripped robust antichains. If $b_i$
   contains multiple edges in $A''$, then the fact that those edges
   form a stripped robust antichain when restricted to $b_i$, and that
   the other noncritical components and edges in $A''$ have the
   same local contributions to $A''$ as they did to $A$, ensure that
   $A''$ is a stripped robust antichain. If, on the other hand, $b_i$
   contains a single edge in $A''$, then the removal of that edge
   would reduce $W(A'')$ by at least as much as would have the removal
   of the edge $x_j$ from the original antichain $A$ (since both
   removals result in the same antichain).  Thus $L_{A''}^\bullet(x_i)
   \ge L_A^\bullet(x_j) > \frac{M}{2p}$, ensuring that $A''$ is a
   stripped robust antichain. Since $A''$ has a larger water mark than
   does $A$, we have reached a contradiction, completing the proof of the
   third case of the lemma.
 \end{proof}

Our algorithm for computing $\HH^\bullet_\infty(G)$ recursively computes
 three quantities for each component $C$ of the graph. 
 \begin{itemize}
 \item The total allocation and freeing work done in $C$,
   $$\memtotal = \sum_{e \in C} t(e).$$
\item The memory high-water mark  with one processor,
  $$\maxchain = \HH_1(C).$$
 \item The infinite-processor high-water mark restricted
   to stripped robust antichains containing \emph{more} than one edge:
  $$\maxrobust = \max_{A \in \mathcal{S}, \ |A| > 1} W(A),$$ where
   $\mathcal{S}$ is the set of stripped robust antichains in $C$.  If
   $C$ contains no such multi-edge antichains, then $\maxrobust =
   \NULL$, which is treated as $-\infty$.
 \end{itemize}

 The special handling in the recursion of antichains with only one
 edge (i.e., by $\maxchain$) is necessary because the local
 contribution of that edge $x$ is not yet completely determined until
 at least one other edge is added to the antichain. On the other hand,
 once a stripped robust antichain contains multiple edges, the local
 contribution of each edge is now fixed, even if we combine this
 antichain with other antichains as we recursively construct the
 graph. This allows for all multi-edge stripped robust antichains to
 be grouped together in the variable $\maxrobust$.

As a base case, for a component $C$ consisting of a single edge
 $e$, we initialize the variables as follows: $\memtotal = t(e)$,
 $\maxchain = m(e)$, and $\maxrobust = \NULL$. 

 When we combine two components $C_1$ and $C_2$ in series to build
 a new component $C$, we have,
   $$C.\memtotal = C_1.\memtotal + C_2.\memtotal,$$
 \begin{multline*}
 C.\maxchain = \\
 \max(C_1.\maxchain, C_1.\memtotal_{C_1} + C_2.\maxchain),
 \end{multline*}
 \begin{multline*}
   C.\maxrobust = \\ \max(C_1.\maxrobust, C_1.\memtotal +
   C_2.\maxrobust).
 \end{multline*}

 In the computations of $C.\maxchain$ and $C.\maxrobust$ we implicitly
 use the fact that every antichain in $C$ must either be in $C_1$ or
 in $C_2$. Moreover, the antichains in $C_2$ have water mark
 $C_1.\memtotal$ greater in $C$ than they did in $C_2$.

 Note that the computation of $C.\maxrobust$ would not be correct if
 $\maxrobust$ were also considering single-edge antichains. In
 particular, the local contribution of an edge in a single-edge
 antichain $A$ in $C_2$ will differ from the local contribution of the
 same edge in the same antichain in $C_1 \cup C_2$, allowing it to
 possibly form a stripped robust antichain in one but not the
 other. Because $\maxrobust$ considers only multi-edge antichains,
 however, this is not a problem.

The recursion for combining components in a multi-spawn combination
is more sophisticated. Consider a multi-spawn combination $C$ as in
Figure \ref{figmultispawns} with components $C_1 = a_0, C_2 = b_1,
C_3 = a_1, \ldots, C_{2k + 1} = a_k$. As our algorithm receives
information on each of $C_1, C_2, C_3, \ldots$, it must gradually
construct the best multi-edge stripped robust antichain in the
multi-spawn component. Lemma \ref{lemcandidateantichains} ensures that
this is possible, because the role that each $b_i$ plays in such an
antichain depends only on whether any additional edges will be
included from later components $a_i, b_{i + 1}, \ldots$, and not on
the specific properties of the components.

Nonetheless, the bookkeeping for the recursion is made subtle by the
handling of suspended components and other casework. We defer the
full recursion to
\ifconference
the extended version of the paper \cite{arxiv}.
\else
Appendix \ref{secmultispawnrecursion}.
\fi

\secput{eval}{Empirical evaluation}

This section discusses the implementation and evaluation of Cilkmem on a
suite of benchmark programs as well as on a large image processing pipeline
performing image alignment.

%
%

\subsection{Implementation}\label{sec:implementation}

We implemented Cilkmem as a CSI tool~\cite{SchardlDeDo17} written in C++ for
the Tapir compiler~\cite{SchardlMoLe17}. The following discussion describes how
these facilities are used to implement Cilkmem's algorithms for MHWM analysis.  

The Cilkmem CSI tool tracks the evolution of a program's series-parallel DAG by
inserting shadow computation before and after the instructions used by the
Tapir compiler to represent fork-join parallelism. The language constructs used
by the program-under-test to represent fork-join parallelism (e.g., Cilk's
\code{cilk_spawn} and \code{cilk_sync} keywords) are lowered to Tapir's
\textit{detach} and \textit{sync} instructions during compilation.  The CSI
framework provides instrumentation \emph{hooks} that enable a tool to insert
\emph{shadow computation} before and after instructions in the compiler's
intermediate representation of the program.

\begin{table*}[t]
\begin{tabular*}{\textwidth}{l@{\extracolsep{\fill}}lrrrrrrrrr}
 & & \multicolumn{7}{c}{$T_{\text{\textit{exact}}}/{T_1}$} & \\ \cmidrule[0.3pt]{3-10}  
\textit{Benchmark\enspace \enspace}& \textit{Input size} & \multicolumn{1}{c}{$\scriptstyle{p=32}$} & \multicolumn{1}{c}{$\scriptstyle{p=64}$} & \multicolumn{1}{c}{$\scriptstyle{p=128}$} & \multicolumn{1}{c}{$\scriptstyle{p=256}$} & \multicolumn{1}{c}{$\scriptstyle{p=512}$} & \multicolumn{1}{c}{$\scriptstyle{p=1024}$} & \multicolumn{1}{c}{$\scriptstyle{p=2048}$} & \multicolumn{1}{c}{$\scriptstyle{p=4096}$} & \multicolumn{1}{l}{${T_{\text{\textit{approx}}}}/{T_1}$} \\ \hline
strassen &  4096 x 4096 matrix
 & $1.00$ & $1.00$ & $1.00$ & $0.98$ & $1.02$ & $0.98$ & $0.98$ & $1.00$ & $0.99$ \\
nBody &  1,000,000 points
 & $1.00$ & $1.00$ & $1.00$ & $0.99$ & $0.99$ & $0.99$ & $1.00$ & $1.00$ & $1.00$ \\
lu &  4096 x 4096 matrix
 & $1.01$ & $1.02$ & $1.03$ & $1.02$ & $1.06$ & $1.61$ & $3.02$ & $5.77$ & $1.01$ \\
remDups &  100,000,000 integers & $1.03$ & $1.03$ & $1.03$ & $1.01$ & $1.02$ & $1.04$ & $1.58$ & $3.05$ & $1.02$ \\
dict &  100,000,000 integers
 & $1.11$ & $1.10$ & $1.12$ & $1.11$ & $1.10$ & $1.12$ & $1.66$ & $3.17$ & $1.10$ \\
ray &  small
 & $1.11$ & $1.11$ & $1.11$ & $1.09$ & $1.11$ & $1.21$ & $1.38$ & $2.74$ & $1.10$ \\
delaunay &  5,000,000 points
 & $1.16$ & $1.16$ & $1.17$ & $1.13$ & $1.13$ & $1.15$ & $1.16$ & $1.19$ & $1.16$ \\
nqueens &  13 x 13 board
 & $1.44$ & $1.46$ & $1.54$ & $1.66$ & $4.67$ & $7.80$ & $14.82$ & $28.53$ & $1.23$ \\
qsort &  50,000,000 elements
 & $3.63$ & $3.73$ & $4.07$ & $5.18$ & $12.75$ & $28.60$ & $53.89$ & $103.63$ & $2.61$ \\
cholesky &  2000 x 2000 matrix
 & $6.62$ & $6.86$ & $7.51$ & $10.08$ & $34.68$ & $59.23$ & $113.79$ & $219.18$ & $4.68$ \\
\hline
\end{tabular*}

\vspace{1ex}
\caption{Application benchmarks from the Cilkbench suite showing the 
overhead of Cilkmem over a single-threaded execution. The overhead is 
computed as the geometric-mean ratio of at least 5 runs with Cilkmem 
enabled and at least 5 runs of the un-instrumented program for each 
benchmark in the table.}
\vspace{-3ex}
\label{tab:cilkbench}
\end{table*}

Memory allocations and frees are tracked by Cilkmem using process-wide hooks
that intercept calls to the major allocation facilities provided by
\code{glibc} via library interpositioning~\cite[Ch. 7.13]{BryantOha15}. These
allocation facilities include \code{malloc}, \code{aligned_alloc},
\code{realloc} and \code{free}. While Cilkmem could use CSI's instrumentation
hooks to track memory allocations, the use of interpositioning allows Cilkmem
to capture calls to allocation functions performed in shared dynamic libraries
that may not have been compiled with instrumentation enabled.  Furthermore,
interpositioning makes it possible for Cilkmem to track the requested sizes of
allocations without the maintenance of an additional auxillary data structure
by prepending to each allocation a small \emph{payload} containing the size of
the allocated block of memory. This payload is retrieved at deallocation time
to determine how much memory has been freed. As an alternative to the
payload-based technique, Cilkmem can also be instructed to retrieve the size of
allocations using the Linux-specific function \code{malloc_usable_size}. The
difference between the two methods comes down to whether the allocation size
seen by Cilkmem is the \emph{requested} size or the \emph{usable} size
determined by the memory allocator, which is allowed to reserve more memory
than requested by the user. Special care is taken to ensure that the memory
activity of Cilkmem's instrumentation is properly distinguished from activity
originating from the program-under-test. 

Cilkmem separates its instrumentation and analysis logic into two separate
threads that communicate in a producer--consumer pattern. The producer thread
executes the program-under-test and generates records that keep track of the
allocation or deallocation of memory as well as the evolution of the
series-parallel structure of the program execution. These records are sent to
the consumer thread which executes either the exact or approximate MHWM
algorithm in a manner that is fidelitous with the descriptions of the online
MHWM algorithms in \secref{mem} and \secref{approx}. In order to maintain the
theoretical space-bounds of the online MHWM algorithms, the Cilkmem tool's
producer thread blocks in the rare case there is a backlog of unconsumed
records. Although it would be possible to implement the online MHWM algorithms
without the use of this producer--consumer pattern, such an approach can result
in increased instrumentation overhead due to, among other things, an increase
in instruction cache misses. 

In addition to computing the memory high-water mark, Cilkmem supports
a \emph{verbose} mode which provides the user with actionable
information for identifying the root cause of the memory high-water
mark. In verbose mode, Cilkmem constructs the full computation DAG,
annotates each edge (strand) with information about how much memory is
allocated within that strand, and outputs the resulting graph in a
graphical format. Furthermore, Cilkmem identifies the lines of code
responsible for allocations that contribute to the memory high-water
mark for a given $p$, and reports them to the user along with how many
bytes each line is responsible for.

Figure \ref{fig:example_output} shows the output genenerated by 
Cilkmem's verbose mode when run on the \emph{strassen} benchmark from the Cilkbench suite with a 
$4096\times 4096$ matrix as the input. The program's memory high-water mark 
increases by about $122$MB for each additional processor, and Cilkmem identifies line $517$ of 
source file \texttt{strassen.c} as the responsible for such increase. 
Cilkmem also reveals that the allocations performed by two other 
lines of code do not increase in size as $p$ increases.


\begin{figure}[t]
\begin{Verbatim}[fontsize=\small]
Memory high-water mark for p = 1 : 894727307 bytes
Source map for p = 1:
  [strassen.c:517]: 492013632 bytes
  [strassen.c:776]: 402653184 bytes
  [strassen.c:459]: 60491 bytes
Memory high-water mark for p = 2 : 1017641835 bytes
Source map for p = 2:
  [strassen.c:517]: 614928160 bytes
  [strassen.c:776]: 402653184 bytes
  [strassen.c:459]: 60491 bytes
Memory high-water mark for p = 3 : 1140556363 bytes
Source map for p = 3:
  [strassen.c:517]: 737842688 bytes
  [strassen.c:776]: 402653184 bytes
  [strassen.c:459]: 60491 bytes
\end{Verbatim}
\vspace{-2ex}
\caption{An example of Cilkmem's verbose-mode output for the 
\emph{strassen} benchmark. This excerpt shows the reported high-water mark 
in bytes for $p$ from $1$ to $3$ and which lines of code contribute to it.}
  \label{fig:example_output}
\vspace{-3ex}
\end{figure}

\subsection{Benchmarks}\label{sec:benchmarks}
We tested the runtime overhead of Cilkmem on ten Cilk programs from the
Cilkbench suite\footnote{The Cilkbench suite is at
\url{https://github.com/neboat/cilkbench}}. The Cilkbench suite contains a
variety of programs that implement different kind of algorithms such as
Cholesky decomposition, matrix multiplication, integer sorting, and Delaunay
triangulation.

Table \ref{tab:cilkbench} shows the geometric-mean overhead as the
ratio between an execution of the benchmark program through Cilkmem
and a serial execution of the uninstrumented program ($T_1$).  The
performance of Cilkmem was tested for both the exact algorithm and for
the approximate-threshold algorithm. When the exact algorithm was
used, Cilkmem was run with $p$ set to all powers of $2$ from $32$ to
$4096$ inclusive and its runtime is reported in the table as
$T_{\text{\textit{exact}}}$. The approximate-threshold algorithms's
runtime ($T_{\text{\textit{approx}}}$) is independent of~$p$.

As can be evinced from the results, Cilkmem's overheads are generally low and
typically result in less than a $20\%$ overhead relative to an uninstrumented
execution. Cilkmem's overheads are especially low for benchmarks that do not
exhibit substantial fine-grained parallelism. For programs that do
(e.g., \code{qsort} or \code{cholesky}), however, the exact MHWM algorithm can
incur a significant performance degradation for large values of $p$. In these
cases, the approximate-threshold algorithm is \emph{substantially} faster than
the exact algorithm.

\begin{figure}
  \begin{minipage}[t]{1.0\columnwidth}
  \begin{Verbatim}[fontsize=\small]
Differential MHWM for p = 3 -> p = 4
  [sift.cpp:355]:    63.79 MB
  [sift.cpp:259]:    21.26 MB
  [sift.cpp:319]:     7.08 MB
  [tile.cpp:2039]:    5.31 MB
  [sift.cpp:327]:   226.76 MB
  \end{Verbatim} 
  \end{minipage}
\vspace{-2ex}
\caption{Differential MHWM report on image alignment application.
Illustrates the output of Cilkmem's differential MHWM report showing the relative increase in the MHWM when increasing the number of processors from $p=3$ to $p=4$.}
\vspace{-2ex}
\figlabel{alignment-diff-report}
\end{figure}

\subsection{Optimizations}

The Cilkmem tool implements two critical optimizations to achieve low
instrumentation overheads.


Many Cilk programs do not exhibit memory activity in every strand. The
MHWM algorithms were optimized to avoid performing unecessary work
whenever a component is guaranteed not to contribute to the water mark. This
often allows Cilkmem to quickly skip over large sections of the series-parallel
structure.

As outlined in section \ref{sec:implementation}, Cilkmem utilizes two threads
which act as a producer and as a consumer. Since the data produced (allocated)
by the first thread is consumed (freed) by the second in FIFO order, the memory
management of Cilkmem's internal data structures can be greatly simplified to
avoid a large number of small allocations and deallocations. Memory is managed
in a memory pool that takes advantage of the FIFO nature of the
producer-consumer relation.

\subsection{Case Study: Multicore Image Processing Pipeline}\label{sec:casestudy} 


We conducted a case study on an existing image processing
pipeline~\cite{KalerWhWo19} that performs alignment and reconstruction of
high-resolution images produced via electron microscopy\footnote{For the
purposes of this study, we ran the pipeline of \cite{KalerWhWo19} on
full-resolution images without using FSJ.}. The alignment code processes
thousands of $8.5$ MB image tiles in order to stitch them together to form a
2D mosaic. The pipeline was designed to carefully manage memory resources
so as to be able to 2D align very large mosaics on a single multicore. The
memory usage of the application scaled predictably as a function of the size of
the mosaic, but there was an unexplained increase in the memory usage when
adding additional processors. Given the size of the individual images being
aligned, a natural expectation would be for the MHWM to increase by
approximately $8.5$ MB per processor. Empirically, however, the application's
$18$-core execution used several gigabytes more memory than the $1$-core
execution, and the precise amount of extra memory used varied from run-to-run.


We used Cilkmem to analyze the pipeline's $p$-processor MHWM\@.  Cilkmem revealed that the MHWM increased by
approximately $350$ MB per processor.  To identify the source of this
per-processor memory requirement, we used Cilkmem to generate a differential
MHWM report\footnote{Cilkmem generates differential MHWM reports via a
lightweight script that parses Cilkmem's verbose-mode output.} that attributes
an increase in the MHWM between $p$ and $p+1$ processors to particular source-code
locations. 

\figref{alignment-diff-report} shows the differential MHWM report
generated by Cilkmem for the alignment code, which reveals that lines
259, 327, and 355 of \texttt{sift.cpp} are responsible for increasing
the MHWM of the application by a total of approximately $311$ MB per
processor. These lines perform allocations to store a
difference-of-gaussian pyramid of images as a part of the
scale-invariant feature transform employed by the alignment code. This
procedure doubles the resolution of the images (to allow for subpixel
localization of features), performs $6$ GaussianBlurs on the images,
downsamples the blurred image by a factor of $2$ in each dimension,
and repeats until the image size falls below a threshold. This routine
is called on overlapping regions of image tiles. Although these
overlapping regions are typically small, the largest $1\%$ are as
large as $2$ MB in size. Whereas the original input images represent
each pixel with a single byte, this procedure generates intermediate
results that use $4$-byte floating-point values for each pixel. A
back-of-the-envelope calculation revealed that this procedure can
require $200$--$400$MB of space for tile pairs with large overlapping
regions, which conforms with Cilkmem's analysis.

This study illustrates a case in which Cilkmem allowed application developers
to make precise their memory-use projections for their pipeline, and
illuminated a source of memory blow-up when running the pipeline on very-large
multicores. Cilkmem's low instrumentation overhead made it practical to perform
frequent and iterative tests on multiple versions of the pipeline. In fact,
both the exact and approximate Cilkmem MHWM algorithms had less than $5\%$
overhead on the alignment application for $p=128$. These low overheads, coupled
with the illuminating insights provided by Cilkmem, led to the incorporation of
Cilkmem into the regression testing process for the alignment pipeline.

%
%



\secput{relatedwork}{Related Work}
 
This section overviews related work in analysis of parallel programs,
focusing on analysis of memory consumption and parallel-program
analyses that do not depend on the particulars of the task-parallel
program's scheduling.

\subheading{Related theoretical work}
From a theoretical perspective, the memory high-water mark problem
(MHWM) is closely related to the \defn{poset chain optimization
  problem (PCOP)} \cite{ShumTr96, CameronEd79, Shum90,
  MarchalNaSi18}.  In an instance of PCOP, one is given a parameter $p$
and an arbitrary DAG $G = (V, E)$ in which each edge has been assigned
a non-negative weight, and one wishes to
determine the weight of the heaviest antichain containing $p$ or fewer
edges (where the weight of each antichain is the sum of the weights of
its edges).  Shum and Trotter \cite{ShumTr96} showed that PCOP
can be solved in (substantial) polynomial time in the size of $G$
using a linear-programming based algorithm; for the special case of $p
= \infty$, an algorithm based on maximum-flows is also known
\cite{CameronEd79, Shum90, MarchalNaSi18}.

The relationship between PCOP and MHWM was previously made explicit by
Marchal \textit{et al.}\ \cite{MarchalNaSi18}, who applied algorithms
for PCOP in order to design polynomial-time algorithms for computing
the memory high-water mark of parallel algorithms.  Because none of
the known algorithms for PCOP run in (even close) to linear time,
however, the memory high-water mark algorithms of \cite{MarchalNaSi18}
are too inefficient to be used in practice.  Additionally, in order to
apply PCOP algorithms to the computation of memory high-water marks,
\cite{MarchalNaSi18} were forced to make a number of simplifying
assumptions about the parallel programs being analyzed, and their
algorithms require that the parallel programs be in what they call the
simple-data-flow model.

The difficulty of solving PCOP efficiently on arbitrary DAGs motivates
the focus in this paper on the important special case in which the DAG
$G$ is series-parallel.  Moreover, by modeling memory with arbitrary
allocations and frees (rather than using the simple-data-flow model)
we ensure that our algorithms have theoretical guarantees when applied
to analyzing arbitrary task-parallel programs.

In addition to considering the high-water mark problem, Marchal
\textit{et al.}\ \cite{MarchalNaSi18} considered the problem of
adding new dependencies to a parallel program's DAG in order to reduce
the high-water mark. They prove that this problem is
NP-complete, and empirically evaluate several
heuristics.  These techniques would be difficult to apply to most
real-world parallel programs, however, since they require the offline
analysis of the parallel program's computation DAG\@.

\subheading{Related tools}
In practice, many tools exist to measure and report on the maximum
memory consumption of a running program.  For example, the Linux
kernel tracks memory-usage information for every running
process and publishes that information through the \code{proc}
pseudo-filesystem \cite{QuinlanKe17}, including the virtual memory and
\defn{resident set size} (RSS), which is the portion of the process's
memory stored in main memory and, therefore, is upper-bounded by the
memory high-water mark.  Performance-analysis toolkits including Intel
VTune Amplifier \cite{VTune19}, the Sun Performance Analyzer
\cite{Oracle10}, the Massif tool \cite{Massif18} in the Valgrind tool
suite \cite{NethercoteSe07}, and Linux's \code{memusage} tool
\cite{SchifferKe14} measure the memory consumption of an execution of
a specified program and reports its peak stack- and heap-memory
consumption.  Like Cilkmem, these tools use intercept system calls to
dynamic memory-allocation functions, such as \code{malloc} and
\code{free}.  Unlike Cilkmem, however, all of these memory-analysis
tools gather information that is specific to how the program was
scheduled for a particular run.  These analyses do not 
analyze the worst-case memory consumption of any parallel execution of
the program on a given processor count.

Several other dynamic-analysis tools for task-parallel programs have
been developed whose analyses do not depend on the scheduling of the
program.  Tools such as Cilkview \cite{HeLeLe10} and
Cilkprof \cite{SchardlKuLe15} analyze the execution of a Cilk program
and report on the program's parallel scalability, which reflects how
much speedup the program might achieve using different numbers of
parallel processors.  Several other tools analyze
parallel memory accesses in a task-parallel program that might exhibit
nondeterministic behavior between runs of the program~\cite{FengLe99,
BenderFiGi04, UtterbackAgFi16, RamanZhSa10, RamanZhSa12,
DimitrovVeSa15}.  Like Cilkmem, the analyses performed by these tools
do not depend on how the program was scheduled for a particular run,
and instead provide insight into the behavior or performance of all
parallel executions of the program.  Unlike Cilkmem, however, these
tools do not analyze memory consumption.




\secput{concl}{Conclusion}

This paper introduces Cilkmem, a tool that analyzes the
$p$-processor memory high-water mark of fork-join
programs. Cilkmem is built on top of novel algorithms which provide
Cilkmem with both accuracy and running-time guarantees. We conclude with several directions of future work.

Although Cilkmem analyzes the behavior of parallel programs, currently
Cilkmem is forced to run these programs in serial while performing the
analysis. Extending Cilkmem to run in parallel is an
important direction of future work.

Theoretically, all of our algorithms could be implemented in parallel
at the cost of requiring additional memory. In particular, both online
algorithms adapt to this setting using total space $O(p \cdot
T_\infty)$ for the exact algorithm and $O(T_\infty)$ for the
approximate-threshold algorithm, where $T_\infty$ is the span of the
parallel program being analyzed. This can be significant, especially
for parallel programs with large multi-spawn
combinations. Furthermore, there are technical challenges in
parallelizing Cilkmem. The current instrumentation approach is not
easily amenable to parallelization since thread scheduling is hidden
by the Cilk runtime system. Finally, capturing memory allocations in a
multi-threaded program is made more difficult by the fact that each
allocation needs to be properly attributed to the correct thread and
strand.

A theoretically interesting direction of work would be to extend our
work on approximation algorithms to consider the memory-high water
mark on parallel programs with \emph{arbitrary} DAGs. Whereas
computing the exact memory-high water mark of an arbitrary DAG is
known to be difficult to do with low overhead, much less theoretical
work has been done on the approximation version of the same question.



%
\bibliographystyle{abbrv}
\bibliography{allpapers}

\newcommand{\noopsort}[1]{} \newcommand{\singleletter}[1]{#1} \punt{ Uncomment
  the following lines for short conference/journal names @String{SODA = {SODA}}
  @String{JACM = {Journal of the ACM}} @String{SPAA = {SPAA}} @String{PPoPP =
  {PPoPP}} @String{PLDI = {PLDI}} @String{STOC = {STOC}} @String{FOCS = {FOCS}}
  @String{ESA = {ESA}} @String{ALP = {Colloquium on Automata, Languages, and
  Programming}} @String{SWAT = {SWAT}} @String{JALGO = {Journal of Algorithms}}
  @String{PODC = {PODC}} @String{LNCS = {LNCS}} @String{SUPERCOMP =
  {Supercomputing}} @String{ICCSE = {Israeli Conference on Computer Systems
  Engineering}} @String{CMD = {Conference on Management of Data}} }
\begin{thebibliography}{10}

\bibitem{AllenChHa+08}
E.~Allen, D.~Chase, J.~Hallett, V.~Luchangco, J.-W. Maessen, S.~Ryu, G.~L.
  {Steele Jr.}, and S.~Tobin-Hochstadt.
\newblock {\em The Fortress Language Specification Version 1.0}.
\newblock Sun Microsystems, Inc., Mar. 2008.

\bibitem{AroraBlPl98}
N.~S. Arora, R.~D. Blumofe, and C.~G. Plaxton.
\newblock Thread scheduling for multiprogrammed multiprocessors.
\newblock In {\em SPAA}, pages 119--129, 1998.

\bibitem{AyguadeCoDu09}
E.~Ayguade, N.~Copty, A.~Duran, J.~Hoeflinger, Y.~Lin, F.~Massaioli, X.~Teruel,
  P.~Unnikrishnan, and G.~Zhang.
\newblock The design of {OpenMP} tasks.
\newblock {\em TPDS}, 20(3):404--418, 2009.

\bibitem{BarikBuCa09}
R.~Barik, Z.~Budimli{\'{c}}, V.~Cav{\`{e}}, S.~Chatterjee, Y.~Guo, D.~Peixotto,
  R.~Raman, J.~Shirako, S.~Ta{\c{s}}{\i}rlar, Y.~Yan, Y.~Zhao, and V.~Sarkar.
\newblock The {Habanero} multicore software research project.
\newblock In {\em OOPSLA}, pages 735--736, 2009.

\bibitem{BenderFiGi04}
M.~A. Bender, J.~T. Fineman, S.~Gilbert, and C.~E. Leiserson.
\newblock On-the-fly maintenance of series-parallel relationships in fork-join
  multithreaded programs.
\newblock In {\em SPAA}, pages 133--144, 2004.

\bibitem{BlumofeJoKu96}
R.~D. Blumofe, C.~F. Joerg, B.~C. Kuszmaul, C.~E. Leiserson, K.~H. Randall, and
  Y.~Zhou.
\newblock {C}ilk: An efficient multithreaded runtime system.
\newblock {\em JPDC}, 37(1):55--69, 1996.

\bibitem{BlumofeLe99}
R.~D. Blumofe and C.~E. Leiserson.
\newblock Scheduling multithreaded computations by work stealing.
\newblock {\em Journal of the ACM}, 46(5):720--748, 1999.

\bibitem{BlumofePa98}
R.~D. Blumofe and D.~Papadopoulos.
\newblock Hood: A user-level threads library for multiprogrammed
  multiprocessors.
\newblock Technical report, Oct. 1998.
\newblock Technical Report, University of Texas at Austin.

\bibitem{BryantOha15}
R.~E. Bryant and D.~R. O'Hallaron.
\newblock {\em Computer Systems: A Programmer's Perspective}.
\newblock Pearson, 3rd edition, 2015.

\bibitem{CameronEd79}
K.~Cameron and J.~Edmonds.
\newblock Algorithms for optimal anti-chains.
\newblock In {\em Research report CORR 79}, volume~22. Department of
  Combinatorics and Optimization. University of Waterloo, 1979.

\bibitem{CaveZhSh11}
V.~Cav{\'e}, J.~Zhao, J.~Shirako, and V.~Sarkar.
\newblock {H}abanero-{J}ava: The new adventures of old {X10}.
\newblock In {\em PPPJ}, pages 51--61, 2011.

\bibitem{CharlesGrSa+05}
P.~Charles, C.~Grothoff, V.~Saraswat, C.~Donawa, A.~Kielstra, K.~Ebcioglu,
  C.~von Praun, and V.~Sarkar.
\newblock X10: An object-oriented approach to non-uniform cluster computing.
\newblock In {\em OOPSLA}, 2005.

\bibitem{DimitrovVeSa15}
D.~Dimitrov, M.~Vechev, and V.~Sarkar.
\newblock Race detection in two dimensions.
\newblock In {\em SPAA}, pages 101--110, 2015.

\bibitem{EdwardsTrSu14}
H.~C. Edwards, C.~R. Trott, and D.~Sunderland.
\newblock Kokkos: Enabling manycore performance portability through polymorphic
  memory access patterns.
\newblock {\em JPDC}, 74(12):3202--3216, 2014.

\bibitem{FengLe99}
M.~Feng and C.~E. Leiserson.
\newblock Efficient detection of determinacy races in {Cilk} programs.
\newblock {\em Theory of Computing Systems}, 32(3):301--326, 1999.

\bibitem{FrigoLeRa98}
M.~Frigo, C.~E. Leiserson, and K.~H. Randall.
\newblock The implementation of the {C}ilk-5 multithreaded language.
\newblock In {\em PLDI}, pages 212--223, 1998.

\bibitem{HeLeLe10}
Y.~He, C.~E. Leiserson, and W.~M. Leiserson.
\newblock The {Cilkview} scalability analyzer.
\newblock In {\em SPAA}, 2010.

\bibitem{IntelCilkPlus10}
{Intel Corporation}.
\newblock {\em Intel {Cilk} {Plus} Language Specification}, 2010.
\newblock Document Number: 324396-001US. Available from
  \url{http://software.intel.com/sites/products/cilk-plus/cilk_plus_language_specification.pdf}.

\bibitem{VTune19}
{Intel Corporation}.
\newblock {Intel VTune Amplifier}.
\newblock \url{https://software.intel.com/en-us/vtune}, 2019.

\bibitem{KalerWhWo19}
T.~Kaler, B.~Wheatman, and S.~Wooders.
\newblock High-throughput image alignment for connectomics using frugal snap
  judgments: Poster.
\newblock In {\em PPoPP}, pages 433--434, 2019.

\bibitem{Lea00}
D.~Lea.
\newblock A {J}ava fork/join framework.
\newblock In {\em ACM 2000 Conference on Java Grande}, pages 36--43, 2000.

\bibitem{LeijenHa07}
D.~Leijen and J.~Hall.
\newblock Optimize managed code for multi-core machines.
\newblock {\em MSDN Magazine}, 2007.
\newblock Available from \url{http://msdn.microsoft.com/magazine/}.

\bibitem{Leiserson10}
C.~E. Leiserson.
\newblock The {Cilk++} concurrency platform.
\newblock {\em Journal of Supercomputing}, 51(3):244--257, 2010.

\bibitem{MarchalNaSi18}
L.~Marchal, H.~Nagy, B.~Simon, and F.~Vivien.
\newblock Parallel scheduling of dags under memory constraints.
\newblock In {\em IPDPS}, pages 204--213. IEEE, 2018.

\bibitem{MatveevMeSa17}
A.~Matveev, Y.~Meirovitch, H.~Saribekyan, W.~Jakubiuk, T.~Kaler, G.~Odor,
  D.~Budden, A.~Zlateski, and N.~Shavit.
\newblock A multicore path to connectomics-on-demand.
\newblock In {\em PPoPP}, pages 267--281, 2017.

\bibitem{MinIaYe11}
S.-J. Min, C.~Iancu, and K.~Yelick.
\newblock Hierarchical work stealing on manycore clusters.
\newblock In {\em PGAS}, Oct. 2011.

\bibitem{NethercoteSe07}
N.~Nethercote and J.~Seward.
\newblock Valgrind: a framework for heavyweight dynamic binary instrumentation.
\newblock In {\em PLDI}, 2007.

\bibitem{NetzerMi92}
R.~H.~B. Netzer and B.~P. Miller.
\newblock What are race conditions?
\newblock {\em LOPLAS}, 1(1):74--88, 1992.

\bibitem{OpenMP08}
{\em {OpenMP} Application Program Interface, Version 3.0}, May 2008.

\bibitem{Oracle10}
{Oracle}.
\newblock Sun studio 12: Performance analyzer.
\newblock Available at \url{https://docs.oracle.com/cd/E19205-01/819-5264/},
  2010.

\bibitem{QuinlanKe17}
D.~Quinlan and M.~Kerrisk.
\newblock {proc} --- process information pseudo-filesystem.
\newblock Available at \url{http://man7.org/linux/man-pages/man5/proc.5.html},
  2017.

\bibitem{RamanZhSa10}
R.~Raman, J.~Zhao, V.~Sarkar, M.~Vechev, and E.~Yahav.
\newblock Efficient data race detection for async-finish parallelism.
\newblock In {\em Runtime Verification}, volume 6418 of {\em LNCS}, pages
  368--383. 2010.

\bibitem{RamanZhSa12}
R.~Raman, J.~Zhao, V.~Sarkar, M.~Vechev, and E.~Yahav.
\newblock Scalable and precise dynamic datarace detection for structured
  parallelism.
\newblock In {\em PLDI}, pages 531--542, 2012.

\bibitem{Reinders07}
J.~Reinders.
\newblock {\em Intel Threading Building Blocks: Outfitting C++ for Multi-core
  Processor Parallelism}.
\newblock O'Reilly Media, Inc., 2007.

\bibitem{SchardlDeDo17}
T.~B. Schardl, T.~Denniston, D.~Doucet, B.~C. Kuszmaul, I.-T.~A. Lee, and C.~E.
  Leiserson.
\newblock The {CSI} framework for compiler-inserted program instrumentation.
\newblock {\em POMACS}, 1(2):43:1--43:25, Dec. 2017.

\bibitem{SchardlKuLe15}
T.~B. Schardl, B.~C. Kuszmaul, I.-T.~A. Lee, W.~M. Leiserson, and C.~E.
  Leiserson.
\newblock The {Cilkprof} scalability profiler.
\newblock In {\em SPAA}, pages 89--100, 2015.

\bibitem{SchardlMoLe17}
T.~B. Schardl, W.~S. Moses, and C.~E. Leiserson.
\newblock Tapir: Embedding fork-join parallelism into {LLVM}'s intermediate
  representation.
\newblock In {\em PPoPP}, pages 249--265, 2017.

\bibitem{SchifferKe14}
P.~Schiffer and M.~Kerrisk.
\newblock {memusage} --- profile memory usage of a program.
\newblock Available at
  \url{http://man7.org/linux/man-pages/man1/memusage.1.html}, 2014.

\bibitem{Shum90}
H.~Shum.
\newblock Chains of bounded length and antichains of bounded width in partially
  ordered sets.
\newblock 1990.

\bibitem{ShumTr96}
H.~Shum and L.~E. Trotter~Jr.
\newblock Cardinality-restricted chains and antichains in partially ordered
  sets.
\newblock {\em Discrete applied mathematics}, 65(1-3):421--439, 1996.

\bibitem{UtterbackAgFi16}
R.~Utterback, K.~Agrawal, J.~T. Fineman, and I.-T.~A. Lee.
\newblock Provably good and practically efficient parallel race detection for
  fork-join programs.
\newblock In {\em SPAA}, pages 83--94, 2016.

\bibitem{Valdes78}
J.~Valdes.
\newblock {\em Parsing Flowcharts and Series-Parallel Graphs}.
\newblock PhD thesis, Stanford University, December 1978.
\newblock STAN-CS-78-682.

\bibitem{Massif18}
{Valgrind Developers}.
\newblock Massif: a heap profiler.
\newblock Available at \url{http://valgrind.org/docs/manual/ms-manual.html},
  2018.

\bibitem{YogaNa17}
A.~Yoga and S.~Nagarakatte.
\newblock A fast causal profiler for task parallel programs.
\newblock In {\em ESEC/FSE}, pages 15--26, 2017.

\end{thebibliography}

\ifconference
\else
\appendix

\section{Online Exact Computation of $\HH_p(G)$}\label{secappendixonlineexact}

This section describes \proc{ExactOn}, an online algorithm to compute
the exact memory high-water mark for processor counts $1,\ldots,p$.
\proc{ExactOn} adapts the $O(|E| \cdot p)$-time exact algorithm,
\proc{ExactOff}, from \secref{exact} to space-efficiently handle
multi-spawn combinations.

Formally, \proc{ExactOn} recursively computes three quantities for
each component $C$: (1) the $(p + 1)$-element array $R_C$; (2) the
total memory allocated $t(C)$ over the edges in $C$; and (3) the value
of $s(C)$. Since $s(C)$ can be recovered in time $O(p)$ from $R_C$,
the final of these quantities can be computed non-recursively for each
component.

Now consider a multi-spawn combination $C = (a_0, b_1, a_1, \ldots,
b_k, a_k)$, and let $x_1, x_2, \ldots, x_{2k + 1}$ be a consecutive
labeling of $a_0, b_1, a_2, \ldots, b_k$ (i.e., $x_1 = a_0, x_2 = b_1,
\ldots$). During a multi-spawn combination, we are given the values of
$R_{x_i}, t(x_i), s(x_i)$ for each $i = 1, \ldots, 2k + 1$, one after
another, and we wish to compute $R_{C}$ and $t(C)$ (after which we can
obtain $s(C)$ from $R_C$ in time $O(p)$).

The quantity $t(C)$ is easy to recursively compute, since $t(C) =
\sum_i t(x_i)$. What's more difficult is to obtain the array $R_C$. To
do this, as we receive $R_{x_l}, t(x_l), s(x_l)$ for each $l$, we
maintain three intermediate variables.

In order to define our intermediate variables, we must first introduce
the notions of suspended-end and ignored-end water marks of antichains
in a multi-spawn component $C$. If an antichain $A$ in $C$ contains
only edges in $b_1, \ldots, b_{k}$, and $b_t$ is the largest $t$ such
that $b_t$ contains an edge in $A$, then we say $A$ has a
\defn{suspended end} if the components $a_{t + 1}, b_{t + 2}, \ldots,
a_{k}$ form a companion component of $A$ (which occurs if the sum of
their edge costs is net positive). We cannot know whether $A$ will
have a suspended end until we have seen all of $a_{t + 1}, b_{t + 2},
\ldots, a_k$ (i.e., until we have completed the entire multi-spawn
component). Thus, for each antichain $A$ in $C$ that contains only
edges in $b_1, \ldots, b_k$, we maintain both a ``suspended-end''
version of its water mark and a ``ignored-end'' version of its water
mark. The \defn{suspended-end water mark} of $A$ is $W(A)$ if $A$ has
a suspended end, and is $W(A) + \sum_{e \in a_{t + 1}, b_{t + 2},
  \ldots, a_{k}} t(e)$ if $A$ does not have a suspended end (i.e., it
is the water mark $A$ would have if it had a suspended
end). Similarly, the \defn{ignored-end water mark} of $A$ is $W(A)$ if
$A$ does not have a suspended end, and is $W(A) - \sum_{e \in a_{t +
    1}, b_{t + 2}, \ldots, a_{k}} t(e)$ if $A$ does have a suspended
end. Additionally, for any antichain $A$ that contains an edge in some
$a_t$, we define the \defn{ignored-end water mark} of $A$ to be the
water mark of $A$; thus the ignored-end water mark is defined for all
antichains in $A$ of $C$.

As we receive $R_{x_l}, t(x_l), s(x_l)$ for each $l$, we maintain
three intermediate variables, each of which is an $O(p)$-element array
indexed either by $i = 0, \ldots, p$ or $i = 1, \ldots, p$. Two of the
three arrays, $\suspended_l$ and $\ignored_l$ are devoted to keeping
track of the largest suspended-end and ignored-end costs of antichains
seen so far. The third array, $\partialchain_l$, keeps track of the
costs of ``partially complete'' antichains, assuming that additional
edges will be added to these antichains from later $x_l$'s.

Define $l_1$ to be the index of the largest-indexed $a_i$ before or at
$x_l$, and $l_2$ to be the index of the largest-indexed $b_i$ before
or at $x_l$. After receiving $R_{x_l}, t(x_l), s(x_l)$, the $l$-th
entry of each of our three intermediate variables are updated to be
defined as follows:
\begin{itemize}
\item $\suspended_l[i]$ for $i = 1, \ldots, p$: This is the maximum
  suspended-end cost of any antichain $A$ in $x_1 \cup \cdots \cup
  x_l$ such that (1) $A$ contains exactly $i$ edges; (2) all of $A$'s
  edges are in $b_1, \ldots, b_{l_2}$. Note that here $x_1 \cup \cdots
  \cup x_l$ is treated as a multi-spawn component and the costs of the
  antichains are considered just within the graph containing $x_1,
  \ldots, x_l$ (which matters because we are considering the
  suspended-end cost of the antichain).
\item $\ignored_l[i]$ for $i = 1, \ldots, p$: This is the maximum
  ignored-end cost of any antichain $A$ in $x_1 \cup \cdots \cup x_l$
  that contains exactly $i$ edges.  If no such $A$ exists, this is
  $\NULL$, and is treated as $-\infty$.
\item $\partialchain_l[i]$ for $i = 0, \ldots, p$: Consider antichains
  $A_1, \ldots, A_{l_2}$ in $b_1, \ldots, b_{l_2}$, respectively, such
  that the total number of edges in the antichains is $i$. Then
  $\partialchain_l[i]$ is the sum of two quantities: (1) the sum of
  the edge-totals in the $a_i$'s seen so far, given by $\sum_{i =
    0}^{l_1} t(a_i)$; and (2) the maximum possible value of the sum
  $\sum_{j = 1}^{l_2} W(A_i)$, where the water mark of each $A_i$ is
  considered within only the graph $b_i$, and the water mark of an
  antichain $A_i$ containing no edges is set to $\max(0, t(b_i))$. (If
  no such $A_1, \ldots, A_{l_2}$ exist then $\partialchain_l[i] =
  \NULL$.)

  One can express $\partialchain_l[i]$ as
  \begin{equation} \partialchain_l[i] = \sum_{i = 0}^{l_1} t(a_i) + \max_{t_1 + t_2 +
  \cdots + t_{l_2} = i} \sum_{j = 1}^{l_2} R_{b_j}[t_j].
    \label{eqpartials}
  \end{equation}
\end{itemize}

We now describe how to compute the intermediate variables recursively
within a multi-spawn combination. In particular, given the
intermediate variables for $l - 1$, and given $R_{x_l}$ and $t(x_l)$,
we show how to recover the intermediate variables for $l$.

When $l = 0$, before starting the computation, we initialize each
entry of each of the intermediate variables to $\NULL$; the exception
to this is $\partialchain_0[0]$ which we initialize to zero.

Upon receiving a given $x_l$ for an odd $l$ (meaning $x_l = a_{l_1}$),
we can compute the new intermediate variables as follows:
\begin{itemize}
\item \textbf{$\suspended_l[i]$ for $i = 1, \ldots, p$: }This equals $\suspended_{l - 1}[i] + t(x_l)$.
\item \textbf{$\ignored_l[i]$ for $i = 1, \ldots, p$: }This equals
  $$\max\left(\ignored_{l - 1}[i], \max_{j = 1}^{i} \partialchain_{l - 1}[i - j]  + R_{x_l}[j]\right).$$

  In particular, the right side of the maximum considers as an option
  for $\ignored_l[i]$ the possibility that our antichain $A$ has some
  non-zero number $j$ of edges in $a_{l_1}$, and then $i - j$ edges spread
  across $b_1, \ldots, b_{l_2}$.
\item \textbf{$\partialchain_l[i]$ for $i = 0, \ldots, p$: }This is $\partialchain_{l - 1}[i] + t(x_l)$.
\end{itemize}

Upon receiving a given $x_l$ for an even $l$ (meaning $x_l =
b_{l_2}$), we can compute the new intermediate variables as follows:

\begin{itemize}
\item \textbf{$\suspended_l[i]$ for $i = 1, \ldots, p$: }This equals
      \begin{multline*}
      \max\Big(\suspended_{l - 1}[i] + t(b_{l_2}), \\
      \max_{j = 1}^{i} \partialchain_{l - 1}[i - j]  + R_{x_l}[j]\Big).
      \end{multline*}

  In particular, the right side of the maximum considers as an option
  for $\suspended_l[i]$ the possibility that our antichain $A$ has
  some non-zero number $j$ of edges in $b_{l_2}$, and then $i - j$
  edges spread across $b_1, \ldots, b_{l_2 - 1}$. Note that the
  suspended-end cost and ignored-end cost of such an antichain in $x_1
  \cup \cdots \cup x_l$ will be equal (since there is no end to be
  suspended); consequently, we will use a similar maximum to compute
  the new value of $\ignored_l[i]$.
\item \textbf{$\ignored_l[i]$ for $i = 1, \ldots, p$: }This equals
  $$\max\left(\ignored_{l - 1}[i], \max_{j = 1}^{i} \partialchain_{l - 1}[i - j]  + R_{x_l}[j]\right).$$

  As before, the right side of the maximum considers as an option
  for $\ignored_l[i]$ the possibility that our antichain $A$ has
  some non-zero number $j$ of edges in $b_{l_2}$, and then $i - j$ edges
  spread across $b_1, \ldots, b_{l_2 - 1}$.

\item \textbf{$\partialchain_l[i]$ for $i = 0, \ldots, p$: }This equals
  $$\max\left(\partialchain_{l - 1}[i] + R_{x_l}[0], \max_{j = 1}^{i} \partialchain_{l - 1}[i - j]  + R_{x_l}[j]\right),$$
  where the right side of the maximum is null for $i = 0$.

  Once again, the right side of the maximum considers as an option for
  $\ignored_l[i]$ the possibility that our antichain $A$ has some
  non-zero number $j$ of edges in $b_{l_2}$, and then $i - j$ edges spread
  across $b_1, \ldots, b_{l_2 - 1}$. The left side, on the other hand,
  represents the case where no antichain edges appear in $b_{l_2}$.
\end{itemize}

Using the recursions above, we can compute the intermediate values for
each $l$ in time $O(p^2)$. We can then compute $R_C$ for the full
multi-spawn combination $C$ using the identity
$$R_C[i] = \max(\suspended_{2k + 1}[i], \ignored_{2k + 1}[i]),$$
for $i > 0$ and $R_C[0] = \max(0, t(C))$.

For a series-parallel graph $G$, we now have a online
(space-efficient) algorithm for computing $R_G$ in time $O(|E| \cdot
p^2)$. Using a similar optimization as in the previous section, we can
improve this running time to $O(|E| \cdot p)$. In particular, for each
$\partialchain_l$, we keep track of the largest index containing a
non-null entry; when computing each of the maximums in our updates, we
can then ignore any terms involving a null entry of either
$\partialchain_{l - 1}$ or of $R_{x_l}$. This ensures that computing
the intermediate variables for a given value of $l$ takes time at most
$O(p + s_1 s_2)$, where $s_1$ is the maximum size $\le p$ of any
antichain in $b_1 \cup \cdots \cup b_{\lfloor (l - 1) / 2 \rfloor}$ and
  $s_2$ is $s(x_l)$. In particular, this means that the time for each
  $l$ is,
\begin{equation}
  O\left(p + \min\left(\sum_{j = 1}^{\lfloor (l - 1)/2 \rfloor} s(b_j), p\right) \cdot s(x_l)\right).
  \label{eqtimesubcombination}
\end{equation}

We define \proc{ExactOn} to be the online algorithm for computing
$R_{G}$ and $t(G)$ for a graph $G$, using the same recursion as
\proc{ExactOff} when combining components in series, and using the
recursion described above within each multi-spawn combination.  Using
an amortized analysis we will prove that the algorithm has
running time $O(|E| \cdot p)$.

\begin{theorem}
For a graph $G = (V, E)$ recursively constructed with series and
multi-spawn combinations, \proc{ExactOn} recursively computes
$R_G$ in time $O(|E| \cdot p)$ and using space $O(p)$ for each
combination.
\label{thmEpmem}
\end{theorem}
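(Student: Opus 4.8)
The plan is to verify correctness of the recursion and then prove the $O(p)$ space and $O(|E|\cdot p)$ time bounds, with essentially all the work in the time bound. Correctness I would establish by structural induction on the series/multi-spawn decomposition: the single-edge base case and the series case are identical to \proc{ExactOff}, and for a multi-spawn combination it suffices to check that the odd-$l$ and even-$l$ update rules preserve the stated invariants defining $\suspended_l$, $\ignored_l$, and $\partialchain_l$, after which the identities $R_C[i]=\max(\suspended_{2k+1}[i],\ignored_{2k+1}[i])$ for $i>0$ and $R_C[0]=\max(0,t(C))$ recover $R_C$. The space bound is immediate: a multi-spawn keeps only the three $O(p)$-element intermediate arrays, overwritten in place as the subcomponents $x_1,x_2,\dots$ arrive, and a series combination needs $O(p)$ scratch, so each active combination uses $O(p)$ space (and $O(p\cdot D)$ in total along a stack of depth $D$).

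For the time bound I would first clear away the easy terms. Each series combination costs $O(p)$, and there are $O(|E|)$ combinations in all, contributing $O(|E|\cdot p)$; likewise the additive $O(p)$ term of \eqref{eqtimesubcombination}, summed over every subcomponent-processing step of every multi-spawn, contributes $O(|E|\cdot p)$, since the number of such steps is just the number of components, $O(|E|)$. The crux is to bound the quadratic part $\sum s_1 s_2$ taken over all steps, where $s_1$ is the accumulated antichain size $\sigma$ of the spawned components seen so far and $s_2=s(x_l)$, and to show that it is $O(|E|\cdot p)$.

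I would split the steps into those that incorporate a spawned component $b_i$ and those that incorporate a continuation component $a_i$. Incorporating $b_i$ costs $\sigma\cdot s(b_i)$ and grows the accumulated parallel antichain by $s(b_i)$, exactly as a parallel combination does; these costs are therefore controlled, in aggregate, by the credit argument underlying Lemma \ref{lemamortizedcosts}, in which each charge is tied to a persistent increase in the antichain size of the component containing an edge, so that every edge absorbs only $O(p)$ charge in total. The continuation steps are where the difficulty lies. Incorporating $a_i$ costs $\sigma\cdot s(a_i)$ and pays for an antichain that combines $s(a_i)$ edges of $a_i$ with the $\sigma$ mutually-parallel edges of $b_1,\dots,b_i$ (each of which is parallel to $a_i$); but since $a_i<b_{i+1}$ this antichain is a dead end, discarded the moment $b_{i+1}$ is processed. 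The growth these charges would pay for is thus \emph{not} persistent, the invariant of Lemma \ref{lemamortizedcosts} fails, and the edges of $b_1,\dots,b_i$ cannot safely absorb the continuation costs without risking a $\Theta(D)$-fold overcharge.

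The main obstacle, then, is the continuation costs, and the plan is to charge them by the leader-vertex scheme promised in the introduction instead of by edge credits. I would assign to each component $H$ a set of leader vertices realizing a maximum antichain of $H$ (of size $s(H)\le p$), with the leader sets of series and parallel combinations assembled consistently from those of their parts, and charge the cost $\sigma\cdot s(a_i)$ of a continuation step as $\sigma\le p$ to each of the $s(a_i)$ leaders of $a_i$. The delicate step---and what I expect to be the main technical hurdle---is proving that each vertex collects only $O(p)$ charge over the entire construction. The structural lever is that a maximum antichain of a multi-spawn component can draw on the leaders of at most one continuation $a_j$ (alongside all of the $b$'s), so that whenever $a_i$ is a non-maximal continuation its leaders drop out of the combination's leader set and never re-enter it higher up; such leaders are therefore charged only transiently. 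Pinning down the leader bookkeeping so that these transient charges provably sum to $O(p)$ per vertex, even along chains of nested maximal continuations, is the heart of the argument. Granting it, the quadratic cost is $O(|V|\cdot p)=O(|E|\cdot p)$, which together with the series and linear contributions yields the theorem.
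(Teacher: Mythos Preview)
Your high-level structure is right, and your leader scheme is close in spirit to what the paper does, but the decomposition you have chosen makes the amortization harder than it needs to be and leaves a real gap.

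The paper does not split the sub-combinations into ``spawned $b_i$'' versus ``continuation $a_i$''.  It splits them into \emph{light} ($s(x_i)<p$) versus \emph{heavy} ($s(x_i)=p$), regardless of whether $x_i$ is an $a$ or a $b$.  For light sub-combinations the paper charges $\sigma=\min(\sum s(b_j),p)$ to each edge of $x_i$ and invokes the invariant of Lemma~\ref{lemamortizedcosts} verbatim.  Your ``dead end'' objection to the continuation steps is about charging in the wrong direction: if you tried to charge $s(a_i)$ to the accumulated $b$-edges, the growth would indeed not persist.  But charging $\sigma$ to the edges of $a_i$ works perfectly, because after the step the component containing any $e\in a_i$ has (uncapped) antichain size at least $s(a_i)+\sum_{t\le i}s(b_t)$, and this lower bound is inherited by every supercomponent, including the finished multi-spawn~$C$.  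So the invariant $c_t\le a_t$ survives, and light continuation steps need no special treatment at all.

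That leaves the heavy sub-combinations, and here the paper uses a representative-edge scheme (your ``leader vertices''): each component carries up to $p$ representatives, a heavy step deducts $p$ credits either from every edge of $x_1,\dots,x_{i-1}$ (if none of them has $\ge p$ edges) or from the $p$ representatives of $x_i$ (otherwise), and the multi-spawn's representatives are inherited from the \emph{first} $x_j$ with $\ge p$ edges.  This inheritance rule is exactly what makes the ``chain of nested maximal continuations'' worry evaporate: any edge that gets charged in a heavy step is guaranteed not to be a representative of the resulting component, hence never charged again.  Your scheme, by contrast, charges $\sigma\le p$ to each of the $s(a_i)$ leaders of $a_i$ on every continuation step, and when $s(a_i)=p$ those leaders can persist as leaders of $C$, of $C'$, and so on up a depth-$D$ chain, accruing $\Theta(Dp)$ charge.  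You flag this as ``the heart of the argument'' without resolving it, and with your $a$/$b$ split there is no clean resolution; the paper's light/heavy split sidesteps the issue entirely.
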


The fact that the product in \eqref{eqtimesubcombination} involves a
summation $\sum_{j = 1}^{\lfloor (l - 1) / 2\rfloor}s(b_j)$ means that
the simple credit-charging argument used to prove Theorem \ref{thmEp}
no longer suffices for proving Theorem \ref{thmEpmem}. Nonetheless, by
splitting the problem into two separate amortization arguments we are
able to complete the analysis. This is done in Lemma
\ref{lemamortizedcostsmemefficient}.

\begin{lemma}
  Consider a series-parallel graph $G = (V, E)$ recursively built from
  series and multi-spawn parallel combinations. Denote each
  multi-spawn combination by its tuple $(x_1, x_2, x_3, x_3, \ldots,
  x_t)$, where the odd-indexed $x_i$'s represent the $a_i$
  components and the even-indexed $x_i$'s represent the $b_i$
  components. Then
  \begin{equation}
    \sum_{(x_1, \ldots, x_t) \in \mathcal{M}} \sum_{i = 1}^{t - 1} \min\left(\sum_{j = 1}^{\lfloor i / 2 \rfloor} s(x_{2j}), p\right) \cdot s(x_i) \le O(|E| \cdot p),
    \label{eqmultispawncosts}
  \end{equation}
  where the set $\mathcal{M}$ contains all multi-spawn combinations in
  the recursive construction of $G$.
  \label{lemamortizedcostsmemefficient}
\end{lemma}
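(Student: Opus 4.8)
The plan is to mimic the credit argument of Lemma~\ref{lemamortizedcosts}, but to charge against the antichain growth that a multi-spawn realizes \emph{all at once} at its completion. Write each multi-spawn as $C=(a_0,b_1,a_1,\ldots,a_k)$, and for a child $x$ let $\gamma(x)$ denote the capped antichain size of the spawned children \emph{preceding} $x$; since the spawned children are mutually parallel inside $C$, $\gamma(x)=\min(\sum_{b_j\ \text{before}\ x}s(b_j),\,p)$ is exactly the antichain size of the union $B$ of those accumulated spawned children. By \eqref{eqtimesubcombination} the left side of \eqref{eqmultispawncosts} equals $\sum_{C}\sum_{x}\gamma(x)\,s(x)$ up to the additive $O(p)$ per-component terms, which total $O(|E|\cdot p)$; so it suffices to bound $\sum_{C}\sum_{x}\gamma(x)\,s(x)$. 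I would split this into an \emph{unsaturated} part, where $\gamma(x)<p$, and a \emph{saturated} part, where $\gamma(x)=p$; these are the two amortization arguments.

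For the unsaturated part I would give each edge a budget of $O(p)$ credits and, for each term, deduct $\gamma(x)$ credits from every edge of $x$ (this covers the cost since $x$ has at least $s(x)$ edges). The key point is that when $C$ finishes assembling, every edge $e\in x$ sees the antichain size of its current component jump from $s(x)$ to $s(C)\ge \min(p,\,s(x)+\gamma(x))$: indeed $x$ is parallel inside $C$ to all of $B$, so a largest antichain of $x$ together with $B$ is an antichain of $C$. Hence, while $\gamma(x)<p$, the deduction $\gamma(x)$ is matched by an equal increase in the (monotone, $p$-capped) antichain size of $e$'s component, exactly as in Lemma~\ref{lemamortizedcosts}. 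Because each edge lies in exactly one child per multi-spawn, it is charged at most once per multi-spawn; since these increases occur at distinct combination steps and are disjoint, they telescope to at most $p$ per edge, plus an $O(p)$ overshoot at the single step where $e$'s component first reaches size $p$. This bounds the unsaturated part by $O(|E|\cdot p)$.

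For the saturated part, every term equals $p\cdot s(x)$, so it remains to show $\sum_{\text{saturated}}s(x)\le O(|E|)$. A saturated term occurs only after the accumulated spawned children already form an antichain of size $p$, so $C$ has already absorbed $\ge p$ edges and is itself saturated. I would bound the total by a counting argument in the spirit of the fully-formed combinations of Lemma~\ref{lemamortizedcosts}: charge $s(x)\le|E(x)|$ to the edges of $x$. A child with $s(x)<p$ contributes only when $e$'s component transitions from unsaturated to saturated---a \emph{fresh saturation} that each edge undergoes at most once---so these contribute at most $|E|$; the children with $s(x)=p$ are controlled by the fact that saturated components are created at most $O(|E|/p)$ times (each creation absorbs $\ge p$ not-previously-saturated edges) and hence merged as children $O(|E|/p)$ times, contributing $O(|E|/p)\cdot p=O(|E|)$. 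Multiplying the resulting $O(|E|)$ by $p$ gives $O(|E|\cdot p)$.

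The main obstacle is precisely that the accumulated union $B$ of spawned children is \emph{not} a series-parallel component of $C$ (the $b_j$ are spawned at different points of the continuation chain and so do not share a common source), so the terms of \eqref{eqmultispawncosts} do not correspond to genuine pairwise parallel combinations and Lemma~\ref{lemamortizedcosts} cannot be applied verbatim. The argument circumvents this by deferring each credit to the instant the whole multi-spawn completes, where the growth $s(C)\ge\min(p,\,s(x)+\gamma(x))$ is certified from the parallelism of $x$ with $B$ inside $C$. A secondary difficulty is that one multi-spawn may merge many saturated components at once, so the fully-formed-combination count of Lemma~\ref{lemamortizedcosts} must be re-derived through a potential tracking the number of active saturated components rather than counting two-way merges; this is where I expect the bookkeeping to require the most care.
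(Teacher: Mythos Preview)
Your split into unsaturated ($\gamma(x) < p$) versus saturated ($\gamma(x) = p$) reverses the roles compared to what is needed, and the telescoping argument for the unsaturated part breaks as a result. The point of the credit scheme in Lemma~\ref{lemamortizedcosts} is that an edge $e$ is only ever charged while the component \emph{containing $e$} has capped antichain size below $p$; once $s$ of $e$'s component hits $p$, no further charges occur, and that is what caps the total deduction at $2p-1$. In your scheme you charge the edges of $x$ whenever $\gamma(x) < p$, but this condition is on the \emph{other side} of the combination (the accumulated $b_j$'s), not on $x$ itself. Nothing prevents $s(x)=p$ from coexisting with $\gamma(x)<p$, and in that situation the jump $s(C)-s(x)$ is zero while the deduction $\gamma(x)$ is positive, so the increase no longer pays for the deduction. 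Concretely, take a saturated block $X$ with $s(X)=p$ and nest it as $C_i=(e_i,f_i,C_{i-1},g_i,e_i')$ with $C_0=X$ and $e_i,f_i,g_i,e_i'$ single edges. At every level the child $C_{i-1}$ has $s=p$ but $\gamma=1<p$; your scheme charges every edge of $X$ once per level, so after $n$ levels each such edge has been charged $n$ times, which for $n\gg p$ blows past any $O(p)$ budget. The overshoot is not confined to ``the single step where $e$'s component first reaches size $p$''; it recurs indefinitely thereafter.

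The paper's proof makes the opposite split: a sub-combination is \emph{light} when $s(x_i)<p$ and \emph{heavy} when $s(x_i)=p$. For light sub-combinations one charges $\gamma$ to the edges of $x_i$, and now the invariant of Lemma~\ref{lemamortizedcosts} applies verbatim because an edge stops being charged the moment its own component becomes saturated. Heavy sub-combinations are handled by a second, independent amortization via \emph{representative edges}: each component carries at most $p$ designated representatives, and the charging rule for a heavy step is arranged so that any edge charged there ceases to be a representative of the resulting component, hence is charged at most once globally. Your saturated analysis is close in spirit to this second device (and your ``fresh saturation'' observation for the sub-case $s(x)<p$ is fine), but the regime you are missing---$s(x)=p$ with $\gamma(x)<p$---falls in neither of your buckets' valid arguments and is exactly what the representative-edge mechanism is built to cover.
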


\begin{proof}

  We think of each multi-spawn combination $(x_1, \ldots, x_t)$ as
  consisting of $t - 1$ sub-combinations, in which after $i$
  sub-combinations we have combined $x_1, \ldots, x_{i + 1}$. One
  should think of the cost of the $(i - 1)$-th sub-combination is
  $$\min\left(\sum_{j = 1}^{\lfloor i / 2 \rfloor} s(x_{2j}), p\right) \cdot s(x_i).$$
  We say the sub-combination is \defn{heavy} if $s(x_i) = p$ and is
  \defn{light} if $s(x_i) < p$.

  The light sub-combinations can be handled using a similar argument
  as for the non-fully-formed case in Lemma
  \ref{lemamortizedcosts}. At the beginning of the recursive
  construction of $G$, assign to each edge $2p - 1$ credits. For each
  light sub-combination, combining some $x_1, \ldots, x_{i - 1}$
  with some $x_i$, we deduct $\min\left(\sum_{j = 1}^{\lfloor i / 2
    \rfloor} s(x_{2j}), p\right)$ credits from each edge in
  $x_i$. Since the number of edges in $x_i$ is at least $s(x_i)$, the
  sub-combination deducts a total of at least
  $\min\left(\sum_{j = 1}^{\lfloor i / 2 \rfloor} s(x_{2j}),
  p\right) \cdot s(x_i)$ credits. In order to bound the contribution
  of light sub-combinations to \eqref{eqmultispawncosts}, it
  suffices to prove that each edge $e \in E$ has a total of at most
  $2p - 1$ credits deducted from it. This follows by the exact same
  invariant-based argument as used in the proof of Lemma
  \ref{lemamortizedcosts}.

  In order to analyze the contribution of the heavy sub-combinations
  to \eqref{eqmultispawncosts}, we introduce a second amortization
  argument. Again we assign credits to edges, this time giving $p$
  credits to each edge $e$. As we recursively construct $G$ through
  series and multi-spawn combinations, we assign to each component
  $C$ a set of up to $p$ \defn{representative edges}, which includes
  all of $C$'s edges when $C$ contains $p$ or fewer edges, and $p$
  edges otherwise. When a component $C$ is constructed by combining
  two components $C_1$ and $C_2$ in series, $C$'s representative
  edges are the union of $C_1$'s and $C_2$'s (truncated to at most $p$
  edges). When a component $C$ is constructed by a multi-spawn
  combination $(x_1, \ldots, x_t)$ such that at least one of the
  $x_i$'s contains $p$ or more edges, $C$'s representative edges are
  inherited from the first such $x_i$; if none of the $x_i$'s contain
  $p$ or more edges, $C$'s representative edges are the union of the
  representative edges for each of $x_i$'s (truncated to at most $p$
  edges).

  Now consider a heavy sub-combination between sub-components $x_1,
  \ldots, x_{i - 1}$ and $x_i$ (recall that since the subcombination
  is heavy, we have that $s(x_i) = p$). If each of $x_1, \ldots, x_{i
    - 1}$ contains fewer than $p$ edges, then we deduct $p$ credits
  from each representative edge in each of $x_1, \ldots, x_{i -
    1}$. (Note that this is actually all of the edges in $x_1, \ldots,
  x_{i - 1}$.) If at least one of $x_1, \ldots, x_{i - 1}$ contains
  $p$ or more edges, then we deduct $p$ credits from each
  representative edge of $x_i$. In both cases, we deduct at least
  $\min\left(\sum_{j = 1}^{\lfloor i / 2 \rfloor} s(x_{2j}), p\right)
  \cdot s(x_i)$ credits in total, corresponding with the work done
  during the sub-combination.

  The deductions of credits are designed so that two important
  properties hold: (1) whenever an edge $e$ has credits deducted
  during a multi-spawn combination, the edge $e$ will no longer be a
  representative edge in the new component $C$ constructed by the
  multi-spawn combination; and (2) within a multi-spawn combination
  each edge $e$ will have credits deducted from it at most
  once. Combined, these properties ensure that each edge has credits
  deducted at most once during the full construction of $G$. This, in turn,
  ensures that the total number of credits deducted by the algorithm
  is at most $|E| \cdot p$, and that the contribution of heavy
  sub-combinations to \eqref{eqmultispawncosts} is also at most $|E|
  \cdot p$.

\end{proof}

\section{An Offline Approximate-Threshold Algorithm}\label{secappendixsimpleapprox}

In this section we present our algorithm for the approximate
threshold problem in the simpler offline setting, in which rather than
supporting multi-spawn combinations, our recursive algorithm needs
only support series and parallel combinations.

Our offline algorithm for the approximate threshold problem, which we
call \proc{ApproxOff}, will compute the high-water mark $h$ over a
special class of antichains that satisfy a certain property that we
call robustness. (This is similar to the notion of stripped robustness
from \secref{approx}, except without any requirements about
non-critical components; there are also several other minor
differences designed to yield the simplest possible final algorithm.)
The return value of the algorithm will then be determined by whether
the computed value $h$ is greater than $M/2$. In this section, we
define what it means for an antichain to be robust and prove the
correctness of our algorithm. Then, in Section \ref{secrobustcompute},
we describe a recursive procedure for computing the quantity $h$
needed by the algorithm in linear time $O(|E|)$.

When considering an antichain $A = (x_1, \ldots, x_q)$, we partition
the predecessors of the antichain, $\{e \mid e < x_i \text{ for some
}i\}$, into two categories. The \defn{core predecessors} $\mathcal{C}(A)$ of the antichain
$A$ is the set of edges that are predecessors to more than one member
of the antichain,
$$\mathcal{CP}(A) = \{e \mid e < x_i, e < x_j \text{ for some }i \neq j\}.$$ If
an edge $e$ is a predecessor of $A$ but not a core predecessor, then
$e$ is a \defn{local predecessor} of some $x_i$. We denote the set
of local predecessors of $x_i$ by
$$\mathcal{LP}_A(x_i) = \{e \mid e < x_i \text{ and } e \not< x_j
\forall j \neq i\}.$$

We define the \defn{core companions} $\mathcal{CC}(A)$ of the antichain $A$ to be the set of edges $e$ contained in a parallel component
$T_1$ with positive edge sum and whose partnering parallel component $T_2$ contains multiple
edges from the antichain $A$.  For each $x_i$, we define the
\defn{local companions} $\mathcal{LC}_A(x_i)$ of $x_i$ to be the set
of edges $e$ in a parallel component $T_1$ with positive edge sum and whose partnering parallel component $T_2$ contains the edge $x_i$ but not any other edge $x_j \in A$.

The \defn{core water mark} $C(A)$ is the sum of the edge totals over all edges in the core predecessors and companions of $A$,
$$C(A) = \sum_{e \in \mathcal{CP}(A) \cup \mathcal{CC}(A)} t(e).$$
Similarly, the \defn{local water mark} $L^\circ_A(x_i)$ of each edge $x_i \in A$ is given by
$$L^\circ_A(x_i) = m(x_i) + \sum_{e \in \mathcal{LP}_A(x_i) \cup \mathcal{LC}_A(x_i)} t(e).$$

The total water mark of the
antichain can be rewritten as
\begin{equation*}
  W(A) = C(A) + \sum_{i = 1}^q L^\circ_A(x_i).
  \label{eqwatermarkrewritten}
\end{equation*}

The \proc{ApproxOff} algorithm for the approximate threshold problem
will compute the infinite-processor high-water mark, except restricted
only to antichains $A$ whose local water marks all exceed
$\frac{M}{2p}$. We call an antichain $A = (x_1, \ldots, x_q)$
\defn{robust} if $L^\circ_A(x_i) > \frac{M}{2p}$ for each edge
$x_i$. The \defn{$p$-processor robust memory high-water mark}
$\HH_p^\circ(G)$ is defined to be
$$\HH^\circ_p(G) = \max_{A \in \mathcal{R}, \ |A| \le p} W(x_1, \ldots, x_q),$$
where $\mathcal{R}$ is the set of robust antichains in $E$.

The first step in our algorithm will be to compute the
infinite-processor robust memory high-water mark
$\HH_\infty^\circ(G)$. Then, if $\HH_\infty^\circ(G)
\le M / 2$, our algorithm will return 0, and if
$\HH_\infty^\circ(G) > M / 2$, our algorithm will return 1.

Computing $\HH_\infty^\circ(G)$ can be done in linear time $O(|E|)$
using a recursive algorithm described in Section
\ref{secrobustcompute}. The computation is made significantly easier,
in particular, by the fact that it is permitted to consider the
infinite-processor case rather than restricting to $p$ processors or
fewer.

On the other hand, the fact that $\HH_\infty^\circ(G)$ should
tell us anything useful about $\HH_p(G)$ is non-obvious. In the
rest of this section, we will prove the following theorem, which
implies the correctness of the algorithm:

\begin{theorem} \
   \begin{itemize}
  \item If $\HH_\infty^\circ(G) \le M / 2$, then $\HH_p(G)
    \le M$.
  \item If $\HH_\infty^\circ(G)  > M / 2$, then $\HH_p(G) > M / 2$.
  \end{itemize}
\label{thmalgcorrect}
\end{theorem}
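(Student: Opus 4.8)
The plan is to mirror the three-part argument used for Theorem~\ref{thmalgcorrect2} (namely Lemma~\ref{lemremovenonrobust2}, Corollary~\ref{cor:thmpart1}, and Lemma~\ref{lemreturn2}), specialized to the simpler offline setting where $G$ is built only from series and parallel combinations and where the relevant antichains are \emph{robust} rather than stripped robust. The argument is in fact cleaner here, because there are no non-critical components to peel away, and because the local water mark $L^\circ_A(x_i)$ is \emph{defined} by the explicit formula $m(x_i) + \sum_{e \in \mathcal{LP}_A(x_i) \cup \mathcal{LC}_A(x_i)} t(e)$, which coincides exactly with the quantity $S_i$ that appears in the second half of the proof.

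For the first bullet, I would first prove the offline analog of Lemma~\ref{lemremovenonrobust2}, namely $\HH_p^\circ(G) \ge \HH_p(G) - M/2$. Starting from an antichain $A$ of size at most $p$ achieving $\HH_p(G)$, if $A$ is not robust then some $x_i$ has $L^\circ_A(x_i) \le M/(2p)$, and I would delete that single edge. The key observation is that removing $x_i$ deletes from the water mark exactly $m(x_i)$, the local predecessors $\mathcal{LP}_A(x_i)$ (which cease to be predecessors of any antichain edge), and the local companions $\mathcal{LC}_A(x_i)$ (whose partner now contains no antichain edge), while the contributions of core predecessors and core companions merely migrate to local status without changing $W$; since removal can only \emph{introduce} new companions (which add non-negative amounts), the net drop is at most $L^\circ_A(x_i) \le M/(2p)$. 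Each removal shrinks the antichain by one, so after at most $p$ removals we reach a robust antichain $B$ with $W(B) \ge \HH_p(G) - p \cdot M/(2p) = \HH_p(G) - M/2$. Since $\HH_\infty^\circ(G) \ge \HH_p^\circ(G) \ge W(B)$, the hypothesis $\HH_\infty^\circ(G) \le M/2$ yields $\HH_p(G) \le \HH_p^\circ(G) + M/2 \le M$.

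For the second bullet, I would mirror Lemma~\ref{lemreturn2}. If $\HH_\infty^\circ(G) > M/2$, fix a robust antichain $A = (x_1, \ldots, x_q)$ with $W(A) > M/2$. If $q \le p$ then $\HH_p(G) \ge W(A) > M/2$ immediately. Otherwise I would truncate to $B = (x_1, \ldots, x_p)$ and show $W(B) > M/2$. Writing $S_i = L^\circ_A(x_i)$, each $S_i > M/(2p)$ by robustness, and because the local predecessors and local companions of each $x_i$ (for $i \le p$) remain local to $x_i$ in $B$, each $S_i$ survives intact as a contribution to $W(B)$. Decomposing $W(B) = \sum_{i=1}^p S_i + \sum_{T \in \mathcal{T}} \sum_{e \in T} t(e) + \sum_{e \in P} t(e)$, where $\mathcal{T}$ collects the companion components of $B$ not already counted and $P$ the predecessors of $B$ not counted among the local predecessors, I would bound the three terms: $\sum_{i=1}^p S_i > p \cdot M/(2p) = M/2$; each $T \in \mathcal{T}$ has positive edge sum, so the middle term is non-negative; and $P$ is a downset, being the intersection of the downset $\bigcup_{i \le p} \{e < x_i\}$ with the downset $\bigcup_{x_i \neq x_j \in A} (\{e < x_i\} \cap \{e < x_j\})$, so $\sum_{e \in P} t(e) \ge 0$ by the downset-non-negativity property. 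Combining gives $W(B) > M/2$.

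The main obstacle is the truncation step in the second bullet. The difficulty is that deleting the edges $x_{p+1}, \ldots, x_q$ can change which components are companions and can reclassify shared predecessors, so one cannot simply compare $L^\circ_B(x_i)$ with $L^\circ_A(x_i)$ term by term. The resolution is to verify that each original local contribution $S_i$ is preserved, that only non-negative companion contributions are added, and, crucially, that the edges whose status is genuinely lost (shared predecessors no longer counted) form a downset, so that downset-non-negativity prevents their possibly-negative totals from pulling $W(B)$ below $M/2$. Establishing this downset structure is the one place where the argument uses something beyond bookkeeping.
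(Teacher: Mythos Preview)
Your proposal is correct and follows essentially the same approach as the paper: Lemma~\ref{lemremovenonrobust} proves $\HH_p^\circ(G)\ge \HH_p(G)-M/2$ by exactly the iterative single-edge removal you describe, Corollary~\ref{correturn0} deduces the first bullet, and Lemma~\ref{lemreturn1} proves the second bullet by truncating a robust antichain to $p$ edges and invoking downset-non-negativity. The only cosmetic difference is in the second bullet: the paper bundles the residual predecessors and residual companions into a single set $\mathcal{T}\cap(\mathcal{CP}(A)\cup\mathcal{CC}(A))$ and shows that whole set is a downset, whereas you (mirroring Lemma~\ref{lemreturn2}) split off the extra companion components and handle them via their positive edge sums before applying downset-non-negativity to $P$ alone; both decompositions give the same bound.
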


To prove Theorem \ref{thmalgcorrect}, we begin by comparing
$\HH_p^\circ (G)$ and $\HH_p(G)$:
\begin{lemma} $$\HH_p^\circ (G) \ge \HH_p(G) -
\frac{M}{2}.$$
\label{lemremovenonrobust}
\end{lemma}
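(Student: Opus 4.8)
The plan is to mirror the argument for the stripped-robust case (Lemma~\ref{lemremovenonrobust2}): starting from an antichain that witnesses $\HH_p(G)$, I would repeatedly strip away edges whose local contribution is small, arguing that each removal costs little water mark and that after few enough removals the antichain becomes robust. Concretely, let $A_1 = (x_1, \ldots, x_q)$ be an antichain with $q \le p$ and $W(A_1) = \HH_p(G)$. If $A_1$ is already robust, then $\HH_p^\circ(G) \ge W(A_1) = \HH_p(G)$ and the bound holds trivially; so assume some edge $x_i$ violates robustness, i.e.\ $L^\circ_{A_1}(x_i) \le \frac{M}{2p}$. I would delete $x_i$ to obtain $A_2 = A_1 \setminus \{x_i\}$, which has one fewer edge, and iterate, producing a sequence $A_1, A_2, \ldots, A_k$ that terminates at a robust antichain $A_k$ (termination is guaranteed because the empty antichain is vacuously robust).

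The crux---and the step that differs from the stripped-robust proof---is establishing that each deletion loses at most $\frac{M}{2p}$ of water mark, i.e.\ $W(A_{j+1}) \ge W(A_j) - \frac{M}{2p}$. In the stripped-robust case the local contribution is defined directly as the water-mark drop $W(A) - W(A \setminus \{x_i\})$, so nothing is needed; here $L^\circ_A(x_i)$ is instead defined by the explicit formula $m(x_i) + \sum_{e \in \mathcal{LP}_A(x_i) \cup \mathcal{LC}_A(x_i)} t(e)$, so I must show that removing $x_i$ drops the water mark by \emph{at most} $L^\circ_A(x_i)$. Using the decomposition $W(A) = C(A) + \sum_j L^\circ_A(x_j)$, I would track how each edge of $A$'s predecessor/companion sets is reclassified when $x_i$ leaves. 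The key observations are: (i) a core predecessor (resp.\ core companion) of $A$ shared by $x_i$ and exactly one other $x_j$ simply demotes to a local predecessor (resp.\ local companion) of $x_j$, so its weight $t(e)$ is still counted; (ii) the edges in $\mathcal{LP}_A(x_i) \cup \mathcal{LC}_A(x_i)$, together with $m(x_i)$, are exactly what is lost; and (iii) removing $x_i$ can only \emph{introduce} new companion components (when $x_i$ was the unique antichain edge in a positive-edge-sum component whose partner still contains an antichain edge), and these contribute positively. Items (i)--(iii) together give $W(A \setminus \{x_i\}) = W(A) - L^\circ_A(x_i) + (\text{nonnegative new-companion terms}) \ge W(A) - L^\circ_A(x_i)$, which is the needed inequality.

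Finally I would assemble the accounting exactly as in Lemma~\ref{lemremovenonrobust2}. Since each step removes one edge, the process runs for at most $k - 1 \le q \le p$ steps before reaching the robust antichain $A_k$, and each step costs at most $\frac{M}{2p}$, so
\[
W(A_k) \ge W(A_1) - (k-1)\cdot \frac{M}{2p} \ge W(A_1) - p \cdot \frac{M}{2p} = \HH_p(G) - \frac{M}{2}.
\]
Because $A_k$ is robust with $|A_k| \le p$, it is feasible for $\HH_p^\circ(G)$, giving $\HH_p^\circ(G) \ge W(A_k) \ge \HH_p(G) - \frac{M}{2}$, as claimed. I expect the reclassification bookkeeping of observation~(ii)---carefully verifying that no predecessor or companion weight is double-counted or silently dropped across the core/local boundary---to be the main obstacle, since it is precisely where the offline proof must do work that the stripped-robust definition folds into its definition of local contribution.
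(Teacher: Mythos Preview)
Your proposal is correct and follows essentially the same iterative edge-removal argument as the paper's proof. The paper asserts the key inequality $W(A_{i+1}) \ge W(A_i) - L^\circ_{A_i}(x_j)$ with only a one-line parenthetical (that removal of $x_j$ may introduce new companion components), whereas you spell out the predecessor/companion reclassification in items (i)--(iii); your treatment is more detailed but the structure and conclusion are identical.
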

\begin{proof} Consider an antichain $A = (x_1, \ldots, x_q)$,
  with $q \le p$, that is not robust. One might try to construct a
  robust antichain $B$ by removing each $x_i \in A$ that satisfies
  $L^\circ_A(x_i) \le \frac{M}{2p}$. The removal of these $x_i$'s, however,
  would change the sets of local predecessors and companions for the remaining
  $x_j$'s, making it so that the new antichain $B$ may still not be
  robust.

  One can instead obtain a robust antichain through a more iterative
  approach. Begin with the antichain $A_1 = A$ that is not robust.
  Since $A_1$ is not robust, some $x_i \in A$ satisfies $L^\circ_A(x_i) \le
  \frac{M}{2p}$. Define $A_2$ to be the same antichain with $x_i$
  removed. If the antichain $A_2$ is also not robust, then pick some
  edge $x_j \in A_2$ such that $L^\circ_{A_2}(x_j) \le \frac{M}{2p}$, and
  define $A_3$ to be $A_2$ with $x_j$ removed. Continue like this
  until we reach some $A_r$ that is robust. (Note that one legal
  option for $A_r$ is the empty antichain, which is considered to be
  robust.)

  For any two consecutive antichains $A_i$ and $A_{i + 1}$ in the
  sequence, that differ by the removal of an edge $x_j$, the
  water marks satisfy
  \begin{equation}
    W(A_{i + 1}) \ge W(A_i) - L^\circ_{A_i}(x_j).
    \label{eqedgeremoval}
  \end{equation}
  (Note that the reason that \eqref{eqedgeremoval} is not true with
  equality is simply that the removal of $x_j$ may allow for the
  addition of a new companion component to the antichain $A_{i +
    1}$, thereby making $W(A_{i + 1})$ greater than $W(A_i) -
  L^\circ_{A_i}(x_j)$.)
  
  Since we only remove edges $x_j$ satisfying $L^\circ_{A_i}(x_j) \le \frac{M}{2p}$, it follows that
  $$W(A_{i + 1}) \ge W(A_i) - \frac{M}{2p}.$$ Moreover, in the
  processes of constructing the robust antichain $A_r$, we can remove
  a total of at most $q$ edges from the original antichain $A = (x_1,
  \ldots, x_q)$. Thus
  $$W(A_{r}) \ge W(A) - q \cdot \frac{M}{2p} \ge W(A) - \frac{M}{2}.$$
  This, in turn, implies that $\HH_p^\circ (G) \ge \HH_p(G) -
  \frac{M}{2}$, as desired.
\end{proof}

The following corollary proves the first part of Theorem \ref{thmalgcorrect}
\begin{corollary} If $\HH_\infty^\circ(G) \le M / 2$, then
$\HH_p(G) \le M$.
\label{correturn0}
\end{corollary}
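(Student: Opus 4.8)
The plan is to chain together Lemma \ref{lemremovenonrobust} with a single monotonicity observation relating $\HH_p^\circ(G)$ to $\HH_\infty^\circ(G)$. The corollary is an immediate consequence of the lemma, so the proof will be short; the substantive work has already been done in establishing Lemma \ref{lemremovenonrobust}.

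First I would observe that $\HH_p^\circ(G) \le \HH_\infty^\circ(G)$. This follows directly from the definitions: $\HH_p^\circ(G)$ is the maximum water mark over robust antichains of size at most $p$, whereas $\HH_\infty^\circ(G)$ is the maximum over \emph{all} robust antichains, imposing no size restriction. Since the robust antichains of size at most $p$ form a subcollection of all robust antichains, the maximum over the smaller collection cannot exceed the maximum over the larger one. Combining this with the hypothesis $\HH_\infty^\circ(G) \le M/2$ gives $\HH_p^\circ(G) \le M/2$.

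Next I would apply Lemma \ref{lemremovenonrobust}, which states $\HH_p^\circ(G) \ge \HH_p(G) - \frac{M}{2}$, or equivalently $\HH_p(G) \le \HH_p^\circ(G) + \frac{M}{2}$. Feeding in the bound from the previous step yields
$$\HH_p(G) \le \HH_p^\circ(G) + \frac{M}{2} \le \HH_\infty^\circ(G) + \frac{M}{2} \le \frac{M}{2} + \frac{M}{2} = M,$$
which is exactly the claim.

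There is no genuine obstacle here: once the trivial monotonicity in the processor count is recorded, the result drops out of Lemma \ref{lemremovenonrobust} by arithmetic. All of the difficulty is concentrated in that lemma's iterative edge-removal argument, which shows that pruning a size-$\le p$ antichain down to a robust one costs at most $M/2$ in total water mark, and which we are permitted to assume.
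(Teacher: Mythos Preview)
Your proof is correct and follows essentially the same approach as the paper: first use the trivial monotonicity $\HH_p^\circ(G)\le\HH_\infty^\circ(G)$ to get $\HH_p^\circ(G)\le M/2$, then apply Lemma~\ref{lemremovenonrobust} to conclude $\HH_p(G)\le M$. The paper compresses this into two sentences, but the logic is identical.
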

\begin{proof}
If
$\HH_\infty^\circ(G) \le M / 2$, then $\HH_p^\circ(G)
\le M / 2$, and thus by Lemma \ref{lemremovenonrobust}, $\HH_p(G) \le M$.
\end{proof}

The second half of Theorem \ref{thmalgcorrect} is given by Lemma
\ref{lemreturn1}:
\begin{lemma} If $\HH_\infty^\circ(G) > M / 2$, then
$\HH_p(G) > M / 2$.
\label{lemreturn1}
\end{lemma}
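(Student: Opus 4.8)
The plan is to mirror the two-case structure already used for Lemma \ref{lemreturn2}, substituting the robust local water marks $L^\circ_A(x_i)$ for the stripped-robust local contributions. Since $\HH_\infty^\circ(G) > M/2$, there is a robust antichain $A = (x_1, \ldots, x_q)$ with $W(A) > M/2$, and by robustness every edge satisfies $L^\circ_A(x_i) > \frac{M}{2p}$. If $q \le p$, then $\HH_p(G) \ge W(A) > M/2$ and we are done. The substance is the case $q > p$, where I would show that the truncated antichain $B = (x_1, \ldots, x_p)$ already satisfies $W(B) > M/2$; this suffices because $B$ has at most $p$ edges.

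For the case $q > p$ I would work from the explicit decomposition $W(B) = C(B) + \sum_{i=1}^p L^\circ_B(x_i)$. The key structural observation is that truncation only \emph{enlarges} local sets: because $B \subseteq A$, any edge local to $x_i$ in $A$ is still local to $x_i$ in $B$, so $\mathcal{LP}_A(x_i) \subseteq \mathcal{LP}_B(x_i)$ and $\mathcal{LC}_A(x_i) \subseteq \mathcal{LC}_B(x_i)$. This lets me regroup every contribution to $W(B)$ into three pairwise-disjoint buckets: (1) the quantities $S_i := L^\circ_A(x_i)$ for $i = 1, \ldots, p$, which collect $m(x_i)$ together with the edges already local to $x_i$ in $A$ and the companion components whose partner contains only $x_i$ among the edges of $A$; (2) the set $\mathcal{T}$ of all remaining companion components of $B$; and (3) the set $P$ of predecessor edges of $B$ that are not absorbed into any $S_i$. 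Verifying that these buckets genuinely partition the predecessors and companions of $B$ (using the disjointness of the $\mathcal{LP}_A(x_i)$ and of the companion components) yields the exact identity
\[
  W(B) = \sum_{i=1}^p S_i + \sum_{T \in \mathcal{T}} \sum_{e \in T} t(e) + \sum_{e \in P} t(e).
\]

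Robustness then gives $\sum_{i=1}^p S_i > p \cdot \frac{M}{2p} = \frac{M}{2}$, and each companion component has positive edge sum, so the $\mathcal{T}$-term is non-negative and may be dropped, leaving $W(B) > \frac{M}{2} + \sum_{e \in P} t(e)$. It remains to control the last sum, and this is the step I expect to be the real work. I would show $P$ is a downset by writing
\[
  P = \left(\bigcup_{i=1}^p \{e < x_i\}\right) \cap \left(\bigcup_{x_i \neq x_j \in A} \{e < x_i\} \cap \{e < x_j\}\right),
\]
observing that each principal set $\{e < x_i\}$ is a downset and that downsets are closed under finite unions and intersections. The downset-non-negativity property then forces $\sum_{e \in P} t(e) \ge 0$, whence $W(B) > \frac{M}{2}$ and therefore $\HH_p(G) > M/2$.

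The main obstacle is precisely bucket (3): one cannot argue edge-by-edge that $L^\circ_B(x_i) \ge L^\circ_A(x_i)$, since truncating $A$ down to $B$ can attach net-negative predecessor edges to a surviving $x_i$ (an edge that was a core predecessor of $A$ may become a local predecessor of a single $x_i$ in $B$). The remedy is the aggregation above: sweep all such stray edges into the single set $P$ and apply downset-non-negativity \emph{globally} rather than locally. Everything else is bookkeeping to confirm the three buckets partition the predecessors and companions of $B$ without double counting.
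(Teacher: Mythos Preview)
Your proposal is correct and follows essentially the same strategy as the paper: truncate a robust antichain witnessing $\HH_\infty^\circ(G) > M/2$ to its first $p$ edges, extract $\sum_{i=1}^p L^\circ_A(x_i) > M/2$ from $W(B)$, and show the leftover is non-negative via the downset-non-negativity property. The only difference is bookkeeping: the paper bundles \emph{all} leftover edges (both core predecessors and core companions of $A$ that lie under $B$) into a single set $\mathcal{T}\cap(\mathcal{CP}(A)\cup\mathcal{CC}(A))$ and argues that this intersection of downsets is itself a downset; you instead split the leftover into companion components (dispatched by positivity of their edge sums) and the predecessor set $P$ (dispatched by downset-non-negativity), which is exactly the decomposition the paper uses in its proof of Lemma~\ref{lemreturn2}. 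Your version has the minor advantage that the three-bucket identity is visibly exact, whereas the paper's single-set identity glosses over companion components of $B$ that were not companions of $A$; but the underlying argument is the same.
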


\begin{proof} Since
$\HH_\infty^\circ(G) > M / 2$, one of the following must be
true:
\begin{itemize}
\item \textbf{There is a robust antichain $A = (x_1, \ldots, x_q)$
  with $q \le p$ such that $W(A) > M / 2$: }In this case, we trivially
  get that $\HH_p(G) > M / 2$.

\item \textbf{There is a robust antichain $A = (x_1, \ldots, x_q)$
  with $q > p$ such that $W(A) > M / 2$: }This case is somewhat more
  subtle, since the large number of edges in the antichain $A$ could
  cause $W(A)$ to be much larger than $\HH_p(G)$. We will use the
  robustness of $A$ in order to prove that the potentially much
  smaller antichain $B = (x_1, \ldots, x_p)$ still has a large water
  mark $W(B) > \frac{M}{2}$.

  Let $\mathcal{T}$ denote the set of edges $e \in E$ such that either
  $e \le x_i$ for some $i \in [p]$, or $e$ is contained in a companion
  parallel component of $B$. The quantity $W(B)$ can be written as
  \begin{align*}
    W(B)  &= \sum_{i = 1}^p m(x_i) + \sum_{e \in \mathcal{T}} t(e) \\
          &= \sum_{i = 1}^p L^\circ_A(x_i) + \sum_{e \in \mathcal{T} \cap (\mathcal{CP}(A) \cup \mathcal{CC}(A))} t(e).\end{align*}

  By the robustness of $A$, each local water mark $L^\circ_A(x_i)$ is
  greater than $\frac{M}{2p}$. Thus
  $$W(B) > \frac{M}{2} + \sum_{e \in \mathcal{T} \cap (\mathcal{CP}(A) \cup \mathcal{CC}(A))} t(e).$$

  Recall the downset-non-negativity property, which requires that
  every downset $S \subseteq E$ (meaning that the predecessors of the
  edges in $S$ are all in $S$) satisfy $\sum_{e \in S} t(e) \ge 0$. To
  see that the $\mathcal{T}$ is a downset, observe that it consists of
  two parts, the set $\mathcal{T}_1$ of predecessors of $B$, and the
  set $\mathcal{T}_2$ of edges contained in companion parallel
  components to $B$; since the set $\mathcal{T}_1$ is a downset,
  and because the predecessors of edges in $\mathcal{T}_2$ are all
  either in $\mathcal{T}_2$ or in $\mathcal{T}_1$, the full set
  $\mathcal{T} = \mathcal{T}_1 \cup \mathcal{T}_2$ is a
  downset. Similarly we claim that $\mathcal{CP}(A) \cup
  \mathcal{CC}(A)$ is a downset; in particular, $\mathcal{CP}(A)$ is a
  downset by its definition, and the predecessors of edges in
  $\mathcal{CC}(A)$ are all either contained in $\mathcal{CC}(A)$ or
  in $\mathcal{CP}(A)$. Since we have shown that $\mathcal{T}$ and
  $\mathcal{CP}(A) \cup \mathcal{CC}(A)$ are downsets, and because the
  intersection of two downsets is necessarily also a downset, it
  follows that $\mathcal{T} \cap (\mathcal{CP}(A) \cup
  \mathcal{CC}(A))$ is a downset.

  Applying the downset-non-negativity property, we get that
$$\sum_{e \in \mathcal{T} \cap (\mathcal{CP}(A) \cup \mathcal{CC}(A))} t(e) \ge 0,$$
  implying that $W(B) > \frac{M}{2}$, and completing the proof.
\end{itemize}
\end{proof}

\subsection{Computing $\HH_\infty^\circ(G)$ in linear time}\label{secrobustcompute}

In this section, we present a recursive algorithm for computing the
infinite-processor robust high-water mark $\HH_\infty^\circ(G)$ in
linear time $O(|E|)$. This, in turn, can be used within the
\proc{ApproxOff} algorithm to solve the approximate-threshold problem
in linear time $O(|E|)$.  We assume that we are given the
series-parallel DAG $G$, along with the labels $t(e)$ and $m(e)$ for
each $e \in G$.



Suppose we recursively build $G$ from series and parallel
combinations. Whenever we create a new component $C$ (by combining
two old ones) we will maintain the following information on the
component:
\begin{itemize}
\item The total allocation and freeing work done in $C$,
  $$\memtotal = \sum_{e \in C} t(e).$$
\item The memory high-water mark  with one processor,
  $$\maxchain = \HH_1(C).$$
\item The infinite-processor memory high-water mark
  restricted only to robust antichains containing more than one
  edge:
  $$\maxrobust = \max_{A \in \mathcal{R}, \ |A| > 1} W(A),$$ where
  $\mathcal{R}$ is the set of robust antichains in $G$. If $C$
  contains no multi-edge robust antichains, then $\maxrobust = \NULL$,
  and is treated as $-\infty$.
\end{itemize}

 The special handling in the recursion of antichains with only one
 edge (i.e., by $\maxchain$) is necessary because the local
 contribution of that edge $x$ is not yet completely determined until
 at least one other edge is added to the antichain. On the other hand,
 once a robust antichain contains multiple edges, the local
 contribution of each edge is now fixed, even if we combine this
 antichain with other antichains as we recursively construct the
 graph. This allows for all multi-edge robust antichains to be
 grouped together in the variable $\maxrobust$.


As a base case, for a component $C$ consisting of a single edge
$e$, we initialize the variables as follows: $\memtotal = t(e)$,
$\maxchain = m(e)$, and $\maxrobust = \NULL$.

When we combine two components $C_1$ and $C_2$ in series to build a new component $C$, the three variables can be updated as follows:
\begin{itemize}
\item We update $C.\memtotal$ as
  $$C_1.\memtotal + C_2.\memtotal.$$
  In particular, $\sum_{e \in C}t(e) = \sum_{e \in C_1} t(e) + \sum_{e \in C_2} t(e)$.
\item We update $C.\maxchain$ as
  $$\max(C_1.\maxchain, C_1.\memtotal +
  C_2.\maxchain).$$ In particular, every single-edge antichain in
  $C_1$ has the same water mark in $C$ as it did in $C_1$, and every
  single-edge antichain in $C_2$ has cost in $C$ an additional
  $C_1.\memtotal$ greater than it did in $C_2$.
\item We update $C.\maxrobust$ as
   \[
    \max(C_1.\maxrobust, C_1.\memtotal + C_2.\maxrobust).
  \]

  In particular, the set of multi-edge robust
  antichains in the new component $C$ is the union of the set of
  multi-edge antichains in $C_1$ with the set of multi-edge antichains
  in $C_2$. Whereas each of the multi-edge antichains in $C_1$ have the
  same water mark in $C$ as they did in $C_1$, the multi-edge antichains
  in $C_2$ each have their water marks increased by $C_1.\memtotal$.
\end{itemize}

When we combine two components $C_1$ and $C_2$ in parallel to build a new component $C$, the three variables can be updated as follows:
\begin{itemize}
\item We update $C.\memtotal$ as
  $$C_1.\memtotal + C_2.\memtotal.$$
  In particular, just as before, $\sum_{e \in C}t(e) = \sum_{e \in C_1} t(e) + \sum_{e \in C_2} t(e)$.
\item We update $C.\maxchain$ as
  \begin{equation*}
    \begin{split}
  \max(C_1.\maxchain + \max(0, C_2.\memtotal),  \\
  C_2.\maxchain + \max(0, C_1.\memtotal)).
    \end{split}
  \end{equation*}
  In particular, the
  set of single-edge antichains in $C$ is the union of the set of
  single-edge antichains in $C_1$ with the set of single-edge
  antichains in $C_2$. Since the water marks of the antichains are the
  same in $C_1$ and $C_2$ as they are in $C$, except with the addition
  of $\max(0, C_2.\memtotal)$ and $\max(0, C_1.\memtotal)$
  respectively (due to the possibility of $C_2$ and $C_1$ being
  suspended companion parallel components), $C.\maxchain$ can be
  updated by taking a simple maximum of the two options.
\item
  The update of $C.\maxrobust$ is slightly more subtle. Define
  $C_1.\overline{\maxchain}$ and $C_2.\overline{\maxchain}$ to be the
  highest water marks achieved by robust single-edge antichains in
  $C_1$ and $C_2$, respectively. That is,
  $$C_1.\overline{\maxchain} =
  \begin{cases}
    C_1.\maxchain \text{ if }C_1.\maxchain > \frac{M}{2p} \\
    \NULL \text{ otherwise,}
  \end{cases}
  $$
  and
  $$C_2.\overline{\maxchain} =
  \begin{cases}
    C_2.\maxchain \text{ if }C_2.\maxchain > \frac{M}{2p} \\
    \NULL \text{ otherwise.}
  \end{cases}
  $$

  Define $\mathcal{R}(C)$, $\mathcal{R}(C_1)$, and $\mathcal{R}(C_2)$ to be
  the sets of robust antichains in $C$, $C_1$, and $C_2$,
  respectively. Then, because $C$ is obtained by combining $C_1$ and
  $C_2$ in parallel,
  $$\mathcal{R}(C) = \{x \cup y \mid x \in \mathcal{R}(C_1), y \in
  \mathcal{R}(C_2)\}.$$

  When computing $C.\maxrobust$, we are interested exclusively in the
  antichains $A$ satisfying $|A| > 1$. If $x = A \cap C_1$ and $y = A
  \cap C_2$, then the requirement that $|A| > 1$ translates into the
  requirement that (at least) one of the following three requirements
  holds:
  \begin{enumerate}
  \item $|x| = |y| = 1$: The maximum water mark for robust antichains $a$ such that $|x| = |y| = 1$
  is given by
  $$C_1.\overline{\maxchain} + C_2.\overline{\maxchain}.$$
  \item $|y| > 1$: The maximum water mark for robust antichains $a$ such that  $|y| > 1$ is given by
    \begin{multline*}
      \max(C_1.\overline{\maxchain}, \\ C_1.\maxrobust, C_1.\memtotal, 0) \\
      + C_2.\maxrobust,
    \end{multline*}
    where entries in the maximum correspond
    with the cases where $|x| > 1$; $|x| = 1$; $|x| = 0$ and $C_1$ is
    included as a suspended companion component; and $|x| = 0$ and $C_1$ is
    not included as a suspended companion component.
  \item $|x| > 1$: The maximum water mark for robust antichains $a$ such that  $|x| > 1$ is given by
    \begin{multline*}
      \max(C_2.\overline{\maxchain}, \\ C_2.\maxrobust, C_2.\memtotal, 0) \\ + C_1.\maxrobust,
    \end{multline*}
    where entries in the maximum correspond with the cases where $|y|
    > 1$; $|y| = 1$; $|y| = 0$ and $C_2$ is included as a suspended
    companion component; and $|y| = 0$ and $C_2$ is not included as
    a suspended companion component.
  \end{enumerate}
  Combining the cases, we can update $C.\maxrobust$ as
  \[\begin{array}{ll}
 \max \Big( & C_1.\overline{\maxchain} + C_2.\overline{\maxchain}, \\
                            & \max(C_1.\overline{\maxchain}, C_1.\maxrobust, \\ & C_1.\memtotal, 0) + C_2.\maxrobust, \\
  & \max(C_2.\overline{\maxchain}, C_2.\maxrobust, \\ & C_2.\memtotal, 0) + C_1.\maxrobust \Big).
  \end{array}
  \]
\end{itemize}

Using the recursive construction described above, we can compute the
variables $\memtotal$, $\maxchain$, and $\maxrobust$ for our graph $G$ in
linear time $O(|E|)$. In order to then compute
$\HH_\infty^\circ(G)$, the infinite-processor high-water mark
considering only robust antichains, we simply compute
\begin{multline*}
\HH_\infty^\circ(G) =
\max\Big(  
\left\{
  \begin{array}{l}
    \maxchain \text{ if }\maxchain > \frac{M}{2p} \\
    \NULL \text{ otherwise}
  \end{array}\right\}
  ,\\ \maxrobust, 0
  \Big).
\end{multline*}

  \section{Recursing on Multi-Spawn Components}\label{secmultispawnrecursion}

  In this section, we complete the recursion for
  $\HH_{\infty}^\bullet(G)$ discussed in Section
  \ref{sec:strippedrobustcompute} by handling the case of multi-spawn
  combinations.

 Consider a multi-spawn combination $C$ as in Figure
 \ref{figmultispawns} with components $C_1 = a_0, C_2 = b_1, C_3 =
 a_1, \ldots, C_{2k + 1} = a_k$.

  Throughout the section, we will use the notation $m(b_i)$ and
  $t(b_i)$ introduced in Lemma \ref{lemcandidateantichains}. When
  $b_i$ is the first case of the lemma, we say that \defn{$b_i$ is a
    natural companion} and that $b_i$'s \defn{natural contribution} is
  $t(b_i)$; when $b_i$ is in the second case, we say that \defn{$b_i$
    is naturally dormant} and that $b_i$'s \defn{natural contribution}
  is $0$; when $b_i$ is in the third case, we say that \defn{$b_i$ is
    naturally active} and that $b_i$'s \defn{natural contribution} is
  $m(b_i)$.

Recall that the execution of the parallel program on one thread
computes the recursive values for each $C_i$ with $i$ iterating
through the range $i = 1, \ldots, 2k + 1$. We wish to use these in
order to compute the recursive values for $C$.

 To do this, we maintain a collection of intermediate values during
 the execution of the components $C_1 , \ldots, C_{2k + 1}$. Before
 introducing these intermediate values, we define a few terms.

 We call a stripped robust antichain $A$ in $C$ a \defn{candidate}
 antichain if for each $b_i$ in $C$ such that $A$ contains an edge in
 one of $a_i, b_{i + 1}, a_{i + 1}, \ldots$, the three properties
 stated in Lemma \ref{lemcandidateantichains} hold. (In particular,
 $b_i$'s local contribution $L^\bullet_A(b_i)$ should be precisely
 $b_i$'s natural contribution.)  By Lemma
 \ref{lemcandidateantichains}, when computing computing
 $C.\maxrobust$, it suffices to consider only multi-edge candidate
 antichains.
 
 In order to describe the intermediate values that we maintain during
 the execution of the components, we will also need the notion of a
 suspendend-end and ignored-end water mark. (These are the same
 definitions as used in Appendix \ref{secappendixonlineexact}.) If an
 antichain $A$ in $C$ contains only edges in $b_1, \ldots, b_{k}$, and
 $b_t$ is the largest $t$ such that $b_t$ contains a edge in $A$, then
 we say $A$ has a \defn{suspended end} if the components $a_{t + 1},
 b_{t + 2}, \ldots, a_{k}$ form a companion component of $A$ (which
 occurs if the sum of their edge costs is net positive).  We cannot
 know whether $A$ will have a suspended end until we have seen all of
 $a_{t + 1}, b_{t + 2}, \ldots, a_k$ (i.e., until we have completed
 the entire multi-spawn component). Thus, for each antichain $A$ that
 contains only edges in $b_1, \ldots, b_k$, we consider both a
 ``suspended-end'' version of its water mark and a ``ignored-end''
 version of its water mark.  The \defn{suspended-end water mark} of
 $A$ is $W(A)$ if $A$ has a suspended end, and is $W(A) + \sum_{e \in
   a_{t + 1}, b_{t + 2}, \ldots, a_{k}} t(e)$ if $A$ does not have a
 suspended end (i.e., it is the water mark $A$ would have if it had a
 suspended end). Similarly, the \defn{ignored-end water mark} of $A$
 is $W(A)$ if $A$ does not have a suspended end, and is $W(A) -
 \sum_{e \in a_{t + 1}, b_{t + 2}, \ldots, a_{k}} t(e)$ if $A$ does
 have a suspended end. These definitions will prove useful when
 defining the intermediate values maintained by our
 algorithm. Additionally, for any antichain $A$ that contains an edge
 in some $a_t$, we define the \defn{ignored-end water mark} of $A$ to
 be the water mark of $A$; thus the ignored-end water mark is defined
 for all antichains in $A$ of $C$.

 After having executed each of $C_1, \ldots, C_l$, let $l_1$ be the
 index of the largest-indexed $a_i$ executed and $l_2$ be the index of
 the largest-indexed $b_i$ executed. We maintain the following
 intermediate values:
 \begin{itemize}
    \item $\multirobustsuspendend_l$: This is the maximum
      suspended-end cost of any multi-edge candidate antichain $A$ in
      $C_1 \cup \cdots \cup C_l$ containing only edges in $b_1,
      \ldots, b_{l_2}$. If no such $A$ exists, this is $\NULL$. Note
      that here $C_1 \cup \cdots \cup C_l$ is treated as a multi-spawn
      component and the costs of the antichains are considered just
      within the graph $C_1 \cup \cdots \cup C_l$, rather than the
      full graph $C$ (which matters because we are considering the
      suspended-end cost of the antichain).

 \item $\multirobustignoreend_l$: This is the maximum ignored-end cost
   of any multi-edge candidate antichain $A$ in $C_1 \cup \cdots \cup
   C_l$. If no such $A$
   exists, this is $\NULL$.

 \item $\singlesuspendend_l$: This is the maximum suspended-end cost
   of any single-edge antichain $A$ in $C_1 \cup \cdots \cup C_l$ such
   that $A$ contains only edges in $b_1, \ldots, b_{l_2}$. (Again, we
   consider the suspended-end cost just within the graph $C_1 \cup \cdots \cup
   \cdots \cup C_l$.)
 \item $\singleignoreend_l$: This is the maximum ignored-end cost of
   any single-edge antichain $A$ in $C_1 \cup \cdots \cup C_l$.

 \item $\robustunfinished_l$: Let $t$ be the largest $t \le l_2$ such that
   $b_t$ is naturally active. Then $\robustunfinished_l$ is the sum of
   the natural contributions of $b_1,\ldots, b_t$, along with $t(a_0),
   t(a_1) + \cdots + t(a_{t - 1})$. If no such $t$ exists, then
   $\robustunfinished_l$ is $\NULL$.

   One should think of this as the contribution of $b_1, \ldots, b_t$
   and $a_0, \dots, a_{t - 1}$ to any candidate antichain in $C$ that
   contains at least one edge in $b_{l_2 + 1}, \ldots, b_k$ or $a_{l_1
     + 1}, \ldots, a_k$. (We separate this from the contribution of
   the edges $b_{t + 1}, \ldots, b_{l_2}$ and $a_t, \ldots, a_{l_1}$
   which are considered by the next quantity.)
 \item $\robustunfinishedtail_l$: Let $t$ be the largest $t \le l$
   such that $b_t$ is naturally active, or $0$ if no such $t$
   exists. Then $\robustunfinishedtail_l$ is the sum of the natural
   contributions of $b_{t + 1}, \ldots, b_{l_2}$, along with $t(a_t) +
   t(a_{t + 1}) + \cdots + t(a_{l_1})$.

   One should think of this as the contribution of $b_{t + 1}, \ldots,
   b_{l_2}$ and $a_t, \dots, a_{l_1}$ to any candidate antichain in
   $C$ that contains at least one edge in $b_{l_2 + 1}, \ldots, b_k$
   or $a_{l_1 + 1}, \ldots, a_k$. The quantity
   $\robustunfinishedtail_l$ is handled separately from
   $\robustunfinished_l$ because if the candidate antichain contains
   only a single edge in $b_{l_2 + 1}, \ldots, b_k$ or $a_{l_1 + 1},
   \ldots, a_k$, then $\robustunfinishedtail_l$ can affect the local
   contribution of that edge.
 \item $\edgetotal_l$: This is $\sum_{i = 0}^{l_1} t(a_i) + \sum_{i = 1}^{l_2}
   t(b_i)$, the total sum of the edge totals over all edges in the
   components $a_0, \ldots, a_{l_1}$, $b_1, \ldots, b_{l_2}$. 
 \item $\emptytail_l$: This is $\sum_{i = 0}^{l_1} t(a_i) + \sum_{i = 1}^{l_2}
   \max(0, t(b_i))$. One should think of this as the contribution of
   $a_0, \ldots, a_{l_1}$, $b_1, \ldots, b_{l_2}$ to any single-edge
   antichain in $C$ whose edge lies in one of $a_{l_1 + 1}, a_{l_1 +
     2}, \ldots$ or $b_{l_2 + 1}, b_{l_2 + 2}, \ldots$.
 \end{itemize}

 Given the above variables for $l = 2k + 1$, one can compute $$C.\memtotal =
 \edgetotal_{2k + 1},$$
 \begin{multline*}
   C.\maxrobust = \max(\multirobustsuspendend_{2k + 1}, \\
   \multirobustignoreend_{2k + 1}),
 \end{multline*}
 and
 \begin{multline*}
   C.\maxchain =
   \max(\singlesuspendend_{2k + 1}, \\ \singleignoreend_{2k + 1}).
 \end{multline*}

 Prior to beginning, we have $l = 0$, and $\multirobustsuspendend_0 = \NULL$,
 $\multirobustignoreend_0 = \NULL$, $\singlesuspendend_0 = \NULL$,
 $\singleignoreend_0 = \NULL$, $\robustunfinished_0 = \NULL$,
 $\robustunfinishedtail_0 = 0$, $\edgetotal_0 = 0$, and $\emptytail_0
 = 0$.
 
To complete the algorithm, we present the protocol for advancing $l$
by one, and updating each of the intermediate values.

Suppose for some odd $l > 0$ we are given the values of the above
quantities for $l - 1$, and given the recursive values for $a_{(l + 1) /
  2}$. We obtain the new values for $l$ as follows:
\begin{itemize}
\item \textbf{Step 1: Simple Updates. }
  We compute $\multirobustsuspendend_l$ as,
  $$\multirobustsuspendend_{l - 1} + a_{(l + 1)/2}.\memtotal,$$
  and $\singlesuspendend_l$ as,
  $$\singlesuspendend_{l - 1} + a_{(l + 1)/2}.\memtotal.$$

  We compute $\singleignoreend_l$ as
  \begin{multline*}
    \max(\singleignoreend_{l - 1}, \\a_{(l + 1)/2}.\maxchain + \emptytail_{l - 1}),
  \end{multline*}
  where the second entry in the maximum is the largest water mark of any single-edge antichain in $C$ with an edge in $a_{(l + 1) / 2}$.

  We set $\robustunfinished_l = \robustunfinished_{l - 1}$. Finally we
  increase each of $\robustunfinishedtail_l$, $\edgetotal_l$, and
  $\emptytail_l$ by $a_{(l + 1)/2}.\memtotal$ over their values for $l
  - 1$ (where the outcome is $\NULL$ if they were previously $\NULL$).
     
 \item \textbf{Step 2: Computing $\multirobustignoreend_l$.}   We update $\multirobustignoreend_l$ with Algorithm
  \ref{alg:updatemultirobustignoreend}. The only
  antichains $A$ that $\multirobustignoreend_{l}$ needs to consider
  but that $\multirobustignoreend_{l - 1}$ did not are the candidate
  stripped robust antichains $A$ containing at least one edge in
  $a_{(l + 1) / 2}$.

   The first if-statement checks whether any multi-edge candidate
   antichains exist in which $a_{(l + 1) / 2}$ contributes only a
   single edge; this requires that $\robustunfinishedtail_{l - 1} +
   a_{(l + 1) / 2}.\maxchain > \frac{M}{2p}$ in order for the local
   contribution of the edge in $a_{(l + 1) / 2}$ to exceed
   $\frac{M}{2p}$; and that $\robustunfinished \neq \NULL$ that way
   the resulting antichain contains multiple edges.

   The second if-statement considers candidate antichains in which
   $a_{(l + 1) / 2}$ contributes multiple edges. If
   $\robustunfinished_{l - 1} \neq \NULL$, then the maximum water mark
   in $C$ obtainable by such an antichain is $\robustunfinished_{l -
     1} + \robustunfinishedtail_{l - 1}$ $+ a_{(l + 1) /
     2}.\maxrobust$. If $\robustunfinished_{l - 1} = \NULL$, then the
   maximum water mark in $C$ obtainable by such an antichain is simply
   $\robustunfinishedtail_{l - 1} + a_{(l + 1) / 2}.\maxrobust$.
\end{itemize}

Suppose for some even $l > 0$ we are given the values of the intermediate
values for $l - 1$, and given the recursive values for $b_{l / 2}$. We
obtain the new values for $l$ as follows:
\begin{itemize}
\item \textbf{Step 1: Simple Updates: }
  We compute $\singlesuspendend_l$ as
  \begin{equation*}
    \begin{split}
      \max(\singlesuspendend_{l - 1}  + b_{l/2}.\memtotal, \\ b_{l/2}.\maxchain + \emptytail_{l - 1}),
    \end{split}
  \end{equation*}
  and $\singleignoreend_l$ as,
  $$\max(\singleignoreend_{l - 1}, b_{l/2}.\maxchain + \emptytail_{l - 1}).$$

We compute $\edgetotal_l$ as,
  $$\edgetotal_{l - 1} + b_{l/2}.\memtotal.$$
Finally, we compute $\emptytail_l$ as,
  $$\emptytail_l = \emptytail_{l - 1} + \max(0, b_{l/2}.\memtotal).$$

\item \textbf{Step 2: Computing $\multirobustsuspendend_l$ and $\multirobustignoreend_l$. }
  We update $\multirobustsuspendend_l$ and $\multirobustignoreend_l$
  with Algorithm \ref{alg:updatemultirobustsuspendend}. We begin by
  computing $X$, the largest ignored-end cost of any candidate
  stripped robust antichain in $C$ that (1) contains multiple edges;
  (2) contains at least one edge in $b_{l/2}$; and (3) contains no
  edges in $a_{l / 2}, b_{l / 2 + 1}, \ldots, a_k$. The first
  if-statement considers the case where the antichain has one edge in
  $b_{l/2}$; and the second considers the case where there are
  multiple such edges.

  After computing $X$, we update $\multirobustsuspendend_l$ and
  $\multirobustignoreend_l$ based on $X$'s value.
 
\item \textbf{Step 3: Computing $\robustunfinished_l$ and
  $\robustunfinishedtail_l$. }
   We compute $\robustunfinished_l$ and $\robustunfinishedtail_l$ with
  Algorithm \ref{alg:updaterobustunfinished}.
  We define $m$ and $t$ to be $m(b_{l / 2})$ and $t(b_{l /
    2})$, as defined in Lemma \ref{lemcandidateantichains}. We then
  update $\robustunfinished_l$ and $\robustunfinishedtail_l$
  appropriately based on the three cases in the lemma. (In the final
  case, we take the maximum of $0$ and $\robustunfinished_{l - 1}$
  because if the latter is $\NULL$, we wish to treat it as zero.)

\end{itemize}

This completes the recursion described in Section
\ref{sec:strippedrobustcompute}, allowing one to compute
$\HH_\infty^\bullet(G)$ in an online manner (i.e., while executing the
parallel program on a single thread) with constant multiplicative time
and space overhead.

   \begin{algorithm*}
     $\multirobustignoreend_{l} = \multirobustignoreend_{l - 1}$; \\
     \If{$\robustunfinishedtail_{l - 1} + a_{(l + 1) / 2}.\maxchain > \frac{M}{2p}$ and $\robustunfinished_{l - 1} \neq \NULL$}{
       $\multirobustignoreend_{l} = \max(\text{self},  \robustunfinished_{l - 1}$ $ + \robustunfinishedtail_{l - 1} + a_{(l + 1) / 2}.\maxchain)$; \\
     }
     \If{$a_{(l + 1) / 2}.\maxrobust \neq \NULL$}{
       \If{$\robustunfinished_{l - 1} \neq \NULL$}{
         $\multirobustignoreend_{l} = \max(\text{self}, \robustunfinished_{l - 1}$ $+ \robustunfinishedtail_{l - 1} + a_{(l + 1) / 2}.\maxrobust)$; \\
         }
       \If{$\robustunfinished_{l - 1} = \NULL$}{
         $\multirobustignoreend_{l} = \max(\text{self}, \robustunfinishedtail_{l - 1}$ $ + a_{(l + 1) / 2}.\maxrobust)$; \\
        }
     }
     \caption{Updating $\multirobustignoreend$ for $a_{(l + 1)/2}$}
     \label{alg:updatemultirobustignoreend}
   \end{algorithm*}

  \begin{algorithm*}
    $X = \NULL$; \\
    \If{$b_{l/2}.\maxchain + \robustunfinishedtail_{l - 1} > \frac{M}{2p}$ and $\robustunfinished_{l - 1} \neq \NULL$} {
      $X = b_{l/2}.\maxchain + \robustunfinishedtail_{l - 1}$ $ + \robustunfinished_{l - 1}$;\\
    }
    \If{$b_{l/2}.\maxrobust \neq \NULL$}{
      \If{$\robustunfinished_{l - 1} \neq \NULL$}{
        $X = \max(X, \robustunfinished_{l - 1} $ $+ \robustunfinishedtail_{l - 1} + b_{l/2}.\maxrobust)$
      }
      \If{$\robustunfinished_{l - 1} = \NULL$}{
        $X = \max(X, \robustunfinishedtail_{l - 1} + b_{l/2}.\maxrobust)$
      }
    }
    $\multirobustsuspendend_l = \max(X, \multirobustsuspendend_{l - 1})$; \\
    $\multirobustignoreend_l = \max(X, \multirobustignoreend_{l - 1})$; \\
    
     \caption{Updating $\multirobustsuspendend$ and $\multirobustignoreend$ for $b_{l/2}$}
     \label{alg:updatemultirobustsuspendend}
  \end{algorithm*}

  \begin{algorithm*}
    $m = b_{l/2}.\maxrobust$;\\
    \If{$b_{l/2}.\maxchain > \frac{M}{2p}$}{
      $m = \max(m, b_{l/2}.\maxchain)$; \\
    }
    \If{$m = \NULL$}{
      $m = 0$; \\
    }
    $t = b_{l/2}.\memtotal$; \\
    \If{$t > 0$ and $m \le t + \frac{M}{2p}$}{
      $\robustunfinished_l = \robustunfinished_{l - 1}$; \\
      $\robustunfinishedtail_l = \robustunfinishedtail_{l - 1} +  t$; \\
    }
    \If{$t \le 0$ and $m \le \frac{M}{2p}$}{
      $\robustunfinished_l = \robustunfinished_{l - 1}$; \\
      $\robustunfinishedtail_l = \robustunfinishedtail_{l - 1}$; \\
    }
    \If{$m \ge \max(0, t) + \frac{M}{2p}$}{
      $\robustunfinished_l = \max(0, \robustunfinished_{l - 1}) + \robustunfinishedtail_{l - 1} + m$; \\
      $\robustunfinishedtail_l = 0$; \\
    } 
     \caption{Updating $\robustunfinished$ and $\robustunfinishedtail$ for $b_{l/2}$}
     \label{alg:updaterobustunfinished}
   \end{algorithm*}

\fi

\end{document}
